\documentclass[aps,pra,twocolumn,notitlepage,nofootinbib,superscriptaddress,longbibliography]{revtex4-2}

\usepackage[utf8]{inputenc}
\usepackage[T1]{fontenc}
\usepackage{amsmath,amssymb,amsthm}
\usepackage{times}
\usepackage{physics}
\usepackage{bm}
\usepackage{hyperref}
\usepackage{enumerate}
\usepackage{enumitem}
\usepackage{color}
\usepackage{graphicx}
\usepackage{tikz}
\usepackage{orcidlink}
\usetikzlibrary{positioning,arrows.meta}

\newtheorem{axiom}{Axiom}
\newtheorem{definition}{Definition}
\newtheorem{theorem}{Theorem}
\newtheorem{corollary}{Corollary}
\newtheorem{proposition}{Proposition}
\newtheorem{lemma}{Lemma}
\newtheorem{remark}{Remark}

\begin{document}

\title{Information Theory of Action : Reconstructing Quantum Dynamics from Inference over Action Space}

\author{Fabricio Souza Luiz \orcidlink{0000-0002-6375-0939}}
\email{fsluiz@unicamp.br}
\affiliation{Instituto de F\'\i sica Gleb Wataghin, Universidade Estadual de Campinas, Campinas, SP, Brazil}

\author{Marcos César de Oliveira \orcidlink{0000-0003-2251-2632}}
\email{marcos@ifi.unicamp.br}
\affiliation{Instituto de F\'\i sica Gleb Wataghin, Universidade Estadual de Campinas, Campinas, SP, Brazil}

\date{\today}

% ---------------------------------------------------
% ABSTRACT
% ---------------------------------------------------
%\begin{abstract}
%We derive the quantum formalism from information-theoretic principles applied to action space, without assuming any quantum postulates. The central object is the density of action states $g(A;b|a)$, counting the number of ways a system evolves between configurations with total action $A$. Maximum entropy inference on this space yields a probability distribution whose Fisher information establishes a minimal resolvable action scale $\Delta A_{\min} = \hbar$. We prove that when action contributions are indistinguishable at this scale, consistency requirements force their combination via complex amplitudes with phase $e^{iA/\hbar}$, the quantum phase emerges as a logical necessity, not a postulate. A key result is universal infinitesimal indistinguishability: for any pair of trajectories, $|\delta S|/\Delta S_{\min} \propto \sqrt{dt} \to 0$ as $dt \to 0$, making coherent summation mandatory at every time step. From the resulting propagator $K(b|a) = \int g(A;b|a)e^{iA/\hbar}dA$, we derive the Lagrangian via Galilean invariance, the Hilbert space structure via Stone's theorem, and the Schr\"odinger equation from the short-time limit. The parameter $\hbar$ is determined empirically, not theoretically. This first paper establishes the foundations; a companion paper addresses measurement, the Born rule, and entanglement.
%\end{abstract}

\begin{abstract}
We develop an information-theoretic reconstruction of quantum dynamics based on inference over action space. The fundamental object is a density of action states encoding the multiplicity of dynamical alternatives between configurations. Maximum-entropy inference introduces a finite resolution scale in action, implying that sufficiently close action contributions are operationally indistinguishable. We show that this indistinguishability, together with probability normalization and action additivity, selects complex amplitudes and unitary evolution as the minimal continuous representation compatible with action additivity, probability normalization, and inference under finite resolution. Quantum interference and unitarity therefore emerge as consequences of these assumptions rather than independent postulates.
From the resulting propagator, the Lagrangian, Hilbert-space structure, and Schrödinger equation follow as derived consequences. In the infinitesimal-time limit, action differences universally fall below the resolution scale, making coherent summation the minimal consistent description at every step. The numerical value of the action scale is fixed empirically and identified with $\hbar$.
\end{abstract}

\maketitle

% ---------------------------------------------------
\section{Introduction}
\label{sec:intro}

Nearly a century after its formulation, the conceptual foundations of quantum mechanics continue to be the subject of active investigation and discussion~\cite{vonNeumann1932,Dirac1930}. Despite its extraordinary empirical success, the standard Hilbert-space formalism brings together physical postulates and epistemic rules whose logical interrelations are not always made explicit--the linear structure of states, the complex nature of the Hilbert space, the Born rule, and the measurement postulate are typically introduced axiomatically and largely independently~\cite{Weinberg2015}. This independence obscures which elements of the formalism are logically fundamental and which are consequences of deeper consistency requirements. As a result, the formal structure of the theory is mathematically precise and powerful, while its conceptual organization admits multiple interpretations.

At the same time, action plays a central role in all modern formulations of classical physics. Hamilton's principle provides a unified description of mechanics, electromagnetism, general relativity, and field theory. Feynman's path integral identifies the phase $e^{iS/\hbar}$, where $S$ is the path action, as the cornerstone of quantum interference~\cite{Feynman1948,Schulman1981,Kleinert2009}, yet the integral itself rests on a formal measure that lacks rigorous definition and treats trajectories as primitive microscopic objects, even though they play no observable role in quantum theory. The gap between the operational meaning of quantum amplitudes and the formal structure of the path integral highlights the need for a more fundamental perspective on the role of the action in quantum theory.

Information theory has emerged as a powerful framework for deriving physical laws from principles of inference rather than dynamical postulates. Jaynes demonstrated that the maximum entropy (MaxEnt) principle provides a uniquely consistent method for assigning probabilities given incomplete information~\cite{Jaynes1957a,Jaynes1957b}, and the Shore–Johnson theorem established MaxEnt as the only inference rule compatible with rational updating~\cite{ShoreJohnson1980}. Cox showed that probability theory itself can be derived from consistency requirements on plausibility assignments~\cite{Cox1946,Cox1961}. More recently, entropic and information-theoretic reconstructions of physical theories have shown that probabilistic structure can emerge from constraints and symmetries rather than axioms~\cite{Hardy2001,Chiribella2011,MasanesMuller2011,Caticha2011,Goyal2010,Reginatto1998,HallReginatto2002,Frieden1998}. In this sense, probabilistic structure is not freely postulated, but constrained by consistency, symmetry, and compositional requirements.

These developments motivate a deeper question:
\emph{Can quantum mechanics be derived from a theory whose only inputs are (i) action, (ii) the multiplicity of action values, and (iii) inference based on incomplete information?}

In this work we answer this question affirmatively by developing what we call the \emph{Information Theory of Action} (ITA).
Our goal is not to postulate quantum amplitudes, Hilbert spaces, or operators, but to show that these structures arise necessarily once one combines (i) action additivity, (ii) multiplicity of dynamical alternatives, and (iii) rational inference under finite resolution. Figure~\ref{fig:logical_structure} summarizes the Logical structure of the theory.  The present approach differs from previous reconstruction programs in its choice of primitive object. Existing formulations typically
take probabilities, operational outcomes, or state spaces as primary and derive dynamical structure from consistency requirements on their composition. In contrast, Information Theory of Action takes the multiplicity of dynamical alternatives labeled by additive action as fundamental, with probabilistic structure arising only after inference is performed relative to this multiplicity. The reconstruction therefore proceeds from dynamical structure to probabilistic representation, rather than the reverse, providing a complementary perspective to operational and generalized probabilistic approaches.

%----------------------------------------------------------------------------------------------
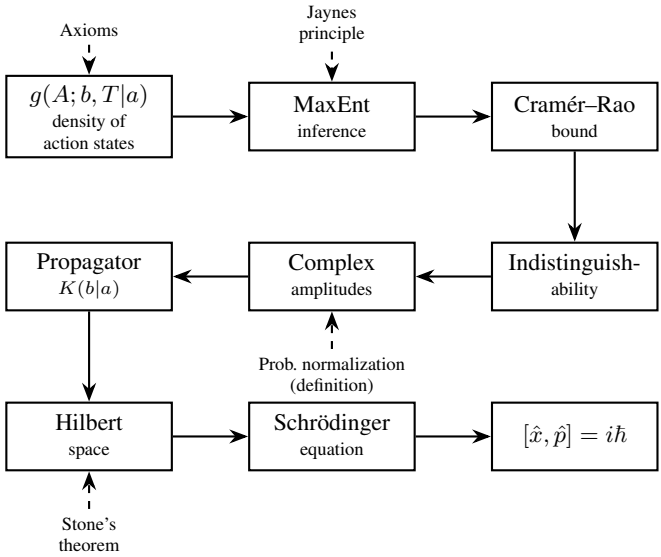
\begin{figure}[t]
\centering
\begin{tikzpicture}[
    node distance=1.0cm,
    box/.style={rectangle, draw=black, thick, minimum width=2.2cm, minimum height=0.9cm, align=center, font=\small},
    arrow/.style={-{Stealth[length=2.5mm]}, thick},
    label/.style={font=\scriptsize, align=center}
]
% Main derivation chain
\node[box] (g) {$g(A;b, T|a)$\\[-1pt]\scriptsize density of\\[-2pt]\scriptsize action states};
\node[box, right=of g] (maxent) {MaxEnt\\[-1pt]\scriptsize inference};
\node[box, right=of maxent] (CR) {Cram\'er--Rao\\[-1pt]\scriptsize bound};
\node[box, below=1.2cm of CR] (indist) {Indistinguish-\\[-1pt]\scriptsize ability};
\node[box, left=of indist] (complex) {Complex\\[-1pt]\scriptsize amplitudes};
\node[box, left=of complex] (K) {Propagator\\[-1pt]\scriptsize $K(b|a)$};
\node[box, below=1.2cm of K] (hilbert) {Hilbert\\[-1pt]\scriptsize space};
\node[box, right=of hilbert] (schrod) {Schr\"odinger\\[-1pt]\scriptsize equation};
\node[box, right=of schrod] (CR2) {$[\hat{x},\hat{p}]=i\hbar$};

% Arrows
\draw[arrow] (g) -- (maxent);
\draw[arrow] (maxent) -- (CR);
\draw[arrow] (CR) -- (indist);
\draw[arrow] (indist) -- (complex);
\draw[arrow] (complex) -- (K);
\draw[arrow] (K) -- (hilbert);
\draw[arrow] (hilbert) -- (schrod);
\draw[arrow] (schrod) -- (CR2);

% Side inputs
\node[above=0.4cm of g, font=\scriptsize, align=center] (axiom) {Axioms};
\draw[arrow, dashed] (axiom) -- (g);

\node[above=0.4cm of maxent, font=\scriptsize, align=center] (jaynes) {Jaynes\\principle};
\draw[arrow, dashed] (jaynes) -- (maxent);

\node[below=0.5cm of complex, font=\scriptsize, align=center] (unit) {Prob.\ normalization\\(definition)};
\draw[arrow, dashed] (unit) -- (complex);

\node[below=0.5cm of hilbert, font=\scriptsize, align=center] (stone) {Stone's\\theorem};
\draw[arrow, dashed] (stone) -- (hilbert);

\end{tikzpicture}
\caption{Logical structure of Information Theory of Action. The derivation flows from the axiomatically defined density of action states $g(A;b, T|a)$ through maximum entropy inference, the Cram\'er--Rao bound establishing an action resolution scale, and indistinguishability combined with probability normalization forcing complex amplitudes and unitarity. The propagator $K(b|a)$ then yields Hilbert space structure via Stone's theorem, culminating in the Schr\"odinger equation and canonical commutation relations. Dashed arrows indicate external inputs: the axioms, Jaynes' inference principle, probability normalization (a definitional requirement, not a physical assumption), and Stone's theorem from functional analysis. Interference and unitarity emerge as consequences, not inputs.}
\label{fig:logical_structure}
\end{figure}

If dynamical alternatives are labeled by additive action $A$, and action values cannot be resolved below a finite scale, then consistent probabilistic evolution selects complex amplitudes of the form $e^{iA/\hbar}$ as the minimal continuous representation compatible with the stated assumptions. Quantum mechanics therefore appears as the minimal continuous representation under assumptions compatible with action additivity, inference under finite resolution, and probability normalization. The basic physical object is not a trajectory, a wavefunction, or an operator, but a density of action states $g(A;b, T|a)$, which counts how many ways the system may evolve from an initial configuration $a$ to a final configuration $b$ with total action $A$. This density plays the same role for action that the density of states $\Omega(E)$ plays for energy in statistical mechanics.

Given $g(A;b,T|a)$, the state of knowledge about the dynamical process is described by a probability density over the joint space of actions and endpoints, $P(A,b|a)$, obtained by maximizing the relative entropy subject to constraints. The resulting exponential family $P\propto g(A;b,T|a)e^{-\eta A}$ carries a natural information metric. From the Fisher information we derive a Cramér–Rao bound~\cite{Cramer1946,Rao1945,amari2000methods} establishing a minimal resolvable action difference $\Delta A_{\min}=1/\eta$. This operational indistinguishability defines when two contributions to a dynamical process cannot be told apart.

Indistinguishability forces coherent combination of action contributions. Following the approach of Goyal, Knuth, and Skilling~\cite{Goyal2010}, we show that consistency requirements on how indistinguishable alternatives combine, analogous to Cox's axioms for probability, uniquely determine that they must be represented by complex amplitudes. Because action is additive, the amplitude must satisfy a multiplicative functional equation. We prove that the only continuous solution compatible with physical requirements is the complex phase $e^{i\eta A}$. Thus, the phase of quantum mechanics emerges not as a postulate, but as the unique consistent representation of contributions that cannot be distinguished at the action scale $1/\eta$.

% This leads directly to a derived propagator,
% \[
% K(b|a)=\int dA\,g(A;b|a)\,e^{i\eta A}.
% \]
%No Hilbert space, no paths, no operators, and no wavefunction have been assumed--the propagator is a consequence of action multiplicity and indistinguishability.

Experimental interference phenomena identify the empirical
action scale with $\hbar$. By comparing the action resolution $1/\eta$ implied by the MaxEnt distribution with the observed scale of quantum interference, we identify $\eta=1/\hbar$. With this identification, the propagator reproduces the correct quantum phases. If $g(A;b,T|a)$ admits a trajectory representation, substituting it into the expression for propagator, $K$, recovers the Feynman path integral. We derive the short-time behavior of $g(A;b,T|a)$ from locality requirements and the composition law, showing that it must be sharply peaked around the classical action with a specific Gaussian profile.

Once the propagator is known, we show that it defines a linear, norm-preserving semigroup on $L^2(\mathcal{Q})$. We provide a complete proof of unitarity based on time-reversal symmetry. By Stone's theorem, this implies the existence of a self-adjoint Hamiltonian $\hat H$ and unitary evolution $U(t)=e^{-i\hat Ht/\hbar}$. %Thus the entire Hilbert-space structure of quantum mechanics emerges from the action-based formalism.
The short-time expansion of the propagator yields the Schrödinger equation $i\hbar\partial_t\psi = \hat{H}\psi$ and the canonical commutator $[\hat{x},\hat{p}] = i\hbar$. The entire Hilbert space structure emerges from Stone's theorem~\cite{Stone1932,Stone1932a} applied to the unitary semigroup generated by the propagator. This completes the foundational reconstruction.

The present paper is restricted to the foundations of quantum dynamics. Measurement, entanglement, and the operational interpretation of the Born rule are intentionally deferred to a following work. No claim is made here to resolve the measurement problem or to derive collapse dynamics. Rather, we focus on establishing why coherent unitary evolution, complex amplitudes, and the Schr\"odinger equation are unavoidable consequences of action indistinguishability and probability normalization.

The present work should be understood as a reconstruction
of quantum dynamics rather than a claim of uniqueness among
all conceivable physical theories. The goal is to show that,
once action additivity, finite inferential resolution, and
probability normalization are imposed, the standard quantum
formalism arises as a minimal consistent dynamical
representation.  Alternative formulations based on different
assumptions are not excluded. The present one offers a path, not  to reinterpret quantum mechanics,
but to clarify why its mathematical structure is naturally
selected by consistency requirements applied to inference
over additive action.

The paper is organized as follows. Section~\ref{sec:axioms} introduces the physical axioms: configuration space, action space, and the density of action states $g(A;b,T|a)$. Section~\ref{sec:maxent} develops MaxEnt inference on the joint space $(A,b)$. Section~\ref{sec:CR} establishes the Cramér--Rao bound and action resolution scale. Section~\ref{sec:amplitudes} derives complex amplitudes from indistinguishability via a Cox-type argument. Section~\ref{sec:propagator} constructs the propagator, and Section~\ref{sec:hbar} identifies $\eta = 1/\hbar$ through scale unification. Section~\ref{sec:shorttime_g} derives the short-time structure of $g$ from locality and composition. Section~\ref{sec:lagrangian_emergence} shows the emergence of the Lagrangian and variational principle. Section~\ref{sec:hilbert} establishes the Hilbert space structure, while Section~\ref{sec:dynamics} derives the Schrödinger equation and canonical commutation relations. Finally, Section~\ref{sec:conclusion} encloses the paper.

Each step follows from the axioms and consistency requirements, with minimal additional assumptions clearly identified.

% ---------------------------------------------------
\section{Physical Axioms}
\label{sec:axioms}

The Information Theory of Action is based on a minimal set of physical assumptions. These axioms make no reference to wave functions, operators, Hilbert spaces, trajectories, amplitudes, or measurement postulates. They specify only the kinematical structure of physical processes in terms of configurations and action, cleanly separating physical content from inferential rules.

\begin{axiom}[Configuration space]
\label{ax:config}
A physical system is associated with a configuration space $\mathcal{Q}$, which labels the possible configurations of the system.
The space $\mathcal{Q}$ may be continuous, discrete, or hybrid, depending on the system under consideration.
\end{axiom}

The configuration space plays a purely kinematical role. At this stage, no probabilistic or quantum structure is assumed.

\begin{axiom}[Action space]
\label{ax:action}
A physical system is associated with an action space $\mathcal{A} \subseteq \mathbb{R}$, representing the possible values of total action accumulated during the evolution between configurations.
\end{axiom}

Action is taken as the fundamental dynamical quantity. It is a scalar, additive under sequential composition, independently of any underlying trajectory description, and central to classical mechanics, field theory, and variational principles. This motivates its role as the primitive dynamical variable of the theory.

\begin{axiom}[Density of action states]
\label{ax:density}
For any pair of configurations $a,b \in \mathcal{Q}$ and any time interval $T$, there exists a unique nonnegative function $g(A;b,T|a)$ such that:
\begin{enumerate}[label=(\alph*)]
\item \textbf{Positivity and integrability:}
\begin{equation}
g(A;b,T|a) \ge 0, \qquad \int_{\mathcal{A}} g(A;b,T|a)\, dA < \infty .
\end{equation}

\item \textbf{Composition from action additivity:}
for any intermediate configuration $b$ and times $T_1,T_2$,
\begin{eqnarray}
g(A;c,T_1+T_2|a)
&=&
\int_{\mathcal{Q}} db
\int_{\mathcal{A}} dA'\,
g(A';b,T_1|a)\,\nonumber\\ \times&&
g(A-A';c,T_2|b).
\label{eq:composition_g}
\end{eqnarray}

\item \textbf{Time-reversal symmetry:}
for systems without external time-dependent driving,
\begin{equation}
g(A;b,T|a) = g(A;a,T|b).
\label{eq:time_reversal}
\end{equation}

\item \textbf{Locality:}
for sufficiently small $T$, the density $g(A;b,T|a)$ is negligible unless $|b-a| \lesssim v_{\max} T$, where $v_{\max}$ is a characteristic velocity scale.

\item \textbf{Finite variance:}
\begin{equation}
\int_{\mathcal{A}} A^2\, g(A;b,T|a)\, dA < \infty .
\end{equation}
\end{enumerate}
\end{axiom}

The density $g(A;b,T|a)$ encodes the multiplicity of dynamical possibilities connecting $a$ to $b$ with total action $A$ over time $T$. Its defining property is the convolution structure \eqref{eq:composition_g}, which directly reflects the fundamental additivity of action under sequential processes.
It is essential to emphasize that $g(A;b,T|a)$ is \emph{not} a probability density and carries no epistemic meaning. It is a structural object encoding multiplicity, analogous to a density of states in statistical mechanics. Probabilities arise only after inference is performed relative to $g$.

\begin{remark}[Implicit definition and non-circularity]
The density $g(A;b,T|a)$ is defined \emph{implicitly} by its structural properties, rather than through an explicit construction in terms of trajectories or path integrals. This mirrors the role of fundamental objects in other theories, such as the density of states in statistical mechanics, which is characterized by axiomatic properties rather than microscopic constructions. In particular, no classical action, Lagrangian, or path space is assumed. We show later that the Lagrangian, the Euler--Lagrange equations, and the classical limit $g \to \delta(A-S_{\mathrm{cl}})$ emerge as consequences of these axioms, eliminating any circularity.
\end{remark}

When the time interval is clear from context, we write
$g(A;b|a) \equiv g(A;b,T|a)$.

% ---------------------------------------------------
\subsection{Existence, Uniqueness, and Absence of Circularity}
\label{sec:existence_uniqueness}

A natural concern is whether the density of action states $g(A;b|a)$, defined axiomatically in Axiom~\ref{ax:density}, actually exists, whether it is unique, and whether its definition implicitly assumes a path-integral structure. The function $g(A;b|a)$ is defined solely through abstract physical properties, positivity, action additivity, locality, and finite variance, without reference to trajectories, amplitudes, or Lagrangians.

\paragraph{Existence.}
For any system admitting a well-defined short-time action distribution consistent with locality and finite variance, a finite-time density $g(A;b,T|a)$ exists and is constructed uniquely by repeated composition. This follows from standard results on convolution semigroups and is outlined in Appendix~\ref{app:existence_uniqueness}.

\paragraph{Uniqueness.}
Once the short-time density is fixed, the composition law determines $g(A;b,T|a)$ uniquely for all times. Different physical systems correspond to different short-time means $\bar A(b,\Delta t|a)$, while the functional form of $g$ is fixed by the axioms up to action-space gauge equivalence.

\paragraph{No circularity.}
Crucially, no reference to paths or to the Feynman integral is made at this stage. The path-integral representation emerges later as a \emph{theorem}, when a trajectory space exists, rather than being assumed. The logical structure is, therefore:
\begin{equation}
\text{Axioms for } g
\;\longrightarrow\;
\text{Propagator } K
\;\longrightarrow\;
\underbrace{\text{Path integral}}_{\text{if applicable}} .
\end{equation}

All technical details are deferred to the Appendix~\ref{app:existence_uniqueness}.

\paragraph{Trajectory representations.}
The density of action states $g(A;b|a)$ is defined axiomatically, without reference to trajectories or path integrals. Nevertheless, when a system admits a trajectory space $\Gamma_{a\to b}$ and an action functional $S[\gamma]$, one can show that $g$ admits a representation as the Jacobian of the transformation from paths to action values. In this case, the propagator $K(b|a)$ reduces to the Feynman path integral as a derived result. Importantly, this representation is \emph{not assumed} but follows from the axioms together with the existence of paths. The complete theorem and proof are given in Appendix~\ref{app:jacobian}.

% ---------------------------------------------------
\subsection{Why the Density of Action States Is Defined Implicitly}
\label{sec:why_implicit}

A theory intended to reconstruct quantum mechanics should avoid introducing classical structures at the foundational level. For this reason, the density of action states $g(A;b|a)$ is defined implicitly through its physical and mathematical properties (Axiom~\ref{ax:density}), rather than by an explicit construction.

In particular, quantum systems do not possess fundamental trajectories. While a representation of $g$ in terms of paths exists when a suitable trajectory space can be defined, this representation is a derived result, not a primitive assumption. Treating paths as fundamental would reintroduce classical concepts at the axiomatic level, contrary to the aim of identifying the minimal structure underlying quantum mechanics.

A further motivation is mathematical. The functional measure appearing in the Feynman path integral is not rigorously defined in general~\cite{Simon2005,GlimmJaffe1987,ReedSimon1975}, and therefore cannot serve as a reliable primitive object. By contrast, the implicit definition of $g$ via positivity, composition, locality, and finite variance provides a mathematically well-posed starting point that does not rely on ill-defined constructions.

The implicit approach also ensures maximal generality. It applies equally to systems with continuous or discrete configuration spaces, to systems with or without a classical trajectory interpretation, and to cases where the notion of a path is ambiguous or absent altogether. When trajectories do exist, the path-integral representation emerges as a theorem, establishing that the axiomatic $g$ coincides with the Jacobian of the transformation from paths to action values.

Finally, the axioms uniquely determine $g$ for systems with well-defined short-time structure. Existence, uniqueness, and the absence of circularity are established rigorously in Appendix~A. The logical order of the theory is therefore unambiguous: the density of action states is fundamental, while trajectories and the Feynman path integral appear only as derived constructs when additional structure is available.

% ---------------------------------------------------
\subsection{Physical content and observability of $g$}
\label{sec:g_physical}

The density of action states $g(A; b|a)$ is not an observable quantity and should not be interpreted as a probability density. Its role is structural rather than epistemic: it encodes the multiplicity of dynamically distinct ways by which a system may evolve from configuration $a$ to $b$ while accumulating total action $A$.
The density $g(A;b|a)$ is therefore a structural object
encoding multiplicity, analogous to a density of states in
statistical mechanics, and serves as the reference measure
for inference rather than a probability distribution.
It is not normalized, need not be monotonic, and does not encode epistemic uncertainty. Instead, it provides the natural reference measure with respect to which inference is performed.

The physical content of $g$ becomes apparent once its role in the propagator is recognized. All observable transition amplitudes are determined by the propagator, $K(b|a)$, %the Fourier-type relation
% \begin{equation}
% K(b|a) = \int dA\, g(A;b|a)\, e^{i\eta A},\label{eq:K_equivalence}
%\end{equation}
so that $g$ governs the relative weight of different action contributions to physical processes. In this sense, $g$ is no less physical than the Lagrangian or the Hamiltonian: it is not directly observable by itself, but it fully determines observable transition probabilities through the propagator.

For a given physical system, the structure of $g(A;b|a)$ is not arbitrary. When a Lagrangian description exists, the axioms uniquely determine $g$, and, when a trajectory space can be defined, $g$ admits a concrete representation as the Jacobian of the map from paths to action values. More generally, symmetries of the system strongly constrain its form. For example, spatial translation invariance implies that $g$ depends only on the displacement $b-a$, while time-translation invariance fixes its dependence on the time interval.

In principle, $g$ could be reconstructed experimentally from precise knowledge of the propagator via an inverse Fourier transform, although this is practically challenging. More importantly, its short-time structure is highly constrained by the composition law and locality, which determine $g$ up to normalization in the infinitesimal time limit. Thus, $g$ should be viewed not as an auxiliary mathematical device, but as a compact encoding of the system's dynamical possibilities, from which both quantum and classical behavior ultimately emerge.

\paragraph*{Minimality of the axioms.}
It is important to emphasize that the axioms introduced above are deliberately minimal. They do \emph{not} assume amplitudes, complex numbers, phases, linear superposition, Hilbert spaces, unitarity, the Born rule, collapse postulates, trajectories, path integrals, or specific dynamical equations such as the Schrödinger equation. All of these structures will emerge later as \emph{derived consequences} of three ingredients: the multiplicity of action values encoded in $g(A;b|a)$, information-theoretic inference via the maximum entropy principle, and the operational indistinguishability implied by the Cramér--Rao bound.

% ---------------------------------------------------
% \subsection{Equivalence classes and operational content}
% \label{sec:equivalence_classes}

% A natural question concerns the uniqueness and physical status of the density of action states $g(A; b|a)$. Different microscopic realizations of $g$ may, in principle, lead to the same propagator (\ref{eq:K_equivalence}). Such realizations are physically equivalent, as all observable predictions depend solely on $K(b|a)$.

% Accordingly, the theory admits equivalence classes of action densities related by transformations that leave the propagator invariant:
% \begin{equation}
% g_1 \sim g_2 \quad \Leftrightarrow \quad \int dA\, g_1(A)\, e^{i\eta A} = \int dA\, g_2(A)\, e^{i\eta A}.
% \label{eq:equivalence_relation}
% \end{equation}
% This redundancy is analogous to gauge freedom in field theory: it reflects the fact that $g$ itself is not observable, while the propagator is. The physically meaningful content of the theory resides in the equivalence class rather than in a specific representative.

% ---------------------------------------------------

\section{Maximum Entropy on the Joint Space $(A,b)$}
\label{sec:maxent}

With the physical framework established, we now turn to the epistemic question: given incomplete information about which dynamical alternative occurred, how should probabilities be assigned? For a fixed initial configuration $a$, the elementary alternatives of the dynamics are pairs $(A,b)$, where $A\in\mathcal{A}$ is the total action accumulated and $b\in\mathcal{Q}$ is the final configuration. Our state of knowledge about the process is therefore described by a joint probability density $P(A,b|a)$ on $\mathcal{A}\times\mathcal{Q}$.

The density of action states $g(A;b|a)$ encodes the physical multiplicity of dynamical possibilities and plays the role of a reference measure on this space. Given $g$, rational inference under incomplete information requires assigning $P(A,b|a)$ by maximizing the relative entropy subject to the available constraints, following the Jaynes–Shore–Johnson principle, see Appendix~\ref{app:maxent}.

Assuming only normalization and a fixed mean action,
\[
\langle A\rangle = \int db\, dA\, A\, P(A,b|a),
\]
the maximum-entropy distribution takes the universal exponential form
\begin{equation}
P(A,b|a)
=
\frac{g(A;b|a)\, e^{-\eta A}}{Z(\eta;a)},
\label{eq:MaxEnt_joint}
\end{equation}
where $\eta$ is a Lagrange multiplier conjugate to the action and
\[
Z(\eta;a)=\int db\, dA\, g(A;b|a)\, e^{-\eta A}.
\]
At this stage $\eta$ has purely inferential meaning as a
Lagrange multiplier enforcing the mean-action constraint.
Its physical interpretation will arise only after the
propagator structure is established.
Here, $\eta$ quantifies how sharply different action values can be resolved and will be shown in the next section to define a fundamental action-resolution scale through the Cramér–Rao bound. This action-resolution scale will be shown in the next section to imply a universal lower bound on distinguishability, providing the operational origin of quantum coherence.

% ---------------------------------------------------
\section{Cramér--Rao Bound and Indistinguishability of Action}
\label{sec:CR}

The maximum-entropy distribution on the joint space $(A,b)$ introduced a real parameter $\eta$ conjugate to the action, in Eq. (\ref{eq:MaxEnt_joint}).  At this stage, $\eta$ is purely an information-theoretic quantity, it enforces the constraint on the mean action.

A fundamental question then arises:
\emph{what is the operational meaning of $\eta$, and how finely can action be resolved?}

To answer this question in this section we rely on a minimal and explicit set of assumptions: (i) locality of the dynamics, (ii) additivity of action under sequential composition, (iii) finite second moment of the short-time action distribution, and (iv) absence of singular or distributional Lagrangians. These conditions are satisfied by ordinary nonrelativistic mechanical systems and define the regime of applicability of the present reconstruction.

\medskip

The marginal distribution over action, $P(A|a;\eta)=\int db\,P(A,b|a;\eta)$, belongs to an exponential family. For such families, the Fisher information associated with $\eta$ is equal to the variance of the conjugate variable $A$. As a consequence, the Cramér--Rao bound~\cite{Cramer1946,amari2000methods} implies the inequality
\begin{equation}
\Delta A\,\Delta\eta \ge 1 ,
\label{eq:CR_main}
\end{equation}
which sets a fundamental limit on the simultaneous resolution of the action and its conjugate parameter. This bound is purely informational and does not rely on quantum postulates.

\medskip

Independently, the propagator connecting configurations $a$ and $b$ to take the form of $K(b|a) = \int_{\mathcal{A}} dA\, g(A;b|a)\, e^{i\eta A}$, see section (\ref{sec:propagator}), which is a Fourier transform with respect to the action variable. This establishes a mathematical conjugacy between $A$ and $\eta$, implying an intrinsic Fourier-duality-induced uncertainty between their characteristic scales. The Cramér--Rao bound \eqref{eq:CR_main} is therefore not accidental: it reflects the deep transform structure of the theory.

\medskip

For $\eta$ to play a physically meaningful role, it must modulate the distribution $P(A)$ over the natural support of $g(A)$. If $\eta\,\Delta A\ll1$, the exponential factor is irrelevant; if $\eta\,\Delta A\gg1$, it suppresses all structure in $g$. Physical relevance therefore requires $\eta\,\Delta A=O(1)$. Combining this naturalness condition with \eqref{eq:CR_main} yields
\begin{equation}
\Delta A_{\min}\sim \frac{1}{\eta}.
\label{eq:action_resolution}
\end{equation}
This result has a direct operational interpretation: action differences smaller than $1/\eta$ are fundamentally indistinguishable by any inference consistent with the maximum-entropy principle. Distinguishability is therefore not absolute but limited by information.

\medskip

When a trajectory representation exists, two paths with actions $A_1$ and $A_2$ are operationally distinguishable only if $|A_1-A_2|\gg1/\eta$. If $|A_1-A_2|\lesssim1/\eta$, no measurement can determine which alternative occurred. Such alternatives must therefore be combined coherently.

\medskip

Crucially, for infinitesimal time segments, action differences scale linearly with $dt$, while the minimal resolvable scale remains finite. As a result, \emph{all infinitesimal segments are universally indistinguishable}, see Appendix~\ref{sec:CR_infinitesimal}. This leads naturally to coherent (amplitude-level) composition at every step, making a coherent amplitude-level description the minimal
consistent representation of infinitesimal composition.

\medskip

%Under the assumptions stated above, the Cram\'er--Rao bound implies that the minimal resolvable action scale remains finite as $dt \to 0$,
%while action differences between infinitesimal segments scale linearly with $dt$. As a consequence, all infinitesimal contributions are operationally indistinguishable. This result is kinematical rather than dynamical: it follows from scaling and inference, independently of the specific form of the Lagrangian.
In summary, inference theory imposes a finite, time-independent resolution scale in action. Because infinitesimal action increments scale linearly with $dt$, while the resolution scale does not, all infinitesimal dynamical alternatives become operationally indistinguishable. This result is purely kinematical and independent of the specific form of the dynamics. Fourier-duality-induced uncertainty

In Sec.~\ref{sec:hbar} we will show that experimental calibration identifies $\eta=1/\hbar$. The fundamental quantum of action thus emerges as an information-theoretic resolution scale, not as an independent postulate.

% ---------------------------------------------------
\section{From Indistinguishability to Complex Amplitudes}
\label{sec:amplitudes}

The Cramér--Rao bound derived in Sec.~\ref{sec:CR} establishes a finite resolution scale in action space; two processes with action values satisfying $|A_1 - A_2| \lesssim 1/\eta$ are operationally indistinguishable and cannot be assigned independent classical probabilities. Furthermore, Theorem~\ref{th:infinitesimal_indist} in Appendix~\ref{app:indistinguishability} shows that this indistinguishability is \emph{universal} at the infinitesimal level, for any two trajectories with finite Lagrangians, the ratio $|\delta S_1 - \delta S_2|/\Delta(\delta S)_{\min}$ vanishes as $dt \to 0$, where $S_{i}$ is the trajectory action. This means that any consistent dynamical theory must specify how indistinguishable alternatives combine when composing infinitesimal time steps.

A purely classical rule, $P_{\text{tot}} = P_1 + P_2$, is \emph{logically excluded} once we require that probabilities remain normalized under dynamical evolution. To see this, consider that sequential composition of weights must respect action additivity: $\chi(A_1 + A_2) = \chi(A_1) \otimes \chi(A_2)$. For a continuous character $\chi: \mathbb{R} \to \Phi$, this is the Cauchy functional equation. If $\Phi = \mathbb{R}_{\geq 0}$ with ordinary multiplication, the only solutions are $\chi(A) = e^{\lambda A}$ for real $\lambda$. However, \emph{probability normalization}---the requirement that $\sum_i P_i = 1$ be preserved under evolution---combined with the coherent combination rule forces $|\chi(A)|$ to remain bounded for all $A$. For $\chi(A) = e^{\lambda A}$: if $\lambda > 0$, probabilities grow without bound (violating normalization); if $\lambda < 0$, probabilities decay irreversibly to zero (information loss); if $\lambda = 0$, dynamics is trivial. None yields a consistent probabilistic theory. The resolution is $\Phi = \mathbb{C}$ with $\chi(A) = e^{i\eta A}$: bounded ($|\chi| = 1$), invertible ($\chi(-A) = \chi(A)^{-1}$), and non-trivial. Thus, unitarity $|\chi(A)| = 1$ is not an independent axiom but a consequence of probability normalization in a theory with coherent combination of indistinguishable alternatives. Complex amplitudes and interference emerge as consequences
of indistinguishability combined with the requirement that
probabilities remain normalized within the present framework.

To formalize this, we introduce a \emph{weight} $\phi(A) \in \Phi$ associated with each alternative labeled by its action value, and impose minimal consistency requirements: (i) indistinguishable alternatives combine via an associative, commutative operation $\oplus$; (ii) sequential composition respects action additivity, so that $\phi(A_1 + A_2) = \phi(A_1) \otimes \phi(A_2)$; (iii) observable probabilities are extracted by a nonnegative map $f: \Phi \to [0,\infty)$; (iv) for distinguishable alternatives, probabilities add classically: $f(\phi_1 \oplus \phi_2) = f(\phi_1) + f(\phi_2)$; and (v) probability normalization: total probability $\sum_i P_i = 1$ must be preserved under dynamical evolution, with non-trivial dynamics ($\chi$ not identically constant). Under these conditions, the weight space is uniquely determined to be complex, unitarity $|\chi(A)| = 1$ follows as a theorem, and the minimal continuous realization follows.

\begin{theorem}[Complex amplitude representation]
\label{th:complex_amplitudes}
Assume (a) action additivity under sequential composition, (b) a well-defined combination rule for indistinguishable alternatives satisfying associativity, commutativity, and continuity, and (c) probability normalization: total probability $\sum_i P_i = 1$ is preserved under dynamical evolution, with $\chi$ not identically constant (non-trivial dynamics). Then:
\begin{enumerate}[label = (\alph*)]
    \item The minimal continuous representation of the weight space is $\Phi \simeq \mathbb{C}$;
    \item Indistinguishable alternatives combine via complex addition: $\oplus \leftrightarrow +$;
    \item Sequential composition corresponds to complex multiplication: $\otimes \leftrightarrow \cdot$;
    \item Probabilities are given by the squared modulus (Born rule):
    \begin{equation}
    P = |\phi_{\text{tot}}|^2;
    \label{eq:born_rule}
    \end{equation}
    \item Unitarity follows as a theorem: $|\chi(A)| = 1$ for all $A$.
\end{enumerate}
This result establishes the minimal continuous realization
compatible with action additivity, coherent composition,
and probability normalization within the present framework. 
Moreover, compatibility with action additivity implies that the weight function $\chi: A \mapsto \phi(A)$ is a continuous character of $(\mathbb{R}, +)$, uniquely determining its form,
\begin{equation}
\chi(A) = e^{i\eta A},
\label{eq:chi_form}
\end{equation}
for some real constant $\eta$.

\end{theorem}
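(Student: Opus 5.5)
The plan is to follow the Goyal--Knuth--Skilling strategy and split the argument into a structural part, giving (a)--(d), and a functional-equation part, giving (e) and Eq.~\eqref{eq:chi_form}.

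\emph{Structural part.} Requirement (i) says $\oplus$ is associative, commutative and continuous. By the Aczél-type representation theorem used in Cox's derivation, $\oplus$ on a connected one-parameter family of weights can be regraded to ordinary addition. Indistinguishable alternatives must also carry a second real degree of freedom, namely the relative action label. Without it, the combination would already be fixed by $f$ and would coincide with the classical rule (iv), which Sec.~\ref{sec:amplitudes} excludes. So I would model $\Phi$ as a real two-dimensional continuous space with $\oplus=+$, i.e.\ $\Phi\simeq\mathbb{R}^2$ as an additive group.

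Sequential composition $\otimes$ must distribute over $\oplus$, since composing a sum of indistinguishable alternatives with a further step is the sum of the composed alternatives. It must also be associative, continuous, and have a unit, the trivial step with $A=0$. These properties make $(\mathbb{R}^2,+,\otimes)$ a two-dimensional real associative unital algebra. Such an algebra is isomorphic to one of $\mathbb{C}$, the split-complex numbers, or the dual numbers. I then require that every nonzero weight of a nontrivial step be invertible, so that $\chi(-A)=\chi(A)^{-1}$, which follows from additivity with $\chi(0)=1$. This excludes zero divisors and hence the dual and split-complex cases. That gives (a)--(c) with $\Phi\simeq\mathbb{C}$.

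For (d), I take $f$ continuous and invariant under the symmetry group of composition by unit-norm factors. Then $f$ depends only on $|\phi|$, say $f(\phi)=h(|\phi|)$. Rule (iv) is imposed for distinguishable alternatives, which I model as orthogonal, phase-averaged contributions. It then reads $h(\sqrt{r_1^2+r_2^2})=h(r_1)+h(r_2)$. The continuous solution of this Cauchy equation in $r^2$ is $h(r)=c\,r^2$, and normalization fixes $c=1$.

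\emph{Functional-equation part.} Requirement (ii) makes $\chi:(\mathbb{R},+)\to(\mathbb{C}\setminus\{0\},\cdot)$ a continuous homomorphism. Writing $\chi(A)=e^{\lambda(A)}$ with a continuous logarithm, which is possible because $\mathbb{R}$ is simply connected, $\lambda$ is additive and continuous. Hence $\chi(A)=e^{(\alpha+i\eta)A}$ with $\alpha,\eta\in\mathbb{R}$. Requirement (v) forces $|\chi|$ to be bounded on all of $\mathbb{R}$, since otherwise $\sum_i P_i$ would not be preserved under evolution over arbitrarily large action. Then $e^{\alpha A}$ is bounded for both signs of $A$, so $\alpha=0$ and $|\chi(A)|=1$, which is (e). Nontriviality gives $\eta\neq0$, and Eq.~\eqref{eq:chi_form} follows.

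\emph{Main obstacle.} The real difficulty is the structural step. It must justify that $\Phi$ is exactly two-dimensional, that $\otimes$ distributes over $\oplus$, and that the distinguishable-alternative condition in (iv) can be modeled as phase averaging. These are exactly where the word ``minimal'' in the theorem carries weight. I would state them explicitly as the content of minimality, rather than try to derive them. I would justify two dimensions as the smallest real dimension admitting a nontrivial bounded character. The reason is that a one-dimensional real representation only allows $e^{\lambda A}$, already shown inconsistent with (v), so two is the least dimension in which a compact circle of unit weights can appear.
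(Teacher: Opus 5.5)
Your route differs from the paper's. The appendix first makes $f$ a norm-squared on $\mathbb{R}^2$ (Pythagorean additivity plus Jordan--von Neumann). It then demands a multiplicative norm and uses the Hurwitz-type list $\mathbb{R},\mathbb{C},\mathbb{H}$, discarding $\mathbb{R}$ by boundedness and $\mathbb{H}$ by commutativity. You instead classify the two-dimensional unital algebra $(\mathbb{R}^2,+,\otimes)$ first and obtain $f$ afterwards. That ordering is legitimate. Your treatment of (d), via phase invariance plus the Cauchy equation in $r^2$, is more direct than the paper's parallelogram-law argument. It does presuppose (e), though, so run the character argument first; that argument needs only the algebra and (v). Your continuous-logarithm derivation of $\chi(A)=e^{(\alpha+i\eta)A}$, with $\alpha=0$ forced by boundedness for both signs of $A$, is tidier than the paper's. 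The genuine problem is the step that eliminates the split-complex and dual numbers.

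Invertibility of character values cannot do that job. The identity $\chi(A)\otimes\chi(-A)=\chi(0)=1$ holds in every unital algebra. In the split-complex numbers, $\chi(A)=\cosh\eta A+j\sinh\eta A$ is a continuous non-constant character all of whose values are invertible; in the dual numbers, $\chi(A)=1+\epsilon\eta A$ is one too. Restricting to the weights that actually arise, i.e.\ nonnegative superpositions $\int dA\,g(A)\chi(A)$, does not help. There the idempotent coordinates (split-complex) or the real part (dual) stay strictly positive, so no zero divisor is ever produced. As written, ``no zero divisors'' is an extra division-algebra axiom not contained in (a)--(c).

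What actually excludes both algebras is the boundedness you only invoke later. Every continuous character of the split-complex numbers is $(e^{\lambda_1A},e^{\lambda_2A})$ in idempotent coordinates. Every continuous character of the dual numbers is $e^{\alpha A}(1+\epsilon\beta A)$. Neither is bounded on $\mathbb{R}$ unless it is identically $1$: neither unit group contains a compact circle, which is exactly the observation you already use to justify two dimensions. Move the boundedness consequence of (v) ahead of the classification and the structural part closes. Alternatively, as in the paper, require a positive-definite $f$ that is multiplicative under $\otimes$. That rules out the degenerate norm of the dual numbers and the indefinite norm $x^2-y^2$ of the split-complex numbers.
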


Probability normalization plays no role as an additional physical postulate in this derivation. It is a definitional requirement for any probabilistic interpretation: a quantity that does not sum to unity cannot be interpreted as a probability. Once coherent combination of indistinguishable alternatives is enforced by the Cram\'er--Rao bound, normalization preservation uniquely constrains the admissible weight space, leading to complex amplitudes and the quadratic probability rule $P = |\phi|^2$. The complete proof is given in Appendix~\ref{app:complex_amplitudes}. 

Equation~\eqref{eq:chi_form} reveals a fundamental connection: the same parameter $\eta$ that controls action resolution (Sec.~\ref{sec:CR}) also governs phase sensitivity. Phase differences become significant when $|\eta(A_1 - A_2)| \gtrsim 1$, which matches precisely the indistinguishability window $|A_1 - A_2| \lesssim 1/\eta$. This is not a coincidence, it reflects the deep connection between action resolution and quantum coherence. The transition from the real exponential $e^{-\eta A}$ in the MaxEnt distribution~\eqref{eq:MaxEnt_joint} to the complex phase $e^{i\eta A}$ is the mathematical signature of moving from classical probabilistic description (distinguishable alternatives, incoherent addition) to quantum description (indistinguishable alternatives, coherent addition as complex amplitudes).

\medskip
The algebraic derivation that consistency requirements force $\Phi = \mathbb{C}$ follows a structure similar to Goyal et al.~\cite{Goyal2010}. However, a crucial difference distinguishes ITA: while Goyal et al.\ require interference as an empirical postulate, ITA derives complex amplitudes from more primitive
requirements within the present inferential framework. Condition (d) above, probability normalization, is not a physical assumption about dynamics; it is a \emph{definitional requirement} for any consistent probabilistic theory. Probabilities that do not sum to unity are not probabilities. Combined with coherent combination of indistinguishable alternatives (forced by the Cram\'er--Rao bound, Theorem~\ref{th:infinitesimal_indist}), normalization preservation uniquely determines $\Phi = \mathbb{C}$ and unitarity $|\chi| = 1$. Interference then \emph{emerges} as a prediction, not an input. The scale $\eta$ emerges from the Cram\'er--Rao bound rather than being an independent parameter.

Now we pose the following question:
\textit{What if probability normalization were violated?}
Probability normalization is not a physical assumption that could be ``violated'', it is part of the definition of probability. A ``probability'' that does not sum to unity is simply not a probability in the Kolmogorov sense. The question ``what if normalization failed?'' is therefore ill-posed within probability theory. However, we can ask: \textit{what if we attempted to build a theory with distinguishable alternatives (incoherent combination) instead of indistinguishable ones?} In that case, $\Phi = \mathbb{R}_{\geq 0}$ would suffice, and the result would be classical stochastic mechanics: no quantum interference, no complex amplitudes. The key insight is that indistinguishability (from Cram\'er--Rao) \emph{forces} coherent combination, and coherent combination plus normalization \emph{forces} unitarity. Once $\Phi = \mathbb{C}$ and $|\chi| = 1$ are established, destructive and constructive interference follow automatically from the geometry of complex addition.

%  cccccccccccccccccccccccccccccccccccccccccccccccccccccccccccccccccccccccccccccccccccccccccccccccccccccccccccccccccccccccccccccccccccccccccccccccccccccccccccccccccccccccccccccccccccccccccccccccccccccccccccccccccccccccccccccccccccccccc\textit{Coherence, incoherence, and amplitude space.}
It is important to clarify that the necessity of complex amplitudes should not be conflated with the universal observability of interference.
In systems with constraints, singular Lagrangians, topological terms, or superselection sectors, interference between certain alternatives
may be operationally suppressed. In such cases, contributions combine effectively incoherently, leading to classical probability addition.
However, this loss of coherence does not signal a reduction of the underlying amplitude space from $\mathbb{C}$ to $\mathbb{R}$. Rather, it reflects a loss of operational phase resolution. The amplitudes remain complex, but their relative phases become inaccessible due to dynamical, inferential, or topological reasons. Real-valued probabilities then emerge only after phase information is effectively averaged out or rendered unobservable. Thus, incoherent summation corresponds to a regime of restricted phase accessibility, not to a change in the structural representation of dynamical alternatives. The complex character of amplitudes remains the minimal and universal framework compatible with action additivity and probability normalization.

% ---------------------------------------------------
\section{Propagator as a Derived Object}
\label{sec:propagator}
The propagator is obtained by combining the action multiplicity with the continuous character of the additive action group:
\begin{equation}
K(b|a)=\int_{\mathcal{A}} dA\, g(A;b|a)\,\chi(A).
\label{eq:K_chi_A1}
\end{equation}
Once indistinguishable alternatives are shown to combine coherently with the character $\chi(A) = e^{i\eta A}$, the propagator follows uniquely. Consistency with action additivity and the composition law for $g(A;b|a)$ requires that any kernel $\mathcal{K}(b|a)$ satisfying (i) linearity in $g$, (ii) sequential composition, and (iii) correct phase assignment for definite-action contributions must have the form
\begin{equation}
K(b|a) = \int_{\mathcal{A}} dA\, g(A;b|a)\, e^{i\eta A}.
\label{eq:K_def}
\end{equation}
%consistent with Eq. (\ref{eq:K_equivalence}).
This expression is \emph{not} postulated: it is the natural kernel compatible with the character structure derived in Section~\ref{sec:amplitudes}. The propagator therefore emerges as a \textit{derived object}, not a new axiom. The density $g(A;b|a)$ itself is not uniquely fixed by physical observables; only the propagator $K(b|a)$ has direct physical meaning. Different representations of $g$ related by action-space transformations lead to identical predictions.

A crucial property follows immediately from action additivity and the composition law~\eqref{eq:composition_g}: the propagator satisfies the \emph{semigroup property}
\begin{equation}
K(c|a) = \int_{\mathcal{Q}} db\, K(c|b)\, K(b|a),
\label{eq:semigroup}
\end{equation}
which states that evolution from $a$ to $c$ can be decomposed through any intermediate configuration $b$. This property is essential for the emergence of unitary time evolution and the Schrödinger equation in later sections.

The detailed proofs of uniqueness and the semigroup property are given in Appendix~\ref{app:propagator}. Note that the action-state density $g(A;b,T|a)$ is a representational object. Distinct densities related by action-space transformations may lead to the same propagator $K(b|a)$, and therefore to identical physical predictions. As in other gauge theories, only the propagator is physically meaningful.

% ---------------------------------------------------
\section{Indistinguishability Scale and Identification with $\hbar$}
\label{sec:hbar}

Having derived the propagator structure, we now address the physical scale that governs quantum behavior. The MaxEnt distribution, the Cramér--Rao bound, and the phase character introduce a single action scale. From Sec.~\ref{sec:CR}, statistical inference on action implies an operational resolution $\Delta A_{\min}\sim 1/\eta$, i.e., action differences below this scale cannot be reliably discriminated.
Indistinguishability then requires coherent weights (Sec.~\ref{sec:amplitudes}), and the unique continuous character compatible with action additivity is $\chi(A)=e^{i\eta A}$. Thus the same parameter $\eta$ controls both (i) the action scale relevant to inference and (ii) the rate of phase accumulation with action. The detailed justification of why a single $\eta$ unifies these three contexts is given in Appendix~\ref{app:scale_unification}.

The present framework determines the structural role of a universal action scale but not its numerical value. The identification $\eta = 1/\hbar$ is therefore \emph{empirical}: it is fixed by matching the predicted phase sensitivity to observed interference phenomena (see Appendix~\ref{app:double_slit} for a worked example using double-slit interference). In this sense, Information Theory of Action explains \emph{why} quantum mechanics requires a fundamental action scale, while its numerical value must be calibrated experimentally,
exactly as in standard quantum mechanics. This calibration yields
\begin{equation}
\eta=\frac{1}{\hbar}.
\label{eq:eta_hbar}
\end{equation}
It is convenient to introduce the dimensionless \emph{action number}
\begin{equation}
\mathcal{N}\equiv \eta A=\frac{A}{\hbar},
\label{eq:action_number}
\end{equation}
which controls the quantum--classical regime: interference is relevant for $\mathcal{N}\lesssim O(1)$, while classical behavior emerges for $\mathcal{N}\gg 1$. As a concrete example, a mass $m = 1\,\text{g}$ moving at $v = 1\,\text{m/s}$ for $t = 1\,\text{s}$ has action $A \sim mvL \sim mv^2 t \sim 10^{-3}\,\text{J}\cdot\text{s}$, giving $\mathcal{N} \sim 10^{31}$, deep in the classical regime where interference
effects are completely negligible.

With~\eqref{eq:eta_hbar}, the propagator reads
\begin{equation}
K(b|a)=\int_{\mathcal A} dA\, g(A;b|a)\, e^{iA/\hbar},
\label{eq:K_physical}
\end{equation}
recovering the standard phase structure of the Feynman kernel as a derived output.

It is important to stress that the present theory does not aim to predict the numerical value of $\hbar$. Rather, it explains why a universal action scale must exist and why it enters quantum dynamics in the specific combination $A/\hbar$. The status of $\hbar$ is therefore analogous to that of coupling constants in classical field theories: its existence is theoretically necessary, while its value is empirically determined. The universality of the calibrated value of $\eta$ across all physical systems reflects the empirical universality of quantum interference, justifying the identification of a single constant $\hbar$ rather than system-dependent action scales. With the physical scale fixed, we are now in a position to analyze the short-time structure of $g(A;b,T|a)$.

% ---------------------------------------------------
\section{Short-Time Structure of $g$ from the Composition Law}
\label{sec:shorttime_g}

The propagator formula~\eqref{eq:K_physical} requires knowledge of the density $g(A;b|a)$. A central result of the theory is that the short-time form of the action density $g(A;b,\Delta t|a)$ is not postulated from quantum mechanics: it is uniquely determined by the composition law together with standard physical regularity assumptions. Integrating over intermediate configurations, the composition law implies that
the \emph{marginal} action density $\tilde{g}_T(A) \equiv \int_{\mathcal{Q}} db\, g(A;b,T|a)$ forms a \emph{convolution semigroup} in $A$:
\begin{equation}
\tilde{g}_{T_1+T_2} = \tilde{g}_{T_1} * \tilde{g}_{T_2}.
\label{eq:semigroup_action}
\end{equation}
Under the physically necessary requirement of \emph{finite variance} and the continuity condition $\tilde{g}_T \Rightarrow \delta$ as $T \to 0^+$, the Lévy--Khintchine classification~\cite{Feller1971,Sato1999} reduces the semigroup to the Gaussian class (proof in Appendix~\ref{app:shorttime_derivation}).

\begin{proposition}[Gaussian short-time fluctuations]
\label{prop:gaussian_short_time}
Let $\tilde{g}_T$ satisfy~\eqref{eq:semigroup_action} with finite second moment and $\tilde{g}_T \Rightarrow \delta$ as $T \to 0^+$. Then
\begin{equation}
\tilde{g}_T(A) = \frac{1}{\sqrt{2\pi\sigma^2 T}} \exp\!\left[-\frac{(A - \mu T)^2}{2\sigma^2 T}\right],
\label{eq:gaussian_marginal}
\end{equation}
for constants $\sigma^2 > 0$ and $\mu \in \mathbb{R}$.
\end{proposition}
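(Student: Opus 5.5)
The plan is to pass to characteristic functions in the action variable and reduce the statement to the classification of infinitely divisible laws. First comes a normalization step. Since $g$ is not a probability density, the total mass $m_T \equiv \int dA\,\tilde g_T(A)$ satisfies $m_{T_1+T_2}=m_{T_1}m_{T_2}$ by \eqref{eq:semigroup_action}. Together with positivity, integrability, and the continuity $m_T\to 1$ as $T\to0^+$, this gives $m_T=e^{cT}$. I would therefore work with the probability measures $p_T=e^{-cT}\tilde g_T$, which still form a convolution semigroup with $p_T\Rightarrow\delta$. The normalization $1/\sqrt{2\pi\sigma^2T}$ in \eqref{eq:gaussian_marginal} is understood in this normalized sense, or equivalently up to the gauge factor $e^{cT}$.

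Next I Fourier transform, setting $\varphi_T(k)=\int dA\,p_T(A)e^{ikA}$. The semigroup law becomes $\varphi_{T_1+T_2}=\varphi_{T_1}\varphi_{T_2}$, and weak continuity gives $\varphi_T(k)\to1$ pointwise. A standard argument then applies: $\varphi_T$ never vanishes, the continuous logarithm is additive in $T$, and measurability yields $\varphi_T(k)=e^{T\psi(k)}$. In particular each $p_T$ is infinitely divisible, since $p_T=p_{T/n}^{*n}$. Because of Axiom~\ref{ax:density}(e), the variance is finite, so I can use the Kolmogorov form of the Lévy--Khintchine formula:
\begin{equation}
\psi(k)=i\mu k-\tfrac12\sigma^2k^2+\int_{\mathbb R}\bigl(e^{ikx}-1-ikx\bigr)\,\nu(dx),\qquad \int x^2\,\nu(dx)<\infty .
\end{equation}
Here $\mu$ is the mean drift per unit time, and the total variance per unit time is $\sigma^2+\int x^2\,\nu(dx)$.

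The main obstacle is eliminating the Lévy measure $\nu$. Finite variance alone does not suffice: a compound Poisson semigroup with square-integrable jumps satisfies every hypothesis literally stated, yet it is not Gaussian. The step therefore requires using the regularity assumptions listed at the start of Sec.~\ref{sec:CR}, namely locality and the absence of singular Lagrangians. I would read these as saying that short-time action increments have no jumps. Concretely, for every $\epsilon>0$,
\begin{equation}
\int_{|A-\mu T|>\epsilon}p_T(A)\,dA=o(T)\qquad (T\to0^+).
\end{equation}
This is the Lindeberg-type, or continuous-sample-path, condition. Standard semigroup theory gives $T^{-1}p_T(|x|>\epsilon)\to\nu(|x|>\epsilon)$, so this condition forces $\nu(|x|>\epsilon)=0$ for all $\epsilon$. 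Hence $\nu\equiv0$ and $\psi(k)=i\mu k-\tfrac12\sigma^2k^2$.

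Inverting the Fourier transform then gives \eqref{eq:gaussian_marginal}. Finally, $\sigma^2>0$ follows by excluding the degenerate case $\sigma^2=0$. In that case $p_T=\delta(A-\mu T)$, which is the classical, sharply determined action and corresponds to trivial multiplicity. I would exclude it as a nondegeneracy assumption on $g$, or else note it as the limiting case.
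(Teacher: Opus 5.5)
Your outline matches the paper's own proof (the Gaussian-uniqueness theorem in the short-time appendix): characteristic functions, the multiplicative Cauchy equation giving $e^{T\psi(k)}$, Lévy--Khintchine, and then elimination of the Lévy measure, with the drift harmless. You part ways at the decisive step, and there you are right and the paper is not. The paper asserts that finite variance, ``combined with the Lévy measure condition,'' forces $\nu=0$. But $\int x^2\,\nu(dx)<\infty$ is fully compatible with $\nu\neq0$, so the proposition as literally stated is false and cannot be proved from its hypotheses.

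Your $o(T)$ tail condition is exactly the missing ingredient. Since $T^{-1}p_T(|x|>\epsilon)\to\nu(|x|>\epsilon)$, it is equivalent to $\nu=0$, so it is the minimal fix. State it as an added hypothesis, though, not as a reading of the existing ones. The locality of Axiom~\ref{ax:density}(d) constrains the displacement $|b-a|$, not the action increments, and says nothing about jumps in $A$.

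Your normalization step also repairs a silent assumption. The paper's proof simply takes the $\tilde g_T$ to be probability densities. Without that, $e^{cT}$ times the Gaussian satisfies every hypothesis but not the displayed formula. Your derivation of $m_T=e^{cT}$, with $m_T\to1$ obtained by testing the weak convergence against the constant function, handles this correctly.

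Three small refinements:
\begin{itemize}
\item A pure compound Poisson semigroup has an atom of mass $e^{-\lambda T}$. If $\tilde g_T$ must be a genuine function, as Axiom~\ref{ax:density}(a) says, use a Gaussian plus compound Poisson instead, e.g.\ $\psi(k)=-\tfrac12\sigma^2k^2+\lambda\bigl(e^{-k^2/2}-1\bigr)$. Alternatively use the gamma semigroup $\tilde g_T(A)=A^{T-1}e^{-A}/\Gamma(T)$ on $A>0$. Both are absolutely continuous, have finite variance, converge to $\delta$, and are not Gaussian.
\item Taking that axiom literally also gives you $\sigma^2>0$ for free. Once $\nu=0$, the case $\sigma^2=0$ is a Dirac mass, so you need no separate nondegeneracy assumption.
\item The limit $T^{-1}p_T(|x|>\epsilon)\to\nu(|x|>\epsilon)$ holds only for $\epsilon$ with $\nu(\{|x|=\epsilon\})=0$. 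Those are all but countably many $\epsilon$, which still suffices for $\nu=0$.
\end{itemize}

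Do not lean on the paper's corollary deriving Gaussianity from linear variance growth. It fails for the same reason, since every finite-variance semigroup has $\mathrm{Var}=T\bigl(\sigma^2+\int x^2\,\nu(dx)\bigr)$.
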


The Gaussian form is not a quantum assumption but a mathematical consequence within the class of finite-variance convolution semigroups.

For a given physical system, the endpoint dependence shifts the mean to the \emph{mean action} $\bar{A}(b,\Delta t|a)$, which defines the emergent Lagrangian via $\bar{A} = L(x,v)\Delta t + O(\Delta t^2)$ in the local limit. Thus the universal content is the Gaussian \emph{fluctuation} structure around $\bar{A}$, whereas the system-specific content is encoded in $L$ (equivalently, in the potential $V$). This mirrors standard quantum mechanics -- the framework fixes the kinematics of evolution, while the specific interaction must be supplied.

The width parameter $\sigma$ is constrained by the indistinguishability scale: the spread in action must match the minimum resolvable increment $\Delta A_{\min} \sim 1/\eta = \hbar$, giving $\sigma\sqrt{\Delta t} \sim \hbar$ (see Appendix~\ref{app:variance_scaling} for details). For Galilean-invariant systems, symmetry requirements further constrain the Lagrangian to the form $L = \tfrac{1}{2}mv^2 - V(x)$, where the kinetic term is universal and only the potential $V(x)$ is system-specific input (Appendix~\ref{app:galilean_lagrangian}).

The resulting short-time propagator is
\begin{equation}
K(x_b, \Delta t | x_a) = \left(\frac{m}{2\pi i\hbar\Delta t}\right)^{d/2} e^{\!\left[\frac{i}{\hbar}\left(\frac{m(x_b - x_a)^2}{2\Delta t}
- V(\bar{x})\Delta t\right)\right]},
\label{eq:K_shorttime_final}
\end{equation}
which is the standard quantum mechanical form, now derived from the axioms rather than assumed. The normalization factor $(m/2\pi i\hbar\Delta t)^{d/2}$ follows from composition law consistency without invoking quantum mechanical concepts (Appendix~\ref{app:shorttime_derivation}).

% ---------------------------------------------------
\section{Emergence of the Lagrangian and the Variational Principle}
\label{sec:lagrangian_emergence}

The short-time kernel derived above contains the mean action $\bar{A}(b,\Delta t|a)$, which defines the system-specific dynamics. A central result of ITA is that the Lagrangian and the variational principle are not inputs but \emph{derived consequences} of the composition law and stationary phase analysis. This section establishes this emergence and shows that the classical limit $g \to \delta(A - S_{\text{cl}})$ is a theorem, not an assumption.

\subsection{Mean action and emergent Lagrangian}

The \emph{mean action} for the transition $a \to b$ in time $T$ is $\bar{A}(b,T|a) \equiv \int dA\, A\, g(A;b,T|a)$.
The variance $\sigma^2$ scales linearly with time, $\sigma^2 \propto T$, as established in Appendix~\ref{app:variance_scaling}.

\begin{definition}[Emergent Lagrangian]
\label{def:emergent_lagrangian}
The \emph{Lagrangian} is defined as the rate of mean action accumulation:
\begin{equation}
L(x, v) \equiv \lim_{dt \to 0} \frac{\bar{A}(x + v\,dt,\, dt\, |\, x)}{dt}
\label{eq:lagrangian_emergent}
\end{equation}
where $v = (x_b - x_a)/dt$ is the velocity.
\end{definition}

This definition inverts the usual logic order: instead of specifying $L$ and computing amplitudes, we specify $g(A;b|a)$ and \emph{derive} the Lagrangian from it. Different physical systems correspond to different functions $g$, which determine different Lagrangians.

\medskip
\noindent\textbf{System specification.}
A natural question arises; how is the mean action $\bar{A}(b,\Delta t|a)$ determined for a specific physical system? The answer parallels standard quantum mechanics: the kinetic structure $T = \tfrac{1}{2}mv^2$ follows universally from Galilean invariance (see Appendix~\ref{app:galilean_lagrangian}), while the potential $V(x)$ must be specified as physical input. For a harmonic oscillator, one sets $V = \tfrac{1}{2}m\omega^2 x^2$; for a Coulomb system, $V = -e^2/(4\pi\epsilon_0 r)$. This reflects the same logical structure present in both
classical and quantum mechanics--in standard QM one must also specify $\hat{H}$, and in classical mechanics one specifies $L$. The contribution of ITA is to show that, \emph{given} such a specification via $\bar{A}$, the quantum propagator, interference, and unitary evolution follow necessarily from action indistinguishability and probability normalization. Different potentials yield different densities $g(A;b|a)$ peaked at different classical action values, but the \emph{structure} of quantum mechanics (complex amplitudes, composition law, Hilbert space) is universal.

\medskip
\noindent\textbf{Bidirectional consistency.}
The relationship between $g(A;b|a)$ and the Lagrangian $L$ is bidirectional, not circular. Given $g$, the Lagrangian emerges via Definition~\ref{def:emergent_lagrangian}: $L = \lim_{dt\to 0} \bar{A}/dt$. Conversely, given $L$, the short-time density is constructed as a Gaussian centered on the classical action:
\begin{equation}
g(A; b, \Delta t | a) = \frac{1}{\sqrt{2\pi\sigma^2\Delta t}} \exp\left[-\frac{(A - L(\bar{x}, v)\Delta t)^2}{2\sigma^2\Delta t}\right],
\label{eq:g_from_L}
\end{equation}
where $\bar{x} = (a+b)/2$, $v = (b-a)/\Delta t$, and $\sigma^2 \sim \hbar^2/\Delta t$ from Cram\'er--Rao consistency. This bidirectionality is a \emph{consistency check}, not a logical defect: the two directions must agree for the theory to be self-consistent. A worked example for the free particle is given in Appendix~\ref{app:free_particle}.

\subsection{Stationary phase and the variational principle}

In the classical regime $\eta\bar{A} \gg 1$, the propagator takes the form $K(b|a) \approx \mathbb{N}(b,a)\, e^{i\eta\bar{A}(b|a)}$, where
$\mathbb{N}$ is a slowly varying prefactor. The composition law then becomes dominated by stationary phase contributions.

\begin{theorem}[Stationary phase condition]
\label{th:stationary_phase}
In the classical regime, the propagator composition $K(c, T_1+T_2 | a) = \int db\, K(c, T_2 | b)\, K(b, T_1 | a)$ is dominated by the intermediate configuration $b^*$ satisfying:
\begin{equation}
\frac{\partial \bar{A}(b, T_1 | a)}{\partial b}\bigg|_{b^*} + \frac{\partial \bar{A}(c, T_2 | b)}{\partial b}\bigg|_{b^*} = 0
\label{eq:stationary_condition}
\end{equation}
\end{theorem}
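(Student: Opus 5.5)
The plan is a standard stationary-phase argument applied to the composition law~\eqref{eq:semigroup}. In the classical regime $\eta\bar A\gg1$, I substitute the ansatz $K(b,T|a)\approx \mathbb{N}(b,T|a)\,e^{i\eta\bar A(b,T|a)}$ for both factors of the integrand. The composition then becomes
\begin{equation}
K(c,T_1+T_2|a)\approx\int_{\mathcal Q} db\;\mathbb{N}(c,T_2|b)\,\mathbb{N}(b,T_1|a)\,e^{i\eta\Phi(b)},
\end{equation}
with total phase function $\Phi(b)\equiv\bar A(b,T_1|a)+\bar A(c,T_2|b)$. The additivity of phases under multiplication is exactly the character property $\chi(A_1+A_2)=\chi(A_1)\chi(A_2)$ from Theorem~\ref{th:complex_amplitudes}. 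This is what lets the product of two kernels collapse into a single exponential of the summed mean actions.

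The key steps are as follows.

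First, I write $\eta\Phi(b)=\eta\bar A_{\rm typ}\,\varphi(b)$, where $\bar A_{\rm typ}$ is a characteristic action scale. This exhibits $\lambda=\eta\bar A_{\rm typ}\gg1$ as the large parameter, with $\varphi$ dimensionless and of order unity in its variation.

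Second, I split $\mathcal Q$ into a small neighbourhood of the critical points of $\Phi$ and its complement. On the complement, $\nabla_b\Phi\neq0$, so I integrate by parts using the identity $e^{i\eta\Phi}=(i\eta|\nabla\Phi|^2)^{-1}\nabla\Phi\cdot\nabla e^{i\eta\Phi}$. Each integration by parts gains a factor $1/\eta$, because $\mathbb N$ is slowly varying by assumption. These contributions are therefore $O(\lambda^{-N})$ for any $N$, provided the boundary terms vanish. The locality axiom (d) ensures this, since it makes the kernels negligible far from $a$ and $c$, effectively giving compact support.

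Third, near a point $b^*$ with $\nabla_b\Phi(b^*)=0$, I expand $\Phi$ to second order and perform the Fresnel integral. This yields a contribution of order $\lambda^{-d/2}$, which dominates the non-stationary remainder. Written out, $\nabla_b\Phi(b^*)=0$ is precisely Eq.~\eqref{eq:stationary_condition}.

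The main obstacle is making ``slowly varying prefactor'' and nondegeneracy precise. The argument needs the Hessian $\partial_b^2\Phi(b^*)$ to be nonsingular, so that the quadratic Gaussian–Fresnel evaluation holds. It also needs $\mathbb N$ to have derivatives bounded independently of $\eta$. I would state both as regularity hypotheses of the classical regime, naturally supplied by the Gaussian short-time form of Proposition~\ref{prop:gaussian_short_time} together with smoothness of $\bar A$. If the Hessian degenerates (caustics), I would note that $b^*$ still dominates, only with a weaker power of $\lambda$. If several stationary points exist, the integral is dominated by their sum, and each one satisfies~\eqref{eq:stationary_condition}.

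Finally, I would remark on the interpretation. Since $\partial_b\bar A(b,T_1|a)$ and $-\partial_b\bar A(c,T_2|b)$ play the roles of final and initial momenta, the condition expresses momentum continuity at $b^*$. Iterating the argument over many intermediate slices then yields the variational principle developed in the rest of this section.
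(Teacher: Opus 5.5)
Your proposal is correct and follows the paper's route. The paper likewise substitutes the classical-regime form $K\approx\mathbb{N}e^{i\eta\bar A}$ into the composition law so that the phases add, then imposes stationarity of the summed mean action in the intermediate variable; this appears in the main text and, in iterated form, in the Euler--Lagrange appendix. Your integration-by-parts and Fresnel estimates, together with the explicit nondegeneracy and slow-variation hypotheses, make rigorous what the paper leaves at the level of ``stationary phase dominates.'' One caveat: Axiom (d) guarantees locality only for short times, so for finite $T_1,T_2$ the boundary terms are better handled by the standard regularization of oscillatory integrals than by appeal to that axiom.
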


The condition~\eqref{eq:stationary_condition} expresses momentum continuity: defining the canonical momentum $p \equiv \partial\bar{A}/\partial b$, we have $p_{\text{final}}^{(a \to b^*)} = p_{\text{initial}}^{(b^* \to c)}$.

\subsection{Euler-Lagrange equations from composition}

Iterating the stationary phase condition over a discretized path and taking the continuum limit yields the classical equations of motion.

\begin{theorem}[Euler-Lagrange from composition]
\label{th:euler_lagrange}
The trajectory $\{x^*(t)\}$ determined by the stationary phase condition satisfies:
\begin{equation}
\frac{d}{dt}\frac{\partial L}{\partial \dot{x}} - \frac{\partial L}{\partial x} = 0
\label{eq:euler_lagrange_derived}
\end{equation}
where $L$ is the emergent Lagrangian~\eqref{eq:lagrangian_emergent}.
\end{theorem}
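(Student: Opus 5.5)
The plan is to discretize the interval $[0,T]$ into $N$ steps of size $dt = T/N$, with intermediate configurations $x_0 = a, x_1,\dots,x_{N-1}, x_N = c$. Iterating the semigroup property \eqref{eq:semigroup} gives
\begin{equation}
K(c,T|a) = \int \prod_{k=1}^{N-1} dx_k \, \prod_{k=0}^{N-1} K(x_{k+1},dt|x_k).
\end{equation}
In the classical regime each factor is $\mathbb{N}\, e^{i\eta\bar{A}(x_{k+1},dt|x_k)}$, so the total phase is $\eta\sum_k \bar{A}(x_{k+1},dt|x_k)$. Applying Theorem~\ref{th:stationary_phase} to each intermediate integration (equivalently, the joint stationary phase over all $x_k$, with the prefactors slowly varying) gives, for every $k=1,\dots,N-1$,
\begin{equation}
\frac{\partial \bar{A}(x_k,dt|x_{k-1})}{\partial x_k} + \frac{\partial \bar{A}(x_{k+1},dt|x_k)}{\partial x_k} = 0 .
\end{equation}

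Next, Definition~\ref{def:emergent_lagrangian} gives the short-time form $\bar{A}(x',dt|x) = L(x, v)\,dt + O(dt^2)$ with $v=(x'-x)/dt$. It is convenient to evaluate $L$ at the left endpoint (or at the midpoint $\bar x$, as in \eqref{eq:g_from_L}; the difference is $O(dt^2)$ and drops out). Differentiating, the first term equals $\partial_v L(x_{k-1},v_{k-1}) + O(dt)$, since $\partial v_{k-1}/\partial x_k = 1/dt$. The second term equals $\partial_x L(x_k,v_k)\,dt - \partial_v L(x_k,v_k) + O(dt)$. The stationarity condition then becomes
\begin{equation}
\frac{p_k - p_{k-1}}{dt} - \frac{\partial L}{\partial x}(x_k,v_k) = O(1)\cdot\frac{\text{remainder}}{dt},
\end{equation}
where $p_k = \partial_v L(x_k,v_k)$ is the discrete version of the momentum continuity already noted after Theorem~\ref{th:stationary_phase}.

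Finally, I will take $N\to\infty$. I will assume the stationary configuration $\{x_k^*\}$ converges to a $C^2$ curve $x^*(t)$ with $v_k \to \dot x^*(t)$. Then the difference quotient tends to $\frac{d}{dt}\partial_{\dot x}L$, which yields \eqref{eq:euler_lagrange_derived}. As a consistency check, for the Galilean form $L=\tfrac12 mv^2 - V$ this reproduces Newton's law $m\ddot x = -V'$.

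The main obstacle is controlling the $O(dt^2)$ corrections in $\bar A$: after the $1/dt$ division they must vanish, not merely stay bounded. A bare $O(dt^2)$ remainder differentiated with respect to $x_k$ gives a contribution of order $dt^2\cdot(1/dt) = dt$, and that is fine only if the remainder is smooth in $(x,v)$ uniformly in $dt$. I would therefore strengthen the hypothesis to: $\bar A - L\,dt$ is $O(dt^2)$ in $C^1$ jointly in $(x,v)$. This holds for the Gaussian form \eqref{eq:g_from_L}. With that, the remainder is $O(dt)\to 0$.

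A secondary issue is the justification of iterated stationary phase with $N$ growing, where $\eta\bar A$ per step is not large. I would handle it by applying stationarity to the exact total phase $\eta\sum_k\bar A$, which is large in the classical regime, rather than to individual steps.
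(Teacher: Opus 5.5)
Your route is the paper's: iterate the composition law, impose stationarity of the total phase $\eta\sum_k\bar A$ at each node, insert $\bar A=L\,dt+O(dt^2)$, divide by $dt$, and take the limit. The left-endpoint versus midpoint choice is cosmetic. However, the step you single out as the main obstacle is not closed by your patch.

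Write $\bar A(x',dt|x)=L(x,v)\,dt+R(x,v;dt)$ with $v=(x'-x)/dt$. At node $k$ the remainder enters the stationarity condition as
\[
\frac{1}{dt}\Bigl[\partial_vR(x_{k-1},v_{k-1})-\partial_vR(x_k,v_k)\Bigr]+\partial_xR(x_k,v_k).
\]
If $R=O(dt^2)$ in $C^1$, each bracketed term divided by $dt$ is $O(dt)$. That is the same order as the leading terms $p_k-p_{k-1}$ and $\partial_xL\,dt$. After you divide the whole condition by $dt$, the error is therefore $O(1)$, which is exactly the ``merely bounded'' situation you yourself call insufficient. The $C^1$ bound is precisely what produced the $+O(dt)$ in your two derivative formulas, so it cannot also make that error disappear. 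Your conclusion ``the remainder is $O(dt)\to0$'' is off by one power of $dt$. The paper's appendix passes over the same point: it records an $O(dt)$ error in each one-sided derivative and then asserts, without justification, that the combined condition holds up to $O(dt^2)$.

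The missing observation is that the two dangerous terms enter with opposite signs, as a one-step difference. They must be treated as a discrete derivative, not bounded separately. There are two clean ways to do this.

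\emph{(i) Telescoping, with your $C^1$ hypothesis unchanged.} Sum the node conditions over $k=1,\dots,K$ before taking any limit. The bracket telescopes to $\frac{1}{dt}\bigl[\partial_vR(x_0,v_0)-\partial_vR(x_K,v_K)\bigr]=O(dt)$, and $\sum_k\partial_xR=O(K\,dt^2)=O(dt)$. This gives $p_K=p_0+\sum_{k\le K}\partial_xL(x_k,v_k)\,dt+O(dt)$. With $x_k\to x^*$ and $v_k\to\dot x^*$ uniformly, this becomes $\partial_{\dot x}L\big|_t=\partial_{\dot x}L\big|_0+\int_0^t\partial_xL\,ds$. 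Differentiating in $t$ yields the Euler--Lagrange equation.

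\emph{(ii) Pointwise form with one more derivative.} Assume $R=dt^2\,r$ with $r$ bounded in $C^2$ uniformly in $dt$. Then the bracket over $dt$ equals $dt\,[\partial_vr_{k-1}-\partial_vr_k]=O\bigl(dt\,(|x_k-x_{k-1}|+|v_k-v_{k-1}|)\bigr)=O(dt^2)$. This uses $x_k-x_{k-1}=v_{k-1}dt$ and $v_k-v_{k-1}=O(dt)$. The latter follows by bootstrapping from the $O(1)$ estimate together with nondegeneracy of $\partial_v^2L$ (here $m$).

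The same cancellation mechanism, not the mere size $O(dt^2)$, is what justifies your claim that the midpoint-to-left-endpoint shift drops out. That shift changes each individual $p_k$ by $O(dt)$; only its one-step differences are $O(dt^2)$.

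Finally, your secondary worry about iterated stationary phase is not needed for the statement as posed. The theorem concerns the trajectory \emph{defined} by the stationarity condition, so its content is only the continuum limit of the discrete Euler--Lagrange equations.
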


The proof proceeds by discretizing time into $N$ steps, applying stationary phase at each intermediate point, expanding the short-time action
$\bar{A}(x_{k+1}, dt | x_k) = L(\bar{x}_k, v_k)\,dt + O(dt^2)$, and taking $N \to \infty$. The complete derivation is given in Appendix~\ref{app:euler_lagrange}.

\begin{theorem}[Classical action theorem]
\label{th:classical_action}
The mean action along the stationary trajectory equals the classical action:
\begin{equation}
\bar{A}(b,T|a)\big|_{\text{stationary}} = S_{\text{cl}}(b,T|a) \equiv \int_0^T L(x^*(t), \dot{x}^*(t))\, dt
\label{eq:classical_action_theorem}
\end{equation}
\end{theorem}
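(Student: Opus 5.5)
The plan is to prove Theorem~\ref{th:classical_action} by iterating the composition law and using additivity of the mean action. Slice $[0,T]$ into $N$ steps of length $dt=T/N$, with intermediate configurations $x_0=a, x_1,\dots,x_{N-1}, x_N=b$.

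\textbf{Step 1: additivity of the mean action.} Because the action enters the composition law \eqref{eq:composition_g} only through the convolution $A=A'+(A-A')$, the first moment of a convolution in $A$ splits as a sum. In the classical regime $\eta\bar A\gg1$ the $b$-integral is controlled by Theorem~\ref{th:stationary_phase}, so it localizes at $b^*$. I would therefore establish
\begin{equation}
\bar A(c,T_1+T_2|a)\big|_{\rm stationary}=\bar A(b^*,T_1|a)+\bar A(c,T_2|b^*),
\end{equation}
up to corrections of relative order $1/(\eta\bar A)$. The same conclusion appears at the level of the propagator: writing $K\approx\mathbb{N}e^{i\eta\bar A}$, the stationary-phase evaluation of $\int db\,K K$ produces the phase $\eta[\bar A(b^*,T_1|a)+\bar A(c,T_2|b^*)]$. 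The fluctuation part is harmless here. Proposition~\ref{prop:gaussian_short_time} shows the Gaussian fluctuations have zero mean, so they do not shift $\bar A$; their variances add, which affects only the prefactor.

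\textbf{Step 2: iterate over the discretization.} Applying Step 1 $N-1$ times gives
\begin{equation}
\bar A(b,T|a)\big|_{\rm stationary}=\sum_{k=0}^{N-1}\bar A(x^*_{k+1},dt|x^*_k),
\end{equation}
where the $x_k^*$ satisfy the chain of stationarity conditions \eqref{eq:stationary_condition}.

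\textbf{Step 3: pass to the continuum.} Substitute the short-time expansion from Definition~\ref{def:emergent_lagrangian}, $\bar A(x_{k+1},dt|x_k)=L(\bar x_k,v_k)\,dt+O(dt^2)$. This turns the sum into a Riemann sum $\sum_k L(\bar x_k,v_k)\,dt$ plus an error $N\cdot O(dt^2)=O(T\,dt)\to0$. By Theorem~\ref{th:euler_lagrange}, the points $x_k^*$ converge to the Euler--Lagrange trajectory $x^*(t)$, so $v_k\to\dot x^*(t_k)$ and the Riemann sum converges to $\int_0^T L(x^*,\dot x^*)\,dt=S_{\rm cl}$.

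\textbf{Main obstacle.} The delicate point is uniformity in Step 3. Two things need control:
\begin{itemize}
\item[(i)] The $O(dt^2)$ remainder must be bounded uniformly along the discrete path. I would require $L$ to be $C^2$ and the stationary points to stay in a compact set, which follows from the locality axiom with $|v_k|\lesssim v_{\max}$.
\item[(ii)] The discrete stationary points must converge to $x^*(t)$ well enough that the Riemann sum converges. Here I would use the discrete Euler--Lagrange equations produced in the proof of Theorem~\ref{th:euler_lagrange}, viewed as a consistent, stable discretization of a second-order ODE. The standard convergence result for such schemes gives $\max_k|x_k^*-x^*(t_k)|=O(dt)$ and $\max_k|v_k-\dot x^*(t_k)|=O(dt)$, assuming a unique nondegenerate classical path (no caustics).
\end{itemize}
Given these bounds, Lipschitz continuity of $L$ finishes the argument.
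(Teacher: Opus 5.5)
Your proposal follows the paper's proof: additivity of $\bar A$ along the stationary chain $\{x_k^*\}$, then the expansion $\bar A(x_{k+1},dt|x_k)=L(\bar x_k,v_k)\,dt+O(dt^2)$ with accumulated error $N\cdot O(dt^2)=O(dt)$, then the continuum limit, and your Step~3 convergence and uniformity remarks add care the paper leaves implicit. Base Step~1 only on your propagator argument, i.e.\ stationary phase in $\int db\,K(c,T_2|b)K(b,T_1|a)$ with $K\approx\mathbb{N}e^{i\eta\bar A}$, which is the sense in which the paper's additivity holds, and drop the first-moment argument: $\int dA\,A\,g(A;c,T_1+T_2|a)=\int db\,[\bar A(b,T_1|a)\,n(c,T_2|b)+n(b,T_1|a)\,\bar A(c,T_2|b)]$ with $n\equiv\int dA\,g\ge0$ has no oscillatory factor, so Theorem~\ref{th:stationary_phase} does not localize it at $b^*$.
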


\begin{corollary}[Classical limit of $g$]
\label{cor:g_classical_limit}
In the classical regime $\mathcal{N} \to \infty$, the density becomes sharply peaked:
\begin{equation}
g(A;b,T|a) \xrightarrow{\mathcal{N} \to \infty} \delta(A - S_{\text{cl}}(b,T|a)).
\label{eq:g_to_delta}
\end{equation}
\end{corollary}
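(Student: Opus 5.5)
The plan is to combine the Gaussian structure from Proposition~\ref{prop:gaussian_short_time} with the composition law \eqref{eq:composition_g} and Theorem~\ref{th:classical_action}. The idea is that the finite-time density is an iterated convolution of short-time Gaussians. Its mean is then pinned to $S_{\text{cl}}$, and its width stays at the fixed resolution scale $\hbar$, so it shrinks to zero on the scale set by $\mathcal{N}$.

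\emph{Step 1 (finite-$T$ form).} Discretize $[0,T]$ into $N$ steps of length $\Delta t=T/N$ and iterate \eqref{eq:composition_g}. Each factor is the short-time density \eqref{eq:g_from_L}, a Gaussian in $A$ centered at $L(\bar x_k,v_k)\Delta t$ with variance $\sigma^2\Delta t$. For fixed intermediate points $x_1,\dots,x_{N-1}$, the $A$-convolutions therefore give a Gaussian with mean $\sum_k L(\bar x_k,v_k)\Delta t$ (the discretized action of that broken path) and variance $N\sigma^2\Delta t$. The total variance is additive in time, which is consistent with Appendix~\ref{app:variance_scaling}. Integrating over the intermediate points yields $g(A;b,T|a)$ as a superposition of Gaussians labeled by discrete paths.

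\emph{Step 2 (localization of the mean).} In the regime $\mathcal{N}=\eta\bar A\gg1$, the path integrals over the $x_k$ appearing in the propagator are dominated by stationary points. By Theorem~\ref{th:stationary_phase} and Theorem~\ref{th:euler_lagrange}, these are the points of the classical trajectory $x^*(t)$, and Theorem~\ref{th:classical_action} identifies the resulting mean as $\bar A=S_{\text{cl}}(b,T|a)$. Off-stationary configurations contribute terms suppressed by rapidly oscillating phases, and I would argue that they leave no weight in the action distribution that is physically resolvable.

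\emph{Step 3 (vanishing relative width).} I would rescale to the dimensionless variable $\mathcal{N}=\eta A$ and track the standard deviation relative to the mean. The spread is fixed by Cram\'er--Rao consistency at order $1/\eta=\hbar$, while $S_{\text{cl}}$ grows without bound in the classical regime, so the relative width scales like $\hbar/S_{\text{cl}}\sim 1/\mathcal{N}\to0$. Using the standard fact that a normalized Gaussian converges weakly to a delta function as its width goes to zero, and normalizing $g$ with respect to its total weight, I would conclude that $g\to\delta(A-S_{\text{cl}})$ in the distributional sense.

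\emph{Main obstacle.} The hardest step is Step 2. Stationary phase is a statement about the oscillatory integral $K$, not about the nonnegative density $g$ itself, so transferring it to $g$ needs care. My first attempt would be to define the limit through its action on test functions, $\int dA\, g\,e^{i\eta A}$. I would then show that only a neighborhood of $S_{\text{cl}}$ survives, so that $g$ is equivalent, within the action-space gauge freedom noted in Section~\ref{sec:propagator}, to a density concentrated at $S_{\text{cl}}$. A second subtlety is the normalization, since $g$ is not a probability density; the delta statement should be read up to the overall prefactor $\mathbb{N}(b,a)$.
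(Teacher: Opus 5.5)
Your Step 3 is essentially the paper's entire proof. The paper takes the centre $\bar A=S_{\text{cl}}$ from Theorem~\ref{th:classical_action}, assumes a width that grows more slowly than $\bar A$, and concludes from $\sigma/\bar A\to0$. It never tries to derive the concentration from the composition law. Your Steps 1--2 do try, and that is where the argument breaks.

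You rightly flag that stationary phase concerns $K$ rather than $g$. In fact the Step 2 claim fails outright for the $g$ your Step 1 constructs. Iterating \eqref{eq:composition_g} with \eqref{eq:g_from_L} gives
\[
g(A;b,T|a)=\int dx_1\cdots dx_{N-1}\,\mathcal{G}\bigl(A;\,S_N[x],\,N\sigma^2\Delta t\bigr),
\]
where $\mathcal{G}(A;\mu,v)$ is a normalized Gaussian and $S_N[x]=\sum_k L(\bar x_k,v_k)\Delta t$. This is a mixture with nonnegative weights and no phase anywhere. So off-stationary configurations are not suppressed at all: each one deposits positive weight at its own discretized action $S_N[x]$.

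Take the free particle in one dimension and write $x=x_{\text{cl}}+\xi$ with $\xi_0=\xi_N=0$. The cross term telescopes, so $S_N[x]=S_{\text{cl}}+\frac{m}{2\Delta t}\sum_k(\xi_{k+1}-\xi_k)^2$. The mixing weight in $s=A-S_{\text{cl}}$ is then proportional to $s^{(N-3)/2}$ on $s>0$. Thus $g$ is pushed away from $S_{\text{cl}}$ and does not even have a finite mean.

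Your proposed repair does not close this gap, for two reasons.
\begin{enumerate}
\item Pairing $g$ with the single function $e^{i\eta A}$ at fixed $\eta$ yields one Fourier coefficient, and one linear functional cannot determine a weak limit.
\item The gauge fallback is vacuous. At fixed $\eta$, every kernel is reproduced by the nonnegative representative $|K|\,\delta(A-\eta^{-1}\arg K)$. The only content left is $\arg K\approx\eta S_{\text{cl}}$, which is the stationary-phase form of $K$ already assumed in Theorem~\ref{th:stationary_phase}. The argument becomes circular.
\end{enumerate}

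Steps 1 and 3 also disagree about the width. With the per-step variance $\sigma^2\Delta t\sim\hbar^2$ of \eqref{eq:g_from_L}, your Step 1 total variance is $N\hbar^2$, which diverges as $N\to\infty$. With constant $\sigma$ it is $\sigma^2T$. In neither case is the spread ``fixed at $\hbar$''. Moreover, at fixed $\hbar$ the absolute width never goes to zero. So the honest conclusion is that $A/S_{\text{cl}}$ concentrates at $1$, not that a Gaussian of vanishing width tends to a delta.

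To match the paper's level of rigor, drop the stationary-phase claim about $g$. Instead, state openly what the paper assumes: the normalized $g(\cdot;b,T|a)$ has mean $S_{\text{cl}}$ and standard deviation $o(S_{\text{cl}})$, with diffusive $\sqrt{T}$ growth against linear growth in $T$. Chebyshev's inequality then gives the concentration without any Gaussianity. Be clear, though, that this is a hypothesis, not something your Step 1 supports.
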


The classical limit thus corresponds to a large-deviation principle in action space, with the classical trajectory emerging as the most probable contribution.

\begin{proof}
From Cram\'er--Rao, the minimum action resolution is $\hbar$. As $\mathcal{N} \to \infty$, the mean action $\bar{A} = \mathcal{N}\hbar$ grows without bound while the width $\sigma$ grows more slowly, so $\sigma/\bar{A} \to 0$ and $g$ concentrates around $S_{\text{cl}}$.
\end{proof}

This corollary is significant -- the classical limit condition, which might have been assumed as an axiom, is now a \emph{derived theorem}. The Lagrangian, Euler-Lagrange equations, and classical action all emerge from the properties of $g$ without circular assumptions.

% ---------------------------------------------------
\section{Emergence of Hilbert Space Structure}
\label{sec:hilbert}

With the propagator and classical limit established, we now show that the standard mathematical framework of quantum mechanics emerges automatically. Once the propagator $K(b|a)$ is established as a complex-valued kernel satisfying the semigroup composition law and time-reversal symmetry, the standard Hilbert space formulation follows naturally. Here, the Hilbert space is not introduced as a fundamental arena for states, but as the minimal mathematical structure required to represent the propagator consistently. The propagator defines a linear evolution operator $U$ acting on square-integrable functions via $(U\psi)(b) = \int K(b|a)\psi(a)\,da$, equipped with the standard $L^2$ inner product $\langle\phi|\psi\rangle = \int \phi^*(q)\psi(q)\,dq$. Linearity follows directly from the additive structure of action and the coherent combination rule for indistinguishable alternatives, not from an independent postulate.

The key result is that this evolution is \emph{unitary}. For systems with real Lagrangians where kinetic energy is even in velocity, time-reversal symmetry implies $K^*(b,T|a) = K(a,-T|b)$. Combined with the composition law, this guarantees that inner products are preserved: $\langle U\phi|U\psi\rangle = \langle\phi|\psi\rangle$ (see Appendix~\ref{app:hilbert_structure} for the complete proof). Unitarity is therefore not an independent assumption about time evolution, but a consequence of time-reversal symmetry combined with the composition law.

At this stage, no additional physical assumptions are required; what follows is a mathematical consequence of the established properties of $U(t)$. Strong continuity of the one-parameter family $\{U(t)\}$ then ensures, by Stone's theorem, the existence of a self-adjoint Hamiltonian $\hat{H}$ such that $U(t)=e^{-i\hat{H}t/\hbar}$. At this stage, all physical assumptions have been exhausted; the emergence of $\hat{H}$ is a purely mathematical consequence of unitarity and continuity.

The Hilbert space $L^2(\mathcal{Q})$, the self-adjoint generator $\hat{H}$, and unitary dynamics therefore arise as a representational framework induced by the propagator. None of these structures are postulated at the fundamental level; they are consequences of action multiplicity, coherent combination of indistinguishable alternatives, and probability normalization.

%Strong continuity of the one-parameter family $\{U(t)\}$ then ensures, by Stone's theorem~\cite{Stone1932,ReedSimon1980}, the existence of a self-adjoint Hamiltonian $\hat{H}$ such that $U(t) = e^{-i\hat{H}t/\hbar}$. The entire Hilbert space structure (the function space $L^2(\mathcal{Q})$, the self-adjoint generator $\hat{H}$, and unitary dynamics) emerges from the propagator's algebraic properties, none of these being postulated; instead they are consequences of action multiplicity and coherent combination of amplitudes.

% ---------------------------------------------------
\section{Emergent Quantum Dynamics}
\label{sec:dynamics}

The dynamical equations of quantum mechanics emerge directly from the propagator structure derived in the preceding sections. The short-time kernel $K(x,\Delta t|x')$, obtained from the composition law and Galilean invariance (Section~\ref{sec:shorttime_g}), already encodes the complete dynamics, no additional dynamical postulates are introduced. In particular, no canonical quantization rule, operator correspondence, or commutation relations are assumed at any stage of the derivation. The kernel's Gaussian form with width $\sim\sqrt{\Delta t}$ and phase $\sim (x-x')^2/\Delta t$ is not borrowed from quantum mechanics but derived from the Lévy--Khintchine theorem and symmetry principles.

Once the short-time kernel is fixed, the continuum limit is purely kinematical and does not require additional physical assumptions. Taking the continuum limit $\Delta t \to 0$ in the evolution equation $\psi(x,t+\Delta t) = \int K(x,\Delta t|x')\psi(x',t)\,dx'$ yields the
Schrödinger equation (see Appendix~\ref{app:schrodinger_derivation} for details):
\begin{equation}
i\hbar\frac{\partial\psi}{\partial t} = \left(-\frac{\hbar^2}{2m}\nabla^2 + V(x)\right)\psi.
\label{eq:Schroedinger}
\end{equation}
Equation~\eqref{eq:Schroedinger} therefore encodes the effective description of coherent propagation, not a fundamental law imposed on the system.

The operator $\hat{H} = -\frac{\hbar^2}{2m}\nabla^2 + V(x)$ is precisely the quantization of the classical Hamiltonian $H = p^2/2m + V$, obtained via Legendre transform of the emergent Lagrangian $L = \frac{1}{2}m\dot{x}^2 - V(x)$. The momentum operator $\hat{p} = -i\hbar\nabla$ arises as the generator of spatial translations defined by the propagator, rather than as a quantized classical observable. The canonical commutation relation $[\hat{x},\hat{p}] = i\hbar$ follows automatically from this representation (Appendix~\ref{app:schrodinger_derivation}).

The conceptual significance is clear: the Schrödinger equation is not a fundamental postulate but the \emph{universal continuum limit} of coherent action-weighted propagation. The logical chain
\[
g(A;b|a) \xrightarrow{\substack{\text{MaxEnt} \\ \text{+ coherence}}} K(b|a)
\xrightarrow{\substack{\text{short-time} \\ \text{limit}}} i\hbar\partial_t\psi = \hat{H}\psi
\]
shows that quantum dynamics emerges from the information-theoretic structure encoded in the density of action states. The only system-specific input is the potential $V(x)$, which must be provided in any formulation, classical or quantum.

%This completes the derivation of standard quantum mechanics from ITA. Starting from the density $g(A;b|a)$, composition law, and physical symmetries, we have obtained: complex amplitudes, the propagator, Hilbert space structure, unitary evolution, the Schrödinger equation, and canonical commutation relations. %None of these were assumed; all %emerge as necessary %consequences of action %multiplicity and coherent %combination.

This completes the reconstruction of quantum dynamics within the Information Theory of Action. Quantum mechanics appears not as a theory of microscopic objects, but as the unique consistent framework for coherent inference over indistinguishable action alternatives.

\section{Conclusion}
\label{sec:conclusion}

We have shown that the quantum formalism can be reconstructed from information-theoretic inference on action space. Starting from the density of action states $g(A;b|a)$, the composition law, and physical symmetries, the complete quantum framework (complex amplitudes, the propagator, Hilbert space, unitary evolution, and the Schrödinger equation) follows without assuming any quantum postulates.

The central insight is that quantum behavior emerges naturally
as the minimal consistent framework for dynamical inference
under finite action resolution. More precisely, the reconstruction establishes that once
(i) action additivity, (ii) finite inferential resolution,
and (iii) probability normalization are imposed, the
complex-amplitude representation and unitary evolution
arise as the minimal continuous realization compatible
with these requirements. The result should therefore be
understood as identifying the structural origin of the
quantum dynamical framework, rather than replacing or
modifying its empirical content.
The composition law combined with diffusive scaling implies that infinitesimal action contributions are always indistinguishable: the ratio $|\delta S|/\Delta(\delta S)_{\min} \propto \sqrt{dt} \to 0$ as $dt \to 0$. This universal infinitesimal indistinguishability forces coherent combination of amplitudes at every step, making the complex phase $e^{iA/\hbar}$ a natural consequence rather than an independent postulate. Quantum mechanics emerges as the minimal consistent
framework when action values cannot be resolved within
the assumptions adopted here. This is a structural result, not a dynamical assumption.

The Lagrangian $L = T - V$ is not assumed but emerges from the short-time limit, with $T = \frac{1}{2}mv^2$ derived from Galilean invariance. The Hamiltonian, Hilbert space structure, and canonical commutation relations all follow from the propagator's semigroup properties. This does not reduce the explanatory power of the framework; rather, it mirrors the situation in both classical and quantum mechanics, where interactions must always be specified.

The scale $\hbar$ enters as an empirical parameter, determined by experiment rather than derived from first principles. We have shown that the same $\eta = 1/\hbar$ appears in three distinct roles, MaxEnt conjugate variable, Cramér--Rao resolution bound, and complex phase, unified through the physical requirement that these scales match. This clarifies the status of $\hbar$: it sets the action scale below which contributions become indistinguishable, but its numerical value is a fact about our universe, not a theoretical prediction.

%The foundational framework established here opens the path for a following paper, where we develop the ITA treatment of measurement, the Born rule, composite systems, entanglement, and decoherence. These phenomena, which involve system-apparatus interactions and partial information, find natural expression in the information-theoretic language of ITA.

The central message of Information Theory of Action is that quantum mechanics is not a deformation of classical mechanics, nor a collection of independent postulates, but the unique consistent framework for dynamical inference when action values cannot be resolved arbitrarily finely. Once action additivity, finite resolution, and probability
normalization are accepted, complex amplitudes,
interference, and unitary evolution arise naturally within
this framework. In this sense, quantum mechanics is a theorem about rational inference under finite action resolution.

% ---------------------------------------------------
\begin{acknowledgments}
FSL thanks UNICAMP Postdoctoral Researcher Program for financial support. MCO acknowledges partial financial support from the National Institute of Science and Technology for Applied Quantum
Computing through CNPq (Process No.~408884/2024-0) and from the São Paulo Research Foundation (FAPESP), through the Center
for Research and Innovation on Smart and Quantum Materials (CRISQuaM, Process No.~2013/07276-1).
\end{acknowledgments}

% ---------------------------------------------------
\appendix

%------------------------------------------------------
\section{Existence, Uniqueness, and Construction of the Density of Action States}
\label{app:existence_uniqueness}
A natural question arises: does a function $g(A;b|a)$ satisfying all properties of Axiom~\ref{ax:density} actually exist? And if so, is it unique? We address these questions and clarify why our approach is not circular.

\subsubsection{The composition law follows from action additivity alone}

The composition law~\eqref{eq:composition_g} is sometimes mistakenly thought to encode ``path information.'' In fact, it is a \emph{mathematical consequence} of a single physical principle:

\begin{quote}
\textbf{Action Additivity:} For sequential processes $a \to b \to c$, the total action is the sum of the partial actions: $A_{a \to c} = A_{a \to b} + A_{b \to c}$.
\end{quote}

This principle holds for action as a physical quantity, independently of whether trajectories are resolved or even assumed to exist. If $g(A;b|a)\,dA$ represents the weight of processes connecting $a$ to $b$ with total action in the interval $[A,A+dA]$, then the composition law follows directly from the additive structure of the extensive variable $A$.

In this minimal sense, the derivation relies only on measure composition and convolution, and does not assume statistical independence, underlying stochastic dynamics, or any specific microscopic model. Explicitly, one finds
\begin{align}
g(A;c|a) &= \int db \int dA_1\, dA_2\,
g(A_1;b|a)\, g(A_2;c|b)\nonumber\\&\times
\delta(A - A_1 - A_2)
\nonumber\\
&= \int db \int dA'\,
g(A';b|a)\, g(A-A';c|b).
\label{eq:composition_derivation}
\end{align}
Here the Dirac delta enforces the additivity constraint $A=A_1+A_2$, while $A'$ denotes a dummy integration variable introduced for notational convenience.

The integration over the intermediate configuration $b$ reflects a completeness relation over admissible intermediate states and does not presuppose the existence of resolved intermediate trajectories. The composition law is therefore a direct consequence of action additivity alone, rather than an additional dynamical assumption.

\subsubsection{Existence for systems with classical limits}

\begin{theorem}[Existence and conditional uniqueness of $g$]
\label{th:existence_g}
For any physical system for which a short-time density $g(A;b,\Delta t|a)$ can be specified satisfying the axioms of composition, locality, and finite variance, there exists a unique extension $g(A;b,T|a)$ to all times $T>0$ generated by the specified short-time density. This extension satisfies all conditions of Axiom~\ref{ax:density} and is unique within the equivalence class determined by the short-time representation.
\end{theorem}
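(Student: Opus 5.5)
The plan is to build the finite-time density by iterating the composition law~\eqref{eq:composition_g} on a time grid, and then to show that the result does not depend on the grid. Fix the short-time density $g_{\Delta t}(A;b|a)\equiv g(A;b,\Delta t|a)$. For $T=N\Delta t$, define
\begin{equation}
g(A;b,T|a)=\int \prod_{k=1}^{N-1}dq_k\,\prod_{k=1}^{N}dA_k\;\delta\!\Big(A-\sum_{k}A_k\Big)\prod_{k=1}^{N}g_{\Delta t}(A_k;q_k|q_{k-1}),
\end{equation}
with $q_0=a$ and $q_N=b$. This is the $N$-fold product of the integral operator $\mathcal{G}_{\Delta t}$, which acts on functions of $(A,q)$ by convolution in $A$ and kernel multiplication in $q$. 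The composition law then holds by construction for all $T_1,T_2$ on the grid, because the product is associative (Fubini, using nonnegativity). To reach arbitrary $T>0$, I would treat the short-time specification as a one-parameter family $\{g_s\}_{s\le\Delta t}$ that is itself consistent under composition for small $s$; this is what ``satisfying the axioms of composition'' means for the short-time data. Then $\mathcal{G}_{T}=\mathcal{G}_{s}^{\,T/s}$ is independent of the chosen grid, since two grids have a common refinement.

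Next I would verify the remaining items of Axiom~\ref{ax:density}.

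Positivity is immediate from the construction. Integrability follows because $\int dA$ of a convolution factorizes into a product of $A$-integrals. Hence $\tilde g_T(A)=\int db\, g(A;b,T|a)$ obeys the semigroup relation~\eqref{eq:semigroup_action}, and its total mass is $m(T)=m(\Delta t)^{N}<\infty$, provided the short-time mass is bounded uniformly in the initial point.

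Finite variance follows from the same structure. Under convolution, the means and variances of the normalized action marginals add. The second moment is therefore at most $N$ times the short-time variance plus the square of the accumulated mean, which is finite by the finite-variance assumption on $g_{\Delta t}$.

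Time-reversal symmetry follows from the symmetry of each factor. Relabeling $q_k\to q_{N-k}$ and $A_k\to A_{N+1-k}$ maps the integrand to itself, using the commutativity of addition in $A$.

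Locality holds for short $T$, since a chain of $N$ steps each of range $v_{\max}\Delta t$ has range $v_{\max}T$.

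Uniqueness would be argued as follows. Suppose $g'$ is any family that satisfies the axioms and agrees with $g$ at time $\Delta t$. By the composition law, $g'(\cdot,N\Delta t)=\mathcal{G}_{\Delta t}^N=g(\cdot,N\Delta t)$. For non-grid times, apply the same argument to the short-time family $\{g_s\}$ on a refined grid. The phrase ``unique within the equivalence class'' I would interpret as follows: two short-time representations that differ by an action-space gauge (for example, a shift $A\to A+\Lambda(b)-\Lambda(a)$, which telescopes under composition) generate extensions related by the same gauge. Checking this telescoping closes the statement.

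The main obstacle is the passage from grid times to a genuine continuous-time family. Specifically, one needs the short-time data to form a (strongly) continuous convolution semigroup, so that $\mathcal{G}_s$ has a well-defined $T/s$-th power independent of the grid and $\mathcal{G}_s\to$ identity as $s\to0$. I would first try to borrow the standard theory of convolution semigroups, as in the Lévy--Khintchine setting of Proposition~\ref{prop:gaussian_short_time}: the Fourier transform in $A$, $\hat g_T(k;b|a)=\int dA\, g\,e^{ikA}$, turns the composition law into an operator semigroup $\hat{\mathcal G}_T(k)$ on functions of $q$. Existence and uniqueness then reduce to the existence of a generator $\hat{\mathcal G}_T(k)=e^{T\mathcal{L}(k)}$, determined by the short-time expansion $\hat{\mathcal G}_{\Delta t}(k)=1+\Delta t\,\mathcal{L}(k)+o(\Delta t)$. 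Making this rigorous (domain and boundedness of $\mathcal{L}(k)$, and the Trotter-type convergence of $\hat{\mathcal G}_{T/N}^N$) is where I expect the real work, and possibly extra regularity assumptions, to be needed.
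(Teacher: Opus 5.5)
Your proposal follows essentially the paper's route: build $g(\cdot,T)$ by iterating the composition law exactly as in Eq.~\eqref{eq:g_iteration}, read the axioms off the construction, and get uniqueness from the composition law. The continuum step differs only in detail: the paper just asserts that the $N\to\infty$ limit exists by standard convolution-semigroup results, whereas you assume an exactly consistent short-time family $\{g_s\}_{s\le\Delta t}$ and rightly flag the Trotter-type limit as the real work otherwise.

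One slip to fix. The $b$-marginal $\tilde g_T(\cdot|a)$ is a convolution semigroup only if the short-time marginal does not depend on its starting point, which fails as soon as $\bar A(b,\Delta t|a)$ contains a non-constant potential. So for integrability and finite variance, replace ``means and variances add'' by uniform-in-endpoint bounds on the short-time mass and second moment together with $(\sum_k A_k)^2\le N\sum_k A_k^2$. Also note that time-reversal of the extension needs time-reversal of the short-time density as an extra input, since the theorem does not list it among its hypotheses.
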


\begin{proof}[Proof sketch]
The proof is constructive. Let $g(A;b,\Delta t|a)$ be a short-time density satisfying:
\begin{enumerate}[]
    \item \emph{Composition}: the density admits consistent convolution under sequential time steps;
    \item \emph{Locality}: the weight is negligible outside a finite propagation scale over the interval $\Delta t$;
    \item \emph{Finite variance}: $\int A^2\, g(A;b,\Delta t|a)\, dA < \infty$.
\end{enumerate}
For a finite time $T=N\Delta t$, the corresponding density is defined by iterating the composition law,
\begin{align}
g(A;b,T|a) &= \int \prod_{k=1}^{N-1} dq_k  \int \prod_{k=1}^{N} dA_k\, \delta\!\left(A - \sum_{k=1}^{N} A_k\right)\nonumber\\&\times
   \prod_{k=1}^{N} g(A_k; q_k, \Delta t | q_{k-1}),
\label{eq:g_iteration}
\end{align}
where $q_0=a$ and $q_N=b$. The Dirac delta enforces the additivity of the total action under sequential composition.

The existence of the limit $N\to\infty$ follows from standard results on convolution semigroups of measures with finite second moment. This limit defines a consistent density $g(A;b,T|a)$ for all $T>0$, which by construction satisfies the axioms of normalization,
composition, and locality.

The extension is unique once the short-time density is fixed, in the sense that any two finite-time densities generated from the same short-time representation coincide for all $T$.
\end{proof}

\textbf{Specification of physical systems.}
For a given physical system, the short-time density may be specified by:
\begin{itemize}
    \item \emph{Symmetry requirements} (e.g., translation, rotation, or boost invariance);
    \item \emph{Classical limit matching}: in systems admitting a classical description, the short-time density is required to be peaked at the classical action value in the regime $\eta A \gg 1$. This condition fixes the asymptotic behavior of $g$ without assuming a Lagrangian structure \emph{a priori};
    \item \emph{Direct specification} of the short-time density when appropriate.
\end{itemize}
Once the short-time form is fixed, the theory determines the finite-time evolution uniquely through the composition law. The emergence of the effective Lagrangian from the short-time structure of $g$ is discussed in Sec.~\ref{sec:lagrangian_emergence}.

\subsubsection{Conditional uniqueness}

\begin{theorem}[Conditional uniqueness of $g$]
\label{th:uniqueness_g}
For a given short-time density $g(A;b,\Delta t|a)$ and indistinguishability scale parameter $\eta$, there exists a unique finite-time extension $g(A;b,T|a)$ generated by the composition law for all $T>0$. This uniqueness holds within the equivalence class determined by the chosen short-time representation.
\end{theorem}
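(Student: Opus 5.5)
The plan is to prove uniqueness by showing that any family satisfying the composition law \eqref{eq:composition_g} and agreeing with the given short-time density is forced to coincide with the iterated construction \eqref{eq:g_iteration}. Existence is already supplied by Theorem~\ref{th:existence_g}, so only uniqueness needs work. I will treat $g(\cdot;\cdot,T|\cdot)$ as a kernel on $\mathcal{Q}\times\mathcal{A}$ and write composition as an operation $\star$:
\[
(g_1\star g_2)(A;c|a)=\int db\int dA'\,g_1(A';b|a)\,g_2(A-A';c|b).
\]
Fubini together with the finite-mass assumption of Axiom~\ref{ax:density}(a) shows that $\star$ is associative. The composition law then reads $g_{T_1+T_2}=g_{T_1}\star g_{T_2}$.

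\textbf{Step 1 (dyadic/lattice times).} Suppose $g$ and $h$ are two extensions with $g_{\Delta t}=h_{\Delta t}$. Induction on $N$, using associativity, gives
\[
g_{N\Delta t}=g_{\Delta t}^{\star N}=h_{\Delta t}^{\star N}=h_{N\Delta t}.
\]
This is exactly the content of \eqref{eq:g_iteration}: the delta constraint is the $N$-fold $\star$-product.

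\textbf{Step 2 (all $T>0$).} The short-time representation is specified for every sufficiently small step, or equivalently as a limit $\Delta t\to0$, which is how the phrase ``equivalence class determined by the chosen short-time representation'' should be read. For $T=N\Delta t+r$ with $0\le r<\Delta t$, composition gives $g_T=g_{\Delta t}^{\star N}\star g_r$, and the same holds for $h$. It therefore suffices that $g_r=h_r$ for small $r$, which is part of the short-time data. If instead only a single $\Delta t$ is fixed, I would use continuity in $T$ to pass from the dense set of times reachable by refining $\Delta t$ to all $T$.

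\textbf{Step 3 (continuity).} I would take Fourier transforms in the action variable, $\hat g_T(k;b|a)=\int dA\,e^{ikA}g(A;b,T|a)$. Under this transform $\star$ becomes kernel composition on $\mathcal{Q}$ at each fixed $k$. By Axiom~\ref{ax:density}(a) and (e), $\hat g_T$ is $C^2$ in $k$. The marginal semigroup $\tilde g_T$ of Proposition~\ref{prop:gaussian_short_time} converges weakly to $\delta$ as $T\to0^+$, and this gives uniform control that makes $T\mapsto\hat g_T$ continuous. Two continuous families that agree on a dense set of times then agree everywhere.

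\textbf{Main obstacle.} The hard part is Step 3. It requires an actual topology in which the $\star$-semigroup is continuous, while the axioms only give integrability and finite variance pointwise in $(a,b)$. I would first try to work in $L^1(\mathcal{Q}\times\mathcal{A})$ with uniformly bounded mass, using the locality axiom (d) to keep the configuration integrals over compact regions. I would then invoke standard uniqueness of strongly continuous semigroups with a given generator. If this fails, I would fall back to the weaker statement that uniqueness holds for lattice times $T\in\Delta t\,\mathbb{N}$, which is what Step 1 establishes rigorously.
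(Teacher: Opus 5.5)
Your Step 1, together with the first reading in Step 2 (short-time data given for every sufficiently small step, so that $g_T=g_{\Delta t}^{\star N}\star g_r$), is the paper's argument. Its proof sketch says that the iterated convolution of Eq.~\eqref{eq:g_iteration} fixes $g_T$ once the short-time element is fixed. It then asserts uniqueness ``for each $T$'' without treating $T\notin\Delta t\,\mathbb{N}$, so your remainder decomposition is, if anything, more careful. The paper's further paragraph (Lévy--Khintchine, locality, mean action) constrains the \emph{form} of the short-time element and plays no role in uniqueness of the extension, so you lose nothing by omitting it.

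The contingency in Steps 2--3 would fail, and not for the topological reason you anticipate. From a single $g_{\Delta t}$, the composition law does not determine $g_{\Delta t/2}$, because roots under $\star$ need not be unique. There is then no dense set of times to extend from, and continuity in $T$ cannot help.

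Here is a concrete example. Take $\mathcal{Q}$ with three points, let $C$ be the cyclic permutation matrix, and set $Q=\lambda(C-I)$ and $Q'=Q-\tfrac{2\pi}{\sqrt{3}}(C-C^2)$ with $\lambda\ge 2\pi/\sqrt{3}$. Both are Markov generators (nonnegative off-diagonal entries, zero row sums), and they commute. Since $C-C^2$ has eigenvalues $0,\pm i\sqrt{3}$, one finds $e^{Q'}=e^{Q}$ but $e^{Q'/2}\neq e^{Q/2}$. Let $\rho_T$ be a Gaussian convolution semigroup in $A$, and compare $g(A;b,T|a)=[e^{TQ}]_{ab}\,\rho_T(A)$ with its analogue built from $Q'$. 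Both families are nonnegative, have finite variance, satisfy the composition law, and are continuous in $T$. They also share the same marginal $\rho_T\Rightarrow\delta$. They coincide at $\Delta t=1$ and at all integer times, yet differ at $T=1/2$. (They are not time-reversal symmetric, but your argument never uses Axiom~\ref{ax:density}(c).)

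The example also shows two further things. The marginal convergence you invoke in Step 3 gives no control over the $b$-dependence. And ``uniqueness of a semigroup with given generator'' is unavailable, because a single time-$\Delta t$ element does not fix the generator. So under the single-step reading, lattice uniqueness is the correct (sharp) statement rather than a fallback. Under the all-small-steps reading, Step 3 is unnecessary.
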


\begin{proof}[Proof sketch]
Uniqueness follows from the structure of the composition law. Once the short-time form $g(A;b,\Delta t|a)$ is fixed, the finite-time density is obtained by iterated convolution as in Eq.~\eqref{eq:g_iteration}. The composition law defines a convolution semigroup, and the corresponding functional equation admits a unique solution for each $T$, given the initial short-time element.

The functional form of the short-time density itself is strongly constrained by the axioms. In particular:
\begin{enumerate}
    \item The composition law~\eqref{eq:composition_g} defines a convolution equation. Within the class of finite-variance convolution semigroups, the Gaussian family constitutes the unique stable solution, as established by the Lévy--Khintchine theorem (see Theorem~\ref{th:gaussian_unique}). 
    \item Locality and finite variance [Axiom~\ref{ax:density}(d,e)] fix the width and effective support of the short-time density.
    \item The location of the peak, characterized by the mean action $\bar A(b,\Delta t|a)$, is determined by the physical dynamics of the system through its endpoint dependence.
\end{enumerate}

Different choices of the mean action $\bar A(b,\Delta t|a)$ therefore correspond to different physical systems. Once this function is specified (e.g., through symmetry requirements or classical-limit matching), the short-time density is fully determined, and the composition law uniquely generates the finite-time density for all $T>0$.
\end{proof}

\subsubsection{Why the approach is not circular}

A natural question concerns the logical status of the path integral within the present framework: does the theory derive the path integral, or does it assume a path-integral structure through $g$?

The answer is that the theory is \emph{not circular} because:
\begin{enumerate}
    \item \textbf{$g$ is defined by abstract properties}, not by reference to paths. The Jacobian interpretation discussed in the next section is a \emph{theorem} that applies when trajectories exist, and is not part of the definition of $g$.
    
    \item \textbf{The composition law follows from action additivity}, a physical principle that does not presuppose trajectories.
    
    \item \textbf{Existence and conditional uniqueness follow from the axioms}. Classical-limit considerations are used only to select the physically relevant short-time representation, not as an additional assumption.
    
    \item \textbf{The Feynman integral emerges as a special case}: when a trajectory space exists, the axiomatic density $g$ coincides with the Jacobian of the path-to-action transformation. This identification is a consequence of the framework, not an assumption.
\end{enumerate}

The logical flow is:
\begin{eqnarray}
&&\boxed{\text{Axioms for } g}
\xrightarrow{\text{MaxEnt + CR}}
\boxed{\text{Propagator } K}
\nonumber\\
&&\xrightarrow{\text{only if a trajectory space exists}}
\boxed{\text{Feynman integral}}
\label{eq:logical_flow}
\end{eqnarray}

The path integral is therefore a \emph{downstream consequence}, not an upstream input.

% ---------------------------------------------------
\section{Trajectory Representation of the Density of Action States}
\label{app:jacobian}

When a system admits a space of trajectories, the axiomatically defined $g(A;b|a)$ has a concrete representation as a Jacobian. This is not a definition but a \emph{derived result}.

\begin{theorem}[Jacobian representation]
\label{th:jacobian}
For systems where a trajectory space $\Gamma_{a\to b}$ and an action functional $S[\gamma]$ can be defined, the density of action states satisfying Axiom~\ref{ax:density} is given by
\begin{equation}
g(A;b|a) = \int_{\Gamma_{a\to b}} \delta(A - S[\gamma])\, \mathcal{D}\gamma.
\label{eq:g_jacobian}
\end{equation}
\end{theorem}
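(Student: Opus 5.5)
The idea is to show that the right-hand side of \eqref{eq:g_jacobian}, which we denote $g_\Gamma(A;b|a)$, satisfies every clause of Axiom~\ref{ax:density} and is built from the same short-time density. Uniqueness (Theorem~\ref{th:existence_g} and Theorem~\ref{th:uniqueness_g}) then forces $g_\Gamma = g$. To make sense of $\mathcal{D}\gamma$, we will not treat it as a primitive measure. Instead we define it as the $N\to\infty$ limit of time-sliced measures $\prod_{k=1}^{N-1}dq_k$ on broken paths $q_0=a,\dots,q_N=b$, weighted by the short-time densities. This is exactly the construction used in \eqref{eq:g_iteration}.

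The key steps run as follows.

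First, take the path action to be additive over slices, $S[\gamma]=\sum_k S_k(q_k,q_{k-1})$. Then use the representation $\delta(A-S[\gamma])=\int\prod_k dA_k\,\delta(A-\sum_k A_k)\prod_k\delta(A_k-S_k)$. This puts the path expression into the same form as \eqref{eq:g_iteration}.

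Second, check the short-time step. For a single slice, $g_\Gamma(A;q_k,\Delta t|q_{k-1})$ is the density of action values among the slice trajectories connecting $q_{k-1}$ to $q_k$, counted with the slice measure. Including the fluctuation of the slice action about $\bar A$, this is exactly the short-time density $g(A_k;q_k,\Delta t|q_{k-1})$. Substituting it shows that the time-sliced $g_\Gamma$ coincides term by term with \eqref{eq:g_iteration}.

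Third, verify the remaining axioms for $g_\Gamma$:
\begin{enumerate}[label=(\roman*)]
\item Positivity is immediate, since $\delta\ge0$ and $\mathcal{D}\gamma\ge0$.
\item Composition follows by splitting each path at the time $T_1$, so that $\Gamma_{a\to c}=\bigcup_b\Gamma_{a\to b}\times\Gamma_{b\to c}$ and $\mathcal{D}\gamma=\mathcal{D}\gamma_1\,db\,\mathcal{D}\gamma_2$. Additivity of $S$ then reproduces \eqref{eq:composition_g} through the delta identity in \eqref{eq:composition_derivation}.
\item Time-reversal symmetry follows from the bijection $\gamma(t)\mapsto\gamma(T-t)$. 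It preserves the sliced measure and leaves $S$ unchanged, because the kinetic term is even in velocity.
\item Locality and finite variance are inherited from the slice densities.
\end{enumerate}

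Finally, pass to the limit $N\to\infty$ using the convolution-semigroup existence statement of Theorem~\ref{th:existence_g}. Since both $g$ and $g_\Gamma$ are generated by the same short-time element, Theorem~\ref{th:uniqueness_g} gives equality for all $T$. Integrating \eqref{eq:g_jacobian} against $e^{i\eta A}$ then yields $K=\int\mathcal{D}\gamma\,e^{i\eta S[\gamma]}$, which is the Feynman form. This last statement is a corollary rather than part of the proof.

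The main obstacle is the second step: identifying the single-slice pushforward of the path measure with the axiomatic short-time density, and making the continuum limit $\mathcal{D}\gamma$ meaningful. My plan is to sidestep rigor about the continuum measure by \emph{defining} $\mathcal{D}\gamma$ through the sliced limit. The theorem then holds in the sense of equality of the sliced approximants for every $N$, together with convergence guaranteed by finite variance. I will state explicitly that any measure $\mathcal{D}\gamma$ whose one-slice pushforward differs from the short-time $g$ would define a different physical system, and so lies in a different equivalence class.
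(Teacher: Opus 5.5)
Your skeleton matches the paper's own proof. You check the clauses of Axiom~\ref{ax:density} for the right-hand side of \eqref{eq:g_jacobian}, then invoke Theorem~\ref{th:uniqueness_g}. You also correctly see that this appeal needs the two short-time elements to coincide, which the paper's sketch never verifies. But your Step~2, which is supposed to supply exactly that, fails. On broken paths the slice action $S_k(q_k,q_{k-1})$ is fixed by the slice endpoints. So after Step~1 every $A_k$ is pinned by $\delta(A_k-S_k)$. By contrast, \eqref{eq:g_iteration} integrates $A_k$ against $g(A_k;q_k,\Delta t|q_{k-1})$, which the paper takes to be a Gaussian about $L\,\Delta t$ of width $\sigma\sqrt{\Delta t}\sim\hbar$ at \emph{fixed} endpoints (Eqs.~\eqref{eq:g_from_L} and \eqref{eq:g_free}). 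A weight depending on $(q_{k-1},q_k)$ cannot turn a point mass into that Gaussian. The claimed term-by-term agreement with \eqref{eq:g_iteration} is therefore false: on a broken path there is no ``fluctuation of the slice action about $\bar A$'' to include.

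Nor can you close the gap by enriching the slices or choosing $\mathcal{D}\gamma$ differently while keeping $S$ the genuine action functional. For the free particle, Cauchy--Schwarz gives $\int_0^{\Delta t}\tfrac{1}{2} m\dot x^2\,dt\ge m(x_b-x_a)^2/(2\Delta t)=S_{\mathrm{cl}}$ for every path with these endpoints. Hence $\int\delta(A-S[\gamma])\,\mathcal{D}\gamma$ is supported on $A\ge S_{\mathrm{cl}}$ for \emph{every} nonnegative path measure. Meanwhile \eqref{eq:g_free} gives nonzero weight to $A<S_{\mathrm{cl}}$, so the hypothesis of Theorem~\ref{th:uniqueness_g} cannot be met.

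Your fallback is to define $\mathcal{D}\gamma$ as the sliced measure ``weighted by the short-time densities.'' That works only if each slice carries an independent label $A_k$ distributed according to $g$ and you \emph{set} $S[\gamma]:=\sum_k A_k$. Then \eqref{eq:g_jacobian} is just a relabelling of \eqref{eq:g_iteration} and says nothing about the given $\Gamma_{a\to b}$ and $S$.

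A genuine identification would require the fixed-endpoint short-time density to be a weighted point mass $w(b,a)\,\delta\bigl(A-S_{\mathrm{cl}}(b,\Delta t|a)\bigr)$, as in standard time slicing. That contradicts the finite fixed-endpoint width the paper assigns to $g$. The paper's sketch hides the same mismatch in its closing appeal to conditional uniqueness.
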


\begin{proof}[Proof sketch]
Here $\mathcal{D}\gamma$ denotes a formal measure on path space, assumed to exist only for systems admitting a well-defined trajectory representation. We verify that the right-hand side of Eq.~\eqref{eq:g_jacobian} satisfies all conditions of
Axiom~\ref{ax:density}:
\begin{enumerate}[label=(\arabic*)]
    \item \textbf{Positivity and integrability:} The Dirac delta integrated     over paths yields a non-negative measure. Integration over $A$ reproduces the total path measure, which is assumed finite for     physical systems admitting a well-defined path description.
    
    \item \textbf{Composition:} This follows from path concatenation. If $\gamma_1: a \to b$ and $\gamma_2: b \to c$ are paths, then the concatenated path $\gamma_1 * \gamma_2: a \to c$ satisfies $S[\gamma_1 * \gamma_2] = S[\gamma_1] + S[\gamma_2]$. Integration over
    the intermediate configuration $b$ and the additivity of the action yield the convolution law~\eqref{eq:composition_g}.
    
    \item \textbf{Time-reversal:} For time-symmetric Lagrangians $L(q,\dot{q}) = L(q,-\dot{q})$, the map $\gamma(t) \mapsto \gamma(T-t)$ defines a bijection $\Gamma_{a\to b} \leftrightarrow \Gamma_{b\to a}$ that preserves the action.
    
    \item \textbf{Locality:} For sufficiently short times, admissible paths are confined within a finite neighborhood determined by the characteristic propagation scale over the interval, ensuring that $g(A;b|a)$ is negligible for large separations $|b-a|$.
    
    \item \textbf{Classical limit:} In the regime of vanishing action resolution ($\eta A \gg 1$), contributions concentrate around the classical trajectory, and the density approaches $g(A;b|a) \to \delta(A - S_{\mathrm{cl}})$.
\end{enumerate}
By conditional uniqueness (Theorem~\ref{th:uniqueness_g}), this representation coincides with the axiomatically defined density of action states.
\end{proof}

The Jacobian formula~\eqref{eq:g_jacobian} establishes a fundamental relation between path space and action space:
\begin{equation}
\int_{\Gamma_{a\to b}} F(S[\gamma])\, \mathcal{D}\gamma
=
\int_{\mathcal{A}} F(A)\, g(A;b|a)\, dA,
\label{eq:path_action_relation}
\end{equation}
valid for any function $F$. This relation expresses a change of variables from the space of paths to the space of actions, with $g(A;b|a)$ playing the role of a Jacobian that answers the question: \emph{how many paths contribute to each action value?}

\begin{remark}[Analogy with ordinary calculus]
This construction is directly analogous to ordinary calculus. When changing variables from $x$ to $y=f(x)$,
\[
\int F(x)\, dx
=
\int F(y)\,
\left|\frac{dx}{dy}\right| dy.
\]
The Jacobian $|dx/dy|$ counts how many $x$-values map to each $y$-value. Similarly, $g(A;b|a)$ counts how many trajectories map to each value of the action.
\end{remark}

\begin{remark}[Relationship to the Feynman integral]
\label{rem:feynman_emergence}
When the Jacobian representation holds, substituting Eq.~\eqref{eq:g_jacobian} into the propagator definition Eq.~\eqref{eq:K_def} yields
\begin{align}
K(b|a)
&= \int dA\, g(A;b|a)\, e^{iA/\hbar}
\nonumber\\
&= \int dA
   \int_{\Gamma_{a\to b}} \delta(A - S[\gamma])\, \mathcal{D}\gamma\,
   e^{iA/\hbar}
\nonumber\\
&= \int_{\Gamma_{a\to b}} e^{iS[\gamma]/\hbar}\, \mathcal{D}\gamma.
\ 
\label{eq:feynman_emergence}
\end{align}
This expression coincides with the Feynman path integral. The path integral therefore \emph{emerges} from the axioms of the theory when a trajectory space exists; it is not assumed. The logical priority is clear: the density of action states $g$ is fundamental, while the path
integral is a derived construct.
\end{remark}

%--------------------------------------------------
\section{Maximum Entropy Derivation on $(A,b)$}
\label{app:maxent}

In this appendix we derive Eq.~\eqref{eq:MaxEnt_joint} by explicit maximization
of the relative entropy functional under normalization and mean-action constraints,
and discuss its justification via the Shore–Johnson axioms.

\subsection{Joint probability density}

For a fixed initial configuration $a$, the elementary alternatives of the system are pairs $(A,b)$ with $A\in\mathcal{A}$ and $b\in\mathcal{Q}$.
We therefore introduce a joint probability density $P(A,b|a)$ on $\mathcal{A}\times\mathcal{Q}$.

\begin{definition}[Joint relative entropy]
\label{def:rel_entropy}
Given the density of action states $g(A;b|a)$, the relative entropy of $P$ with respect to $g$ is
\begin{equation}
H[P|g]
=
-\int_{\mathcal{Q}} db \int_{\mathcal{A}} dA\,
P(A,b|a)\,\ln\frac{P(A,b|a)}{g(A;b|a)}.
\label{eq:joint_entropy}
\end{equation}
\end{definition}

\begin{remark}
The function $g(A;b|a)$ plays the role of a prior measure on the joint space $(A,b)$,
encoding the physically available microstructure (multiplicity) of action states.
The relative entropy $H[P|g]$ measures how far a given probabilistic description $P$ deviates from this underlying multiplicity, in the sense of Jaynes and Shore--Johnson~\cite{ShoreJohnson1980}.
\end{remark}

\subsection{Jaynesian inference and constraints}

We now impose three conditions:
\begin{enumerate}[label = (\arabic*)]
    \item $P(A,b|a)\ge0$;
    \item normalization:
          \begin{equation}
          \int_{\mathcal{Q}} db \int_{\mathcal{A}} dA\, P(A,b|a)=1;
          \label{eq:norm_joint}
          \end{equation}
    \item constraint on mean action:
          \begin{equation}
          \int_{\mathcal{Q}} db \int_{\mathcal{A}} dA\, A\, P(A,b|a)=\bar{A}.
          \label{eq:mean_action}
          \end{equation}
\end{enumerate}

\begin{theorem}[MaxEnt solution on $(A,b)$]
\label{th:MaxEnt}
The distribution $P(A,b|a)$ that maximizes $H[P|g]$ in Eq.~\eqref{eq:joint_entropy} subject to
\eqref{eq:norm_joint} and \eqref{eq:mean_action} is
\begin{equation}
P(A,b|a)
=
\frac{g(A;b|a)e^{-\eta A}}{Z(\eta;a)},
\label{eq:joint_MaxEnt}
\end{equation}
where
\begin{equation}
Z(\eta;a)
=
\int_{\mathcal{Q}} db \int_{\mathcal{A}} dA\,
g(A;b|a)e^{-\eta A}
\label{eq:Z_eta}
\end{equation}
and $\eta\in\mathbb{R}$ is a Lagrange multiplier conjugate to the action.
\end{theorem}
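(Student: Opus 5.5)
The natural approach is the calculus of variations with Lagrange multipliers, followed by a convexity argument to show that the stationary point is the global maximizer. Form the Lagrangian functional
\begin{equation*}
\mathcal{L}[P]=H[P|g]-\lambda_0\left(\int db\,dA\,P-1\right)-\eta\left(\int db\,dA\,A\,P-\bar A\right),
\end{equation*}
with multipliers $\lambda_0$ for the normalization \eqref{eq:norm_joint} and $\eta$ for the mean-action constraint \eqref{eq:mean_action}. Take the first variation with respect to $P(A,b|a)$ pointwise on $\mathcal{A}\times\mathcal{Q}$, restricting to the support of $g$; outside that support the integrand forces $P=0$, otherwise $H[P|g]=-\infty$. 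This gives
\begin{equation*}
-\ln\frac{P}{g}-1-\lambda_0-\eta A=0,
\end{equation*}
so that $P=g\,e^{-1-\lambda_0}e^{-\eta A}$. Eliminating $\lambda_0$ through normalization yields $e^{1+\lambda_0}=Z(\eta;a)$, which is exactly \eqref{eq:joint_MaxEnt} with \eqref{eq:Z_eta}. Nonnegativity is automatic because $g\ge0$ by Axiom~\ref{ax:density}(a).

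Next, show that this stationary point is the unique maximizer without relying on second variations. Use the Gibbs inequality. For any admissible $Q$ satisfying both constraints, with $P_\eta$ the candidate,
\begin{equation*}
H[Q|g]=-\int Q\ln\frac{Q}{P_\eta}-\int Q\ln\frac{P_\eta}{g}=-D(Q\|P_\eta)+\ln Z+\eta\bar A .
\end{equation*}
The last two terms are the same for every admissible $Q$ because $\int Q\,A=\bar A$. Since $D\ge0$, with equality iff $Q=P_\eta$ almost everywhere, $P_\eta$ is the unique maximizer and $H[P_\eta|g]=\ln Z+\eta\bar A$.

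The main obstacle is well-posedness: showing that $Z(\eta;a)$ is finite and that a real $\eta$ exists matching a prescribed $\bar A$. The plan is to regard $\ln Z$ as a cumulant generating function. Its derivative is $\partial_\eta\ln Z=-\langle A\rangle_\eta$, and its second derivative is $\partial_\eta^2\ln Z=\mathrm{Var}_\eta(A)>0$ unless the marginal of $g$ is concentrated at a single action. So $\langle A\rangle_\eta$ is strictly decreasing in $\eta$ on the open interval where $Z<\infty$, and the map $\eta\mapsto\bar A$ is injective. Finiteness of $Z$ near $\eta=0$ follows from Axiom~\ref{ax:density}(a), together with integrability over $b$, which locality (d) supplies for short times. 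For general $\eta$ I would simply state, as the theorem implicitly does, that $\bar A$ is restricted to the range of $\langle A\rangle_\eta$ over the domain where $Z$ converges; the Gaussian marginals of Proposition~\ref{prop:gaussian_short_time} give convergence for all real $\eta$ in the relevant regime. Finally, I would note that the Shore--Johnson axioms justify using $H[P|g]$ as the functional, so no further assumptions are needed.
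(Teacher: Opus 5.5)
Your argument is correct, and its core is the same as the paper's. The paper forms the same Lagrangian functional, sets the pointwise functional derivative to zero, exponentiates, identifies $Z(\eta;a)=e^{1+\lambda}$, and then simply asserts that $\eta$ is fixed by the constraint on $\bar A$.

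Your additions go beyond the paper in a useful way. The paper's proof only exhibits a critical point. Your decomposition $H[Q|g]=-D(Q\|P_\eta)+\ln Z+\eta\bar A$ proves that $P_\eta$ is the unique global maximizer among admissible $Q$. Your cumulant-generating-function remark, $\partial_\eta\ln Z=-\langle A\rangle_\eta$ and $\partial_\eta^2\ln Z=\mathrm{Var}_\eta(A)$, shows that the multiplier matching a given $\bar A$ is unique whenever it exists.

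One claim in your well-posedness paragraph is wrong. Axiom~\ref{ax:density}(a), even together with integrability in $b$, gives at most $Z(0;a)<\infty$, not finiteness in a neighborhood of $\eta=0$. The finite-variance axiom (e) does not help either, because $Z(\eta;a)$ for $\eta\neq0$ requires an exponential moment. For example, an action marginal proportional to $(1+A^4)^{-1}$ on $\mathcal{A}=\mathbb{R}$ has finite variance, yet $Z(\eta;a)=\infty$ for every $\eta\neq0$. So the convergence domain can collapse to a single point.

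The fix is to rely on Proposition~\ref{prop:gaussian_short_time} from the start. $Z(\eta;a)=\int dA\,\tilde g_T(A)\,e^{-\eta A}$ is exactly the Laplace transform of the Gaussian marginal $\tilde g_T$. Up to an additive constant, $\ln Z=-\eta\mu T+\eta^2\sigma^2T/2$, so $\langle A\rangle_\eta=(\mu-\eta\sigma^2)T$. This ranges over all of $\mathbb{R}$, so every $\bar A$ is attained by exactly one real $\eta$.
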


\begin{proof}
Introduce Lagrange multipliers $\lambda$ (for normalization) and $\eta$ (for the mean action).
Consider the functional
\[
\mathcal{L}[P]
=
H[P|g]
-\lambda\left(\int db\,dA\,P-1\right)
-\eta\left(\int db\,dA\, A P -\bar{A}\right).
\]
Taking the functional derivative with respect to $P$ and setting it to zero,
\[
\frac{\delta\mathcal{L}}{\delta P(A,b|a)}
= -\ln\frac{P(A,b|a)}{g(A;b|a)} -1
-\lambda -\eta A =0,
\]
which implies
\[
\ln P(A,b|a)
=
\ln g(A;b|a) -1-\lambda -\eta A.
\]
Exponentiating,
\[
P(A,b|a)
=
g(A;b|a)e^{-1-\lambda}e^{-\eta A}.
\]
Define $Z(\eta;a)=e^{1+\lambda}$ to enforce normalization;
then
\[
P(A,b|a)
=
\frac{g(A;b|a)e^{-\eta A}}{Z(\eta;a)}.
\]
The value of $\eta$ is fixed by the constraint on $\bar{A}$.
\end{proof}

\begin{remark}
The Jaynesian principle (and its Shore--Johnson justification) is a rule of inference, not a physical axiom.
It determines how probabilities must be assigned once the physical multiplicity $g$ and constraints such as \eqref{eq:mean_action} are given.
\end{remark}
%------------------------------------------------------
\section{Indistinguishability of Action}
\label{app:indistinguishability}

\subsection{Effective distribution of action}

The joint density \eqref{eq:joint_MaxEnt} induces a marginal distribution over $A$:
\begin{equation}
P(A|a)
=
\int_{\mathcal{Q}} db\, P(A,b|a)
=
\frac{e^{-\eta A}}{Z(\eta;a)}
\int_{\mathcal{Q}} db\, g(A;b|a).
\label{eq:P_A}
\end{equation}

We now regard this as a one-parameter family $P(A|a;\eta)$.

\subsection{Fisher information and Cramér--Rao bound}

\begin{definition}[Fisher information for $\eta$]
\label{def:Fisher}
The Fisher information associated with the parameter $\eta$ is
\begin{equation}
I(\eta)
=
\int dA\, P(A|a;\eta)
\left[\partial_\eta \ln P(A|a;\eta)\right]^2.
\label{eq:Fisher_def}
\end{equation}
\end{definition}

For the exponential family \eqref{eq:joint_MaxEnt} one finds
\begin{equation}
\partial_\eta \ln P(A|a;\eta)
=
-\left(A-\langle A\rangle\right),
\label{eq:score}
\end{equation}
so that
\begin{equation}
I(\eta)
=
\int dA\, P(A|a;\eta)\left(A-\langle A\rangle\right)^2
=
\mathrm{Var}(A).
\label{eq:I_eta_varA}
\end{equation}

\begin{theorem}[Cramér--Rao bound for action]
\label{th:CR_action}
In the exponential family \eqref{eq:joint_MaxEnt}, the uncertainties in $A$ and its conjugate parameter $\eta$ satisfy
\begin{equation}
\Delta A\cdot\Delta\eta\ge 1.
\label{eq:CR_bound}
\end{equation}
\end{theorem}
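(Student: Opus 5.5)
The bound follows from the standard Cramér--Rao inequality once $\Delta\eta$ is given an operational meaning, so the plan is to combine the exponential-family Fisher information computed in \eqref{eq:I_eta_varA} with the Cramér--Rao theorem for an unbiased estimator of $\eta$.

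\textbf{Step 1: setup.} Consider $n$ independent observations of the action drawn from the one-parameter family $P(A|a;\eta)$ of Eq.~\eqref{eq:P_A}. Let $\hat\eta$ be any locally unbiased estimator of $\eta$, and define $\Delta\eta$ as its root-mean-square error. Define $\Delta A=\sqrt{\mathrm{Var}(A)}$, computed in the same family.

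\textbf{Step 2: the score.} I would first justify \eqref{eq:score} carefully. Writing $\ln P=\ln\tilde g(A)-\eta A-\ln Z(\eta;a)$ with $\tilde g(A)=\int db\,g(A;b|a)$, one has $\partial_\eta\ln Z=-\langle A\rangle$. Hence the score is $-(A-\langle A\rangle)$ and $I(\eta)=\mathrm{Var}(A)$.

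This needs $Z(\eta;a)$ to be finite and differentiable under the integral sign for $\eta$ in an open interval. Axiom~\ref{ax:density}(a,e) only give finiteness of the zeroth and second moments at $\eta=0$. So I would restrict to the interior of the natural parameter space $\{\eta:Z(\eta;a)<\infty\}$, where exponential-family theory guarantees analyticity of $\ln Z$ and finiteness of all moments. This makes the regularity conditions of Cramér--Rao automatic.

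\textbf{Step 3: Cramér--Rao.} The inequality gives $\mathrm{Var}(\hat\eta)\ge 1/(nI(\eta))=1/(n\,\mathrm{Var}(A))$. For a single observation ($n=1$) this becomes $\Delta\eta^2\,\Delta A^2\ge1$. Taking square roots yields \eqref{eq:CR_bound}.

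The proof itself is just the Cauchy--Schwarz inequality applied to $\mathrm{Cov}(\hat\eta,\text{score})=1$, which follows by differentiating the unbiasedness condition $\int\hat\eta\,P\,dA=\eta$ in $\eta$.

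\textbf{Main obstacle.} The delicate point is not the algebra but the interpretation of $\Delta\eta$. The inequality is an estimator bound: it holds per observation, and weakens to $\Delta A\,\Delta\eta\ge 1/\sqrt n$ for $n$ samples. I would therefore state explicitly that the theorem refers to single-shot inference, as in the paper's usage.

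Two further conditions must be flagged:
\begin{itemize}
\item \emph{Domain of $\eta$.} The bound holds only where $Z(\eta;a)<\infty$.
\item \emph{Non-degeneracy.} One needs $\mathrm{Var}(A)>0$, i.e.\ $\tilde g$ not concentrated at a single point. Otherwise $I(\eta)=0$ and $\eta$ is unidentifiable, so that $\Delta\eta=\infty$ and the bound holds trivially.
\end{itemize}
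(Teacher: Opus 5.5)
Your proposal is correct and follows the same route as the paper. Like the paper, you identify the Fisher information of the exponential family with $\mathrm{Var}(A)$ via the score $-(A-\langle A\rangle)$ and then apply the single-observation Cram\'er--Rao inequality to an unbiased estimator of $\eta$; your restriction to the interior of the natural parameter space and the explicit single-shot reading make precise regularity conditions the paper leaves implicit. One minor correction to your side remark: when $\mathrm{Var}(A)=0$ no unbiased estimator of $\eta$ exists at all, so the statement is vacuous there, not ``trivially'' satisfied, since the product would be the indeterminate $0\cdot\infty$.
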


\begin{proof}
The Cramér--Rao inequality states that for any unbiased estimator $\hat{\eta}$ of $\eta$:
\[
\mathrm{Var}(\hat{\eta})\ge\frac{1}{I(\eta)}.
\]
From Eq.~\eqref{eq:I_eta_varA}, $I(\eta)=(\Delta A)^2$.
Therefore $\Delta\eta\ge1/\Delta A$, which gives $\Delta A\cdot\Delta\eta\ge1$.
\end{proof}

\subsection{Scaling of $\eta$-fluctuations}
\label{sec:eta_scaling_indist}

The central question at this stage is the \emph{scaling} of the fluctuations of the inferential parameter $\eta$ that controls the MaxEnt distribution
$P(A|a;\eta)$.

\begin{proposition}[Natural scaling of the inferential parameter]
\label{prop:natural_scaling}
For the parameter $\eta$ to play a physically meaningful role in modulating the MaxEnt distribution
\[
P(A|a;\eta) \propto g(A;b|a)\, e^{-\eta A},
\]
the dimensionless product $\eta \sigma_A$ must be of order unity:
\begin{equation}
\eta \sigma_A \sim O(1),
\label{eq:naturalness}
\end{equation}
where $\sigma_A$ denotes the characteristic width of the density of action states $g(A;b|a)$.
\end{proposition}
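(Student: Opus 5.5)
The plan is to make the ``physically meaningful role'' precise as a quantitative sensitivity of the MaxEnt family to $\eta$. Then I show that this sensitivity is degenerate when $\eta\sigma_A\ll 1$ and when $\eta\sigma_A\gg 1$, so only the window $\eta\sigma_A=O(1)$ survives. I would work with the marginal $P(A|a;\eta)\propto \tilde g(A)e^{-\eta A}$ of Eq.~\eqref{eq:P_A}, where $\tilde g(A)=\int db\, g(A;b|a)$. I take $\tilde g$ to have a bulk of width $\sigma_A$ around a centre $A_0$, which is finite by Axiom~\ref{ax:density}(e). Writing $A=A_0+\sigma_A u$, the tilt factor becomes $e^{-\eta A_0}e^{-(\eta\sigma_A)u}$. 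The constant prefactor cancels in $Z$, so the shape of $P$ in the scaled variable $u$ depends on $\eta$ only through the dimensionless combination $\kappa\equiv\eta\sigma_A$. This reduction is the first step, and it is what makes ``order unity'' a well-posed statement.

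Second, I quantify how much the tilt changes the distribution, using the Fisher information already computed in Eq.~\eqref{eq:I_eta_varA}: $I(\eta)=\mathrm{Var}_\eta(A)=\sigma_A^2\,v(\kappa)$, where $v(\kappa)$ is the variance of $u$ under the tilted law. I would also use the relative entropy of the tilted law with respect to the untilted reference, $D(\kappa)=H$-divergence $=\ln\langle e^{-\kappa u}\rangle_0+\kappa\langle u\rangle_\kappa$.

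Third, I split into the two limiting regimes. For $\kappa\ll1$, a cumulant expansion (valid by finite variance) gives $D(\kappa)=\tfrac12\kappa^2+O(\kappa^3)\to 0$. The tilt is then an imperceptible perturbation, and $\eta$ carries no information beyond the reference measure $g$: the constraint on $\langle A\rangle$ is effectively inert. For $\kappa\gg1$, the tilted law concentrates at the lower edge of the support of $\tilde g$ (Laplace-type asymptotics). Then $v(\kappa)\to 0$ or becomes governed solely by the tail of $\tilde g$, so the structure of $g$ in its bulk is suppressed, and $\eta$ again ceases to modulate the distribution over the natural support of $g$. Requiring both nontrivial modulation ($D$ bounded below) and preservation of $g$'s bulk ($v(\kappa)$ comparable to $v(0)=1$) then forces $\kappa$ into a compact interval bounded away from $0$ and $\infty$, i.e.\ $\eta\sigma_A\sim O(1)$.

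The main obstacle is the $\kappa\gg1$ regime. The asymptotics depend on the lower tail of $\tilde g$, which the axioms do not fix. If $\tilde g$ has unbounded Gaussian support, as in Proposition~\ref{prop:gaussian_short_time}, the tilt merely shifts the mean by $-\kappa\sigma_A$ while leaving $v=1$. In that case the suppression argument must be rephrased: the tilted law is pushed outside the original bulk, so $P$ no longer samples the region where $g$ is supported, and the overlap $\int\sqrt{P_0P_\kappa}=e^{-\kappa^2/8}$ vanishes. I would therefore use this overlap (Bhattacharyya) criterion uniformly in both regimes, since for the Gaussian case it yields the explicit window directly. I would also state clearly that the proposition is a naturalness condition, with $O(1)$ meaning a $\kappa$-interval fixed by a chosen threshold on the overlap and divergence, rather than a sharp equality.
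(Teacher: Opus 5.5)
Your argument has the same three-regime skeleton as the paper's ``physical argument''. Like the paper's, it ultimately \emph{defines} what a physically meaningful role for $\eta$ is; what you add is an explicit, quantitative version of that definition. In two places you are more careful than the paper. First, factoring out $e^{-\eta A_0}$ so that only $\kappa=\eta\sigma_A$ survives is the right form of Case~1. The paper's linearization $e^{-\eta A}\approx 1-\eta A$ tacitly needs $\eta|A|\ll 1$ on the support, not $\eta\sigma_A\ll 1$. Second, and more substantively, the paper's Case~2 mechanism (concentration near a lower boundary $A_{\min}$) does not occur for the unbounded Gaussian $\tilde g$ of Proposition~\ref{prop:gaussian_short_time]}. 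There the tilt only translates the law by $-\kappa\sigma_A$ with unchanged shape. Your overlap criterion covers both situations, and it does cut out a single window. Indeed $\ln\int\sqrt{P_0P_\kappa}=\psi(\kappa/2)-\tfrac12\psi(\kappa)$ with $\psi(t)=\ln\langle e^{-tu}\rangle_0$ convex, so the overlap decreases monotonically from $1$ as $\kappa$ grows; it equals $e^{-\kappa^2/8}$ in the Gaussian case. The paper's route buys brevity; yours gives ``$O(1)$'' a checkable meaning, a threshold on the overlap, that does not rely on $g$ having support bounded below.

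Two small fixes. As written, $\ln\langle e^{-\kappa u}\rangle_0+\kappa\langle u\rangle_\kappa$ is $H[P_\kappa|P_0]$ in the paper's sign convention and equals $-\tfrac12\kappa^2+\dots$. The Kullback--Leibler divergence $\tfrac12\kappa^2$ you quote is therefore its negative. Also, an $O(\kappa^3)$ remainder needs a third moment, not just finite variance. The $o(\kappa^2)$ remainder that finite variance does give is all your argument uses.
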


\begin{proof}[Physical argument]
Consider the exponential factor $e^{-\eta A}$ in the MaxEnt distribution.

\textbf{Case 1:} If $\eta \sigma_A \ll 1$, then over the range where $g(A)$ is appreciable the exponential factor varies by much less than unity: $e^{-\eta A} \approx 1 - \eta A + O((\eta A)^2)$. In this regime $\eta$ has negligible influence on the shape of the distribution and merely rescales the normalization. The parameter is therefore physically irrelevant.

\textbf{Case 2:} If $\eta \sigma_A \gg 1$, the exponential term suppresses $g(A)$ over its natural support, forcing the distribution to concentrate near the lower boundary $A_{\min}$ of the effective action range, independently of the detailed structure of $g$. The resulting distribution is dynamically uninformative.

\textbf{Case 3:} If $\eta \sigma_A \sim O(1)$, the exponential factor modulates $g(A)$ by order-one amounts without overwhelming its intrinsic structure. This defines the physically relevant regime in which $\eta$ meaningfully controls
the distribution.
\end{proof}

\begin{theorem}[Scaling of $\eta$-fluctuations]
\label{th:eta_scaling}
The uncertainty in the inferential parameter $\eta$ satisfies
\begin{equation}
\Delta \eta \sim O(\eta).
\label{eq:eta_scaling}
\end{equation}
\end{theorem}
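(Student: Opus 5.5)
The plan is to combine the Cramér--Rao bound of Theorem~\ref{th:CR_action} with the naturalness condition of Proposition~\ref{prop:natural_scaling}, and to read the resulting chain of estimates as a two-sided order-of-magnitude statement about $\Delta\eta$. The result is only meant at the level of scaling, $\Delta\eta \sim O(\eta)$, not as a sharp inequality. So it is enough to establish a lower bound $\Delta\eta \gtrsim \eta$ and an upper bound $\Delta\eta \lesssim \eta$, each up to order-one constants.

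\emph{Lower bound.} First I identify the spread $\Delta A$ of the marginal $P(A|a;\eta)$ in Eq.~\eqref{eq:P_A} with the characteristic width $\sigma_A$ of $g$. This holds in the regime singled out by Proposition~\ref{prop:natural_scaling}: when $\eta\sigma_A = O(1)$, the tilt $e^{-\eta A}$ changes the width of $g$ only by an order-one factor. For a Gaussian $g$, as in Proposition~\ref{prop:gaussian_short_time}, this is exact, since the tilt only shifts the mean. Hence $I(\eta) = \mathrm{Var}(A) \simeq \sigma_A^2$ by Eq.~\eqref{eq:I_eta_varA}, and Theorem~\ref{th:CR_action} gives
\begin{equation}
\Delta\eta \ge \frac{1}{\Delta A} \simeq \frac{1}{\sigma_A} \sim \eta ,
\end{equation}
where the last step uses $\eta\sigma_A \sim 1$.

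\emph{Upper bound.} I argue that the bound is essentially saturated. For an exponential family, the maximum-likelihood estimator of the natural parameter is asymptotically efficient, because the mean $\langle A\rangle$ is a sufficient statistic. So an estimator exists whose variance is of order $1/I(\eta) = 1/\mathrm{Var}(A)$, and this gives $\Delta\eta \lesssim 1/\sigma_A \sim \eta$. A second, independent route is to note that if $\Delta\eta \gg \eta$, then the sign and magnitude of $\eta$ could not be inferred at all. The tilt would then be operationally indistinguishable from the untilted measure, which is exactly Case~1 of Proposition~\ref{prop:natural_scaling}, the physically irrelevant regime. Combining the two bounds yields $\Delta\eta \sim O(\eta)$.

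\emph{Main obstacle.} The hardest step is the upper bound, and specifically justifying that the Cramér--Rao bound is attained, up to order one, from a single inference rather than asymptotically in a sample size. If that efficiency argument proves too strong to state rigorously, I will fall back on the naturalness argument: treat Proposition~\ref{prop:natural_scaling} as a consistency requirement that excludes $\Delta\eta \gg \eta$, and present the theorem explicitly as a scaling statement valid within the physically relevant regime $\eta\sigma_A = O(1)$.
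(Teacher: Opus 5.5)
Your lower bound is the paper's entire proof. The paper combines Cram\'er--Rao with the naturalness regime of Proposition~\ref{prop:natural_scaling}, $\Delta A\sim\sigma_A\sim 1/\eta$, to get $\Delta\eta\gtrsim\eta$. It then declares the scaling established ``up to a numerical factor of order unity.'' It also merely asserts $\Delta A\sim\sigma_A$, whereas you justify it through the Gaussian marginal of Proposition~\ref{prop:gaussian_short_time}. The paper never argues the reverse inequality, so your upper-bound half is genuinely extra. That half only makes sense if $\Delta\eta$ is read as the \emph{optimal} uncertainty: Cram\'er--Rao bounds every unbiased estimator from below, and poor estimators can be arbitrarily noisy.

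Your worry about the efficiency step is justified in general. In a one-parameter exponential family, a single-draw unbiased estimator attains the Cram\'er--Rao bound only for affine functions of $\langle A\rangle_\eta$. The parameter $\eta$ itself is such a function only if $\mathrm{Var}(A)$ does not depend on $\eta$. Also, the sufficient statistic is $A$, not $\langle A\rangle$, and sufficiency is not what delivers efficiency.

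The Gaussian marginal you already use removes the obstacle exactly. If the untilted marginal has mean $\langle A\rangle_0$ and variance $\sigma_A^2$, the tilted one has mean $\langle A\rangle_0-\eta\sigma_A^2$ and the same variance. Then $\hat\eta=(\langle A\rangle_0-A)/\sigma_A^2$ is unbiased with variance exactly $1/\sigma_A^2=1/I(\eta)$. Hence $\Delta\eta_{\min}=1/\sigma_A$, and the two-sided statement is just the naturalness condition $\eta\sigma_A\sim 1$ rewritten. That condition is also all the paper's one-line argument rests on.

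Your ``second, independent route'' is not independent. Case~1 of Proposition~\ref{prop:natural_scaling} is the condition $\eta\sigma_A\ll 1$. Converting $\Delta\eta\gg\eta$ into that condition already presupposes $\Delta\eta\approx 1/\sigma_A$, which is exactly the saturation at issue. Replace this route with the explicit estimator above.
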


\begin{proof}
From the Cramér--Rao bound derived in Theorem~\ref{th:CR_action},
\[
\Delta A \cdot \Delta \eta \ge 1.
\]
In the physically relevant regime identified in Proposition~\ref{prop:natural_scaling},
\[
\Delta A \sim \sigma_A \sim \frac{1}{\eta}.
\]
Substituting this scaling into the Cramér--Rao inequality yields
\[
\frac{1}{\eta}\,\Delta \eta \gtrsim 1
\quad\Rightarrow\quad
\Delta \eta \gtrsim \eta,
\]
which establishes the stated scaling up to a numerical factor of order unity.
\end{proof}

\begin{corollary}[Action resolution scale]
\label{cor:action_resolution}
The minimum operationally resolvable difference in action is
\begin{equation}
\Delta A_{\min} \sim \frac{1}{\eta}.
\label{eq:DeltaAmin}
\end{equation}
\end{corollary}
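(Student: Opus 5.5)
Corollary~\ref{cor:action_resolution} will follow by combining Theorem~\ref{th:CR_action}, Proposition~\ref{prop:natural_scaling}, and Theorem~\ref{th:eta_scaling}, with a short operational argument about what it means for two action values to be distinguishable within the MaxEnt family.

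\textbf{Operational notion of resolution.} I will make ``distinguishable'' precise first. Two action values $A_1$ and $A_2$ are distinguishable if the MaxEnt family separates them: their relative weights $e^{-\eta A_1}$ and $e^{-\eta A_2}$ should differ by an amount that survives the intrinsic uncertainty $\Delta\eta$ of the parameter. The relevant quantity is the log-likelihood ratio $\ln[P(A_1)/P(A_2)] = -\eta(A_1-A_2) + \ln[\tilde g(A_1)/\tilde g(A_2)]$. Its sensitivity to $\eta$ is exactly $A_1-A_2$, by the score in Eq.~\eqref{eq:score}. A variation of $\eta$ within its uncertainty therefore shifts this ratio by $\Delta\eta\,|A_1-A_2|$. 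The difference is resolvable only if this shift is at least of order unity, i.e.\ only if $|A_1-A_2|\gtrsim 1/\Delta\eta$.

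\textbf{Key steps.} First, Theorem~\ref{th:CR_action} gives $\Delta A\,\Delta\eta\ge 1$. Second, in the natural regime of Proposition~\ref{prop:natural_scaling}, $\Delta A\sim\sigma_A\sim 1/\eta$. Third, Theorem~\ref{th:eta_scaling} gives $\Delta\eta\sim O(\eta)$. Inserting the third step into the criterion above yields $\Delta A_{\min}\sim 1/\Delta\eta\sim 1/\eta$. As a consistency check, this scale coincides with $\sigma_A$ from the second step: the intrinsic spread of the action distribution and the smallest resolvable difference are the same scale. Finally, action differences much smaller than $1/\eta$ change every likelihood ratio in the family by $o(1)$, so no inference based on $P(A|a;\eta)$ can separate them. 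This gives Eq.~\eqref{eq:DeltaAmin} up to an order-one constant.

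\textbf{Main obstacle.} The hardest step is justifying the upper side $\Delta\eta\lesssim\eta$. The Cram\'er--Rao bound only supplies lower bounds, so by itself it gives the inequality in the wrong direction to bound $\Delta A_{\min}$ from above. I would try to close this by invoking the naturalness condition once more. If $\Delta\eta\gg\eta$, the parameter would be effectively undetermined and $e^{-\eta A}$ would not meaningfully modulate $g$, contradicting Case~3 of Proposition~\ref{prop:natural_scaling}. Hence $\Delta\eta$ saturates at order $\eta$. I expect the argument to hold only at the level of scaling, ``$\sim$'', and not as a sharp inequality, which matches how the corollary is stated.
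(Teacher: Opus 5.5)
Your proposal is correct at the scaling level the paper works at, and it follows essentially the paper's route: the corollary is read off from $\Delta A\,\Delta\eta\ge 1$ (Theorem~\ref{th:CR_action}) together with $\Delta\eta\sim\eta$ (Theorem~\ref{th:eta_scaling}), which was itself obtained by inserting the naturalness condition $\Delta A\sim\sigma_A\sim 1/\eta$ into the same bound, so the result is ultimately a restatement of Proposition~\ref{prop:natural_scaling}, and your likelihood-ratio criterion is an interpretive addition the paper does not make. One correction to your obstacle paragraph: since $\Delta A_{\min}\sim 1/\Delta\eta$, the Cram\'er--Rao lower bound $\Delta\eta\gtrsim\eta$ already gives the \emph{upper} bound $\Delta A_{\min}\lesssim 1/\eta$, and the missing side $\Delta\eta\lesssim\eta$ is what the \emph{lower} bound $\Delta A_{\min}\gtrsim 1/\eta$ (the actual indistinguishability claim) requires --- a gap the paper also leaves open by writing ``up to a numerical factor of order unity,'' and which your naturalness patch closes only heuristically, because Proposition~\ref{prop:natural_scaling} constrains $\eta\sigma_A$ rather than $\Delta\eta$.
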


\begin{definition}[Indistinguishability window for action]
\label{def:indist}
Two action values $A_1$ and $A_2$ are said to be \emph{operationally
indistinguishable} if
\begin{equation}
|A_1 - A_2| \lesssim \Delta A_{\min}
\sim \frac{1}{\eta}.
\label{eq:indist_window}
\end{equation}
\end{definition}

\begin{remark}
The precise numerical factor in Eq.~\eqref{eq:indist_window} is not essential. What matters is the existence of a characteristic action scale $1/\eta$ below which differences in action cannot be operationally resolved by any measurement consistent with the MaxEnt distribution.
\end{remark}

\subsection{Indistinguishability in Path Space}
\label{sec:indist_paths}

The Cramér--Rao indistinguishability has a natural interpretation in terms of paths when the Jacobian representation~\eqref{eq:g_jacobian} is available.

Consider two paths $\gamma_1$ and $\gamma_2$ connecting $a$ to $b$ with actions $S[\gamma_1] = A_1$ and $S[\gamma_2] = A_2$. These paths are:

\begin{itemize}
    \item \textbf{Distinguishable} if $|A_1 - A_2| \gg 1/\eta$: the paths can be operationally told apart by their action values. In this case, their contributions to probabilities \emph{add incoherently}:
    \begin{equation}
    P_{\text{total}} = P_1 + P_2.
    \label{eq:classical_sum}
    \end{equation}
    
    \item \textbf{Indistinguishable} if $|A_1 - A_2| \lesssim 1/\eta$: no measurement can determine which path was taken. In this case, their contributions must combine in a way that respects this indistinguishability.
\end{itemize}

This dichotomy mirrors the familiar quantum mechanical rule: when alternatives are distinguishable (e.g., which-path information is available), probabilities add; when alternatives are indistinguishable, \emph{amplitudes} add and probability is computed from the sum.

The key insight is that the Cramér--Rao bound provides an \emph{operational} criterion for when paths become indistinguishable: the action resolution $1/\eta$ is the minimal action difference that can be reliably detected. Below this scale, paths ``blend together'' and must be combined coherently.

\begin{remark}
In standard quantum mechanics, the indistinguishability scale is $\hbar$. As we will show in Sec.~\ref{sec:hbar}, matching the Cramér--Rao scale with the quantum scale gives $\eta = 1/\hbar$, so $\Delta A_{\min} \sim \hbar$. Paths differing in action by less than $\hbar$ are quantum mechanically indistinguishable.
\end{remark}
\subsection{Cramér--Rao bound for infinitesimal action segments}
\label{sec:CR_infinitesimal}

A crucial question now arises: what is the status of the Cramér--Rao
indistinguishability bound when the time interval becomes infinitesimal?
As we show below, this limit enforces a universal indistinguishability of
infinitesimal trajectory segments and provides the fundamental origin of
coherent superposition.

\begin{definition}[Infinitesimal action]
For a trajectory segment of duration $dt$, the action increment is
\begin{equation}
\delta S = L(q,\dot q,t)\, dt ,
\label{eq:infinitesimal_action}
\end{equation}
where $L$ is the Lagrangian evaluated along the trajectory.
\end{definition}

\begin{theorem}[Universal infinitesimal indistinguishability]
\label{th:infinitesimal_indist}
For any two trajectories $\gamma_1$ and $\gamma_2$ with finite Lagrangians,
their infinitesimal action increments are operationally indistinguishable:
\begin{equation}
\lim_{dt\to0}
\frac{|\delta S_1-\delta S_2|}{\Delta(\delta S)_{\min}} = 0 ,
\label{eq:universal_indist}
\end{equation}
where $\Delta(\delta S)_{\min} = 1/\eta$ is the fundamental action-resolution
scale set by the Cramér--Rao bound.
\end{theorem}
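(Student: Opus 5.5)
The plan is a direct scaling argument. Take the definition of the infinitesimal action, $\delta S_i = L(q_i,\dot q_i,t)\,dt$, for the two trajectories $\gamma_1,\gamma_2$. By hypothesis both Lagrangians are finite on the segment. So there is a constant $M<\infty$, for instance $M=\sup|L|$ over a neighbourhood of the segment, such that
\begin{equation}
|\delta S_1-\delta S_2| \le |L(q_1,\dot q_1,t)-L(q_2,\dot q_2,t)|\,dt \le 2M\,dt .
\end{equation}
The numerator in Eq.~\eqref{eq:universal_indist} is therefore $O(dt)$.

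For the denominator, I would invoke Corollary~\ref{cor:action_resolution} together with Theorem~\ref{th:eta_scaling}, which give $\Delta(\delta S)_{\min}\sim 1/\eta$. Dividing, the ratio is bounded by $2M\eta\,dt$, which tends to zero as $dt\to 0$. This proves the claim, with the order of limits $dt\to0$ at fixed trajectories.

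The main obstacle is justifying that the resolution scale is \emph{independent of} $dt$. Naively, applying the Cram\'er--Rao bound to the short-time marginal would tie $\Delta A_{\min}$ to the width of $\tilde g_{dt}$. By Proposition~\ref{prop:gaussian_short_time} that width is $\sigma\sqrt{dt}$, which vanishes as $dt\to 0$. If that were the relevant scale, the ratio would behave like $\sqrt{dt}$ times a prefactor that blows up, and the argument would collapse. I would resolve this by arguing that $\eta$ is a single Lagrange multiplier attached to the additive action variable. Because action is additive under the composition law \eqref{eq:composition_g}, the weight $e^{-\eta A}$ factorizes across segments, so the same $\eta$ must serve every time slice. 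The resolution $1/\eta$ is therefore a property of the inferential parameter, not of the segment length. This is consistent with the paper's choice $\sigma\sqrt{\Delta t}\sim\hbar$, under which the fluctuation scale is itself held fixed.

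Finally, I would remark on the variant mentioned in the conclusion. There, $|\delta S|$ is compared with the Gaussian fluctuation width rather than with $1/\eta$. For diffusively rough paths, $\dot q\sim dt^{-1/2}$ makes the kinetic term in $\delta S$ scale like $O(1)$ rather than $O(dt)$. In that setting I would state the result only for finite-Lagrangian (differentiable) trajectories, as the theorem does, and note that the $\sqrt{dt}$ rate quoted in the conclusion arises from comparing the $O(dt)$ mean increment with the $O(\sqrt{dt})$ fluctuation width.
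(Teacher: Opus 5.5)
Your proof is correct and follows the paper's own argument. The paper also writes $\delta S = L\,dt + O(dt^2)$ and uses finiteness of the Lagrangians to get $|\delta S_1-\delta S_2| = O(dt)$; it does this for two trajectories through the same $q(t)$ with different velocities, a case your $\sup|L|$ bound subsumes. It then divides by the $dt$-independent scale $1/\eta$ of Corollary~\ref{cor:action_resolution} to get a ratio $\sim\eta\,dt\to0$.

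The paper only asserts that $1/\eta$ does not depend on $dt$, so your factorization remark is a supplement rather than a different route. One correction: the ``collapse'' you fear under the naive per-segment scale does not happen. With resolution $\sigma\sqrt{dt}$ at fixed $\sigma$, the ratio is $O(dt)/O(\sqrt{dt})=O(\sqrt{dt})\to0$, which is exactly the variant of Appendix~\ref{app:variance_scaling} that your own last paragraph describes.
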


\begin{proof}
We show that the difference between infinitesimal action contributions
vanishes relative to the fundamental resolution scale as $dt\to0$.

\medskip
\noindent\textbf{Step 1: Infinitesimal action along a trajectory.}
For a trajectory segment from time $t$ to $t+dt$,
\begin{equation}
\delta S
=
\int_t^{t+dt} L(q(\tau),\dot q(\tau),\tau)\, d\tau
=
L(q,\dot q,t)\, dt + O(dt^2).
\label{eq:deltaS_expansion}
\end{equation}
For sufficiently small $dt$, higher-order corrections are negligible.

\medskip
\noindent\textbf{Step 2: Difference between two trajectory segments.}
Consider two trajectories $\gamma_1$ and $\gamma_2$ passing through the same
configuration $q(t)$ but with different velocities $\dot q_1$ and $\dot q_2$.
Their infinitesimal action difference is
\begin{equation}
|\delta S_1-\delta S_2|
=
|L(q,\dot q_1,t)-L(q,\dot q_2,t)|\, dt + O(dt^2).
\label{eq:deltaS_difference}
\end{equation}
Since both Lagrangians are finite, the leading contribution scales as
\begin{equation}
|\delta S_1-\delta S_2| = O(dt).
\label{eq:deltaS_scaling}
\end{equation}

\medskip
\noindent\textbf{Step 3: Fundamental resolution scale.}
From the Cramér--Rao bound derived in
Corollary~\ref{cor:action_resolution},
the minimum operationally resolvable action difference is
\begin{equation}
\Delta(\delta S)_{\min} = \frac{1}{\eta},
\label{eq:fundamental_resolution}
\end{equation}
which is independent of the time interval $dt$.

\medskip
\noindent\textbf{Step 4: Vanishing ratio.}
Combining Eqs.~\eqref{eq:deltaS_scaling} and~\eqref{eq:fundamental_resolution},
we find
\begin{equation}
\frac{|\delta S_1-\delta S_2|}{\Delta(\delta S)_{\min}}
\sim
\eta\, dt
\xrightarrow{dt\to0} 0.
\end{equation}

\medskip
\noindent
Since this ratio vanishes for \emph{any} pair of trajectories with finite
Lagrangians, all infinitesimal trajectory segments are operationally
indistinguishable in the limit $dt\to0$.
\end{proof}

\begin{corollary}[Unavoidability of coherent combination]
\label{cor:coherent_unavoidable}
Because infinitesimal action segments are universally indistinguishable,
their contributions cannot be assigned independent probabilities.
Any consistent inference scheme must therefore combine them coherently
before probabilities are computed.
\end{corollary}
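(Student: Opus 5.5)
The statement to prove is Corollary~\ref{cor:coherent_unavoidable}. It follows directly from Theorem~\ref{th:infinitesimal_indist} together with the operational meaning of indistinguishability in Definition~\ref{def:indist}. My plan is to argue by contradiction: assume some consistent inference scheme assigns independent probabilities to infinitesimal segments, and show this clashes with the MaxEnt/Cram\'er--Rao structure.

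\textbf{Step 1: indistinguishability of the segments.} Fix two trajectories $\gamma_1,\gamma_2$ through a common configuration $q(t)$ with finite Lagrangians. By Theorem~\ref{th:infinitesimal_indist}, $|\delta S_1-\delta S_2|\,\eta\to 0$ as $dt\to0$. Hence for all sufficiently small $dt$ the pair lies inside the window of Eq.~\eqref{eq:indist_window}, and the two segments are operationally indistinguishable in the sense of Definition~\ref{def:indist}.

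\textbf{Step 2: no consistent independent probabilities.} An assignment of independent probabilities $P_1,P_2$ presupposes a rule, based only on available information, for deciding which alternative occurred. Within the MaxEnt family~\eqref{eq:joint_MaxEnt}, that information enters only through the action label, since $P$ depends on $(A,b)$ only via $g(A;b|a)e^{-\eta A}$. By Corollary~\ref{cor:action_resolution}, no estimator consistent with this distribution can separate labels closer than $1/\eta$. The additive rule~\eqref{eq:classical_sum} would therefore require information the inference scheme provably lacks.

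There is also a composition problem. A finite-time process is a concatenation of $N=T/dt$ such segments, so any classical rule would have to be applied at every step. By the argument given before Theorem~\ref{th:complex_amplitudes}, a positive real character $e^{\lambda A}$ then violates normalization preservation unless it is trivial. Consequently the only consistent option is to combine the weights of the segments through the operation $\oplus$ before the map $f$ is applied, which is exactly coherent combination.

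\textbf{Main obstacle.} The hardest step is making precise the inference that ``operationally indistinguishable implies no independent probabilities,'' because this is partly definitional. I would handle it by stating it explicitly as the content of condition (iv) read contrapositively: the additivity $f(\phi_1\oplus\phi_2)=f(\phi_1)+f(\phi_2)$ is licensed only for distinguishable alternatives. Theorem~\ref{th:infinitesimal_indist} shows that at the infinitesimal level no pair of segments qualifies, so $f$ must act on the combined weight. I also need uniformity in $dt$ across all pairs, which I would secure by assuming the Lagrangian is bounded on the relevant configuration and velocity region. With that assumption, the $O(dt)$ bound in Eq.~\eqref{eq:deltaS_scaling} holds uniformly.
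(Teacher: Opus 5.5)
Your Step~1 and the estimator reading at the start of Step~2 are essentially all the paper offers. The corollary is presented as an immediate consequence of Theorem~\ref{th:infinitesimal_indist} plus the operational meaning of indistinguishability fixed in the path-space subsection: if $|A_1-A_2|\lesssim 1/\eta$, no measurement can determine which alternative occurred, so the contributions must combine coherently. No normalization argument enters at this point. The problem lies in the two devices you add to close what you call the main obstacle.

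The first is your appeal to condition (iv). It states distinguishable $\Rightarrow$ $f(\phi_1\oplus\phi_2)=f(\phi_1)+f(\phi_2)$, and its contrapositive is ``non-additive $\Rightarrow$ not distinguishable.'' What you need is ``indistinguishable $\Rightarrow$ additivity is not licensed, so $f$ must act on the combined weight.'' That is the inverse of (iv), and it does not follow from (iv). Additivity need not even fail for indistinguishable pairs: with $f=|\cdot|^2$, the weights $\phi_1=1$ and $\phi_2=i$ give $|\phi_1+\phi_2|^2=|\phi_1|^2+|\phi_2|^2$. So this step cannot be extracted from (iv). It has to be stated as the definitional content of operational indistinguishability, which is exactly how the paper uses it.

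The second is the ``composition problem'' paragraph, which should be dropped. It is circular. The exclusion of $e^{\lambda A}$ before Theorem~\ref{th:complex_amplitudes} explicitly invokes normalization ``combined with the coherent combination rule.'' In the paper's order it comes only after coherence is in hand, and its role there is to select $\Phi\simeq\mathbb{C}$ and $|\chi|=1$. It is also a non sequitur. Excluding a positive real character constrains the sequential product $\otimes$ and the weight space, not whether $\oplus$ is applied before $f$. With complex unimodular weights you can still form $|\phi_1|^2+|\phi_2|^2$ incoherently without any conflict with normalization. Indeed, ordinary Markov composition of nonnegative kernels preserves normalization.

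Two smaller points remain. First, $P(A,b|a)$ depends on $b$ through $g$. The two segments in Theorem~\ref{th:infinitesimal_indist} share $q(t)$ but differ in velocity, and hence in endpoint. So ``only the action label carries information'' requires restricting to alternatives with common endpoints; the paper glosses over this too. Second, your uniformity assumption is largely supplied by the locality axiom, Axiom~\ref{ax:density}(d). It confines short-time velocities to $|v|\lesssim v_{\max}$, so continuity of $L$ at fixed $q$ already gives an $O(dt)$ bound that is uniform over pairs.
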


\begin{remark}
This result is purely inferential and does not rely on any quantum postulate.
It follows from the conjunction of (i) action additivity, (ii) the
Cramér--Rao bound, and (iii) the continuity of time.
The explicit form of the coherent combination rule is derived in the
next subsection.
\end{remark}

\subsection{Three levels of indistinguishability}
\label{sec:three_levels}

The analysis above reveals a hierarchical structure of indistinguishability:

\begin{enumerate}
\item \textbf{Fundamental indistinguishability (Cramér--Rao):} Two action values are fundamentally indistinguishable if $|A_1 - A_2| < 1/\eta = \hbar$. This is an absolute \emph{operational} limit: no inference scheme consistent with the Cramér--Rao bound can distinguish them.

\item \textbf{Relational indistinguishability (environment):} Two action values with $|A_1 - A_2| > \hbar$ are distinguishable \emph{in principle}, but remain coherent \emph{in practice} if no external system has recorded the information. Coherence is relational: it depends on what the environment ``knows''. This level is contingent and does not reflect a fundamental limitation,
but rather the absence of recorded information.

\item \textbf{Infinitesimal indistinguishability (universal):} For $dt \to 0$, \emph{all} trajectory segments are indistinguishable regardless of their Lagrangians. This makes any incoherent combination inconsistent at the infinitesimal level.

\end{enumerate}

%\begin{center}
%\begin{tabular}{lll}
%\hline
%\textbf{Level} & \textbf{Condition} & \textbf{Consequence} \\
%\hline
%Fundamental & $|A_1 - A_2| < \hbar$ & Always coherent \\
%Relational & $|A_1 - A_2| > \hbar$, no record & Coherent for isolated system \\
%Infinitesimal & $dt \to 0$ & Always coherent (universal) \\
%\hline
%\end{tabular}
%\end{center}

%The key insight is that \emph{distinguishability requires an agent external to the system}. Two trajectories are not ``distinguishable'' intrinsically---distinguishability is a relational property depending on whether any physical system has recorded the information.
%-------------------------------------------------
%=========================================================
\section{Complex Amplitudes from Probability Normalization and Additive Action}
\label{app:complex_amplitudes}

This appendix provides a self-contained and mathematically explicit derivation of complex amplitudes as the unique representation compatible with additive action, consistent composition, and probability normalization. Unitarity ($|\chi(A)| = 1$) emerges as a theorem, not an axiom.

\subsection{Setup and assumptions}
\label{app:setup}

Consider a process from configuration $a$ to $b$. To each elementary alternative (e.g., a class of dynamical processes with total action $A$) we associate a \emph{weight} $\phi(A)$ taking values in a set $\Phi$. The observable probability is extracted from the total weight by a map $f:\Phi\to [0,\infty)$. No linear structure or inner product is assumed at this stage.

We assume the following requirements.

\begin{enumerate}[label=(R\arabic*), leftmargin=8mm, start=0]
\item \textbf{Indistinguishable combination:} If two alternatives are operationally indistinguishable, their weights combine
via a binary operation $\oplus:\Phi\times\Phi\to\Phi$. At this stage, $\oplus$ is an abstract combination rule, not assumed to be numerical addition.

\item \textbf{Associativity and commutativity:} $\oplus$ is associative and commutative.

\item \textbf{Neutral element:} There exists $0\in\Phi$ such that $\phi\oplus 0=\phi$.

\item \textbf{Sequential composition (action additivity):} For sequential processes with actions $A_1$ and $A_2$, the total action is $A_1+A_2$. The corresponding weights satisfy
\begin{equation}
\phi(A_1+A_2)=\phi(A_1)\otimes \phi(A_2),
\label{eq:seq_comp}
\end{equation}
for some binary operation $\otimes:\Phi\times\Phi\to\Phi$.

\item \textbf{Probability extraction:} Observable probabilities depend only on the total weight,
\begin{equation}
P = f(\phi_{\rm tot}).
\end{equation}

\item \textbf{Classical additivity for distinguishable alternatives:} If two alternatives are distinguishable (no interference), then
\begin{equation}
f(\phi_1\oplus\phi_2)=f(\phi_1)+f(\phi_2).
\label{eq:dist_add}
\end{equation}

\item \textbf{Probability normalization:} Total probability must be preserved under dynamical evolution:
\begin{equation}
\sum_i P_i = 1 \quad \text{at all times},
\label{eq:prob_normalization}
\end{equation}
and $\chi$ is not identically constant (non-trivial dynamics).  

\item \textbf{Continuity:}
The maps $A\mapsto \phi(A)$ and the operations are continuous in the sense needed for the standard Cauchy functional equation arguments below.
\end{enumerate}

\subsection{From probability normalization to unitarity and complex amplitudes}
\label{app:2d}
We denote by $\chi(A)$ a multiplicative representation of the weights $\phi(A)$ under sequential composition.
\begin{lemma}[Probability normalization excludes nonnegative real weights]
\label{lem:noRplus}
If $\Phi\subseteq \mathbb{R}_{\ge 0}$ with $\otimes$ being ordinary multiplication, then non-trivial dynamics preserving probability normalization is impossible.
\end{lemma}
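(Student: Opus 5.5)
The plan is to combine the continuous Cauchy functional equation from (R3) and (R7) with the normalization requirement (R6). We then show that every admissible real nonnegative character is either exponentially growing, exponentially decaying, or constant, and that each option is excluded.

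\textbf{Step 1: pin down the character.} With $\Phi\subseteq\mathbb{R}_{\ge0}$ and $\otimes$ ordinary multiplication, (R3) reads $\chi(A_1+A_2)=\chi(A_1)\chi(A_2)$ for all $A_1,A_2\in\mathbb{R}$. If $\chi(A_0)=0$ for some $A_0$, then $\chi(A)=\chi(A_0)\chi(A-A_0)=0$ for every $A$. This identically vanishing solution is trivial: all weights vanish and no normalized probability can be assigned, so it is excluded by (R6). Otherwise $\chi>0$ everywhere, and $h=\ln\chi$ is a continuous additive function by (R7). The standard Cauchy argument (additivity on the rationals, then extension by continuity) gives $h(A)=\lambda A$, so $\chi(A)=e^{\lambda A}$ with $\lambda\in\mathbb{R}$.

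\textbf{Step 2: feed this into normalization.} A process of duration $T$ composed of $N$ sequential segments carries total weight $\prod_k\chi(A_k)=e^{\lambda\sum_kA_k}$. With nonnegative weights, combining contributions under $\oplus$ (ordinary addition on $\mathbb{R}_{\ge0}$, consistent with (R5)) cannot cancel anything. The total probability $\sum_iP_i(T)$ therefore scales like $\int dA\,\tilde g_T(A)\,e^{\lambda A}$. Using the marginal semigroup structure \eqref{eq:semigroup_action} together with the Gaussian form of Proposition~\ref{prop:gaussian_short_time}, this integral equals $\exp[(\lambda\mu+\tfrac12\lambda^2\sigma^2)T]$, which is multiplicative in $T$.

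\textbf{Step 3: eliminate each case.} Preservation of $\sum_iP_i=1$ for all $T$ requires this exponent to vanish identically.
\begin{itemize}
\item If $\lambda\neq0$, then generically the exponent is nonzero. For $\lambda>0$ with $\mu\ge0$ the total probability grows without bound, and for $\lambda<0$ with $\mu\ge0$ it decays to zero, which is irreversible loss.
\item The fine-tuned root $\lambda=-2\mu/\sigma^2$ must be handled separately. I would argue it is system-dependent and not a universal weight: the time-reversed process, or a shift of the action zero-point (the action-space gauge freedom mentioned in Sec.~\ref{sec:propagator}), changes $\mu$ while $\chi$ must stay fixed. Hence no single $\lambda\neq0$ preserves normalization across all admissible processes.
\item The remaining case $\lambda=0$ gives $\chi\equiv1$, which violates the non-triviality clause of (R6).
\end{itemize}

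I expect Step 3 to be the main obstacle. One part is making the normalization computation rigorous without assuming more about $g$ than the axioms give. The other is convincingly excluding the fine-tuned $\lambda$. The fallback is to invoke invertibility, $\chi(-A)=\chi(A)^{-1}$, together with time-reversal symmetry \eqref{eq:time_reversal}. These force the normalization condition to hold for both $\mu$ and $-\mu$, so the exponent must vanish for both signs. Adding and subtracting the two conditions gives $\lambda^2\sigma^2=0$, and hence $\lambda=0$.
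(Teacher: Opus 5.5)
Your Step 1 matches the paper, and you are more careful there, since you dispose of the identically zero solution. After that the routes split. The paper never evaluates a normalization integral. It reads (R6) as demanding that $\chi$ neither blow up nor decay under repeated composition, and dismisses $\lambda>0$ (divergence), $\lambda<0$ (``monotonic suppression'') and $\lambda=0$ (trivial) in one line. Your explicit computation is more faithful to what normalization actually says, and it is exactly what exposes the gap. At $\lambda_*=-2\mu/\sigma^2$ the exponent $\lambda\mu+\tfrac12\lambda^2\sigma^2$ vanishes for all $T$. So $\int dA\,g(A;b|a)\,e^{\lambda_* A}$ is a nonnegative, normalization-preserving kernel with non-constant $\chi$: the exponential-martingale tilt of a drifted diffusion, i.e.\ ordinary classical stochastic dynamics. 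Nothing in the lemma's hypotheses or in normalization removes it, so the proof stands or falls with your exclusion of $\lambda_*$. Separately, your first bullet is false as written. For $\mu\ge0$ the exponent is negative only on $-2\mu/\sigma^2<\lambda<0$ and positive for $\lambda<-2\mu/\sigma^2$, and for $\mu=0$ every $\lambda<0$ gives growth. Only whether the exponent vanishes matters, so drop the growth/decay labels.

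Two of your three exclusion arguments do not work. First, time reversal does not flip $\mu$. Axiom~\ref{ax:density}(c) reads $g(A;b,T|a)=g(A;a,T|b)$ with the \emph{same} $A$, and the paper's unitarity proof likewise uses $S[\bar\gamma]=S[\gamma]$. The reversed process therefore has the same drift and yields no second condition. Second, invertibility $\chi(-A)=\chi(A)^{-1}$ holds automatically for every $e^{\lambda A}$ and constrains nothing. To obtain the condition at $-\lambda$ you would have to \emph{postulate} that the evolution weighted by $\chi(A)^{-1}$ also preserves normalization. That kernel is not the inverse of the forward one, since two nonnegative kernels that are not point maps cannot compose to $\delta(a-c)$, so this is a new assumption rather than a consequence.

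The argument that does close the proof is the gauge one, stated properly. Adding a constant $\kappa$ to the Lagrangian, $A\to A+\kappa T$, is compatible with the composition law, sends $\mu\to\mu+\kappa$, and multiplies the total weight by $e^{\lambda\kappa T}$. If a single process-independent $\chi$ must preserve normalization for all such gauge-equivalent representatives, then $e^{\lambda\kappa T}=1$ for all $\kappa$ gives $\lambda=0$ immediately, with no need for the Gaussian marginal. That universality requirement is, however, an extra hypothesis not contained in the lemma, so you must state it explicitly. The same is true of the paper's boundedness/non-decay premise, which your $\lambda_*$ shows is stronger than normalization.

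Note also that Step 2 borrows Proposition~\ref{prop:gaussian_short_time} with $\int\tilde g_T\,dA=1$. If instead $\int\tilde g_T\,dA=e^{cT}$ (the paper stresses that $g$ is unnormalized), the condition becomes $\tfrac12\sigma^2\lambda^2+\mu\lambda+c=0$. Its roots are generically both nonzero, which again shows that normalization alone cannot single out $\lambda=0$.
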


\begin{proof}
For $\Phi = \mathbb{R}_{\ge 0}$ with multiplication, the only continuous characters $\chi: \mathbb{R} \to \mathbb{R}_{\ge 0}$ satisfying $\chi(A_1 + A_2) = \chi(A_1)\chi(A_2)$ are $\chi(A) = e^{\lambda A}$ for $\lambda \in \mathbb{R}$. Probability normalization under repeated composition requires $|\chi(A)|$ to remain bounded for all $A$ (otherwise probabilities grow without limit). But $|e^{\lambda A}| = e^{\lambda A}$:
\begin{itemize}
\item If $\lambda > 0$: $\chi(A) \to \infty$ as $A \to \infty$ (probabilities diverge, violating normalization);
\item If $\lambda < 0$: $\chi(A) \to 0$ as $A \to \infty$ (monotonic suppression of weights, incompatible with normalization under repeated composition);
\item If $\lambda = 0$: $\chi(A) = 1$ for all $A$ (trivial, no dynamics).
\end{itemize}
None yields a well-defined, non-trivial theory with preserved probability normalization.
\end{proof}

To represent non-trivial dynamics with bounded amplitudes (as required by probability normalization), the weight space must support phases. The minimal real-linear setting that accommodates phases is a two-dimensional real vector space (isomorphic to $\mathbb{C}$). Unitarity $|\chi(A)| = 1$ then follows as a consequence.

\begin{lemma}[Minimal dimension for bounded non-trivial dynamics]
\label{lem:mindim2}
Let $\chi: \mathbb{R} \to \Phi$ be a continuous character with $\chi(A)\chi(-A) = \mathbf{1}$
and $|\chi(A)|$ bounded for all $A$ (as required by probability normalization). Then any faithful real-linear representation
of $\Phi$ requires dimension $\ge 2$.
\end{lemma}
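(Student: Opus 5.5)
The plan is to argue by contradiction, assuming a faithful real-linear representation of dimension one. Concretely, we suppose $\Phi$ embeds into $\mathbb{R}$ with $\otimes$ realized as ordinary multiplication. This is the only admissible way to represent a one-dimensional real algebra: any associative bilinear product on $\mathbb{R}$ is $x\otimes y = c\,xy$, and rescaling $x\mapsto cx$ reduces it to ordinary multiplication. Under that embedding, $\chi:\mathbb{R}\to\mathbb{R}$ becomes a continuous solution of the multiplicative Cauchy equation $\chi(A_1+A_2)=\chi(A_1)\chi(A_2)$.

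\textbf{Step 1: $\chi$ is positive.} The relation $\chi(A)\chi(-A)=\mathbf{1}$ forces $\chi(A)\neq 0$ for every $A$. Writing $\chi(A)=\chi(A/2)^2$ then gives $\chi(A)>0$ for all $A$.

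\textbf{Step 2: $\chi$ is exponential.} Because $\chi$ is positive and continuous, $\log\chi$ is a continuous additive function on $\mathbb{R}$. The standard Cauchy argument therefore gives $\chi(A)=e^{\lambda A}$ for some real $\lambda$. This is exactly the setting of Lemma~\ref{lem:noRplus}.

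\textbf{Step 3: boundedness forces triviality.} By hypothesis $|\chi(A)|$ is bounded for all $A\in\mathbb{R}$, in both directions. If $\lambda\neq0$, then $e^{\lambda A}$ is unbounded as $A\to+\infty$ or as $A\to-\infty$. Hence $\lambda=0$ and $\chi\equiv 1$. This contradicts the non-triviality requirement (R6), so a one-dimensional faithful representation cannot exist.

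\textbf{Step 4: dimension two suffices.} To see that the bound is sharp, we exhibit the rotation representation $\chi(A)=\begin{pmatrix}\cos\eta A & -\sin\eta A\\ \sin\eta A & \cos\eta A\end{pmatrix}$ on $\mathbb{R}^2\simeq\mathbb{C}$. It is continuous, satisfies $\chi(A)\chi(-A)=\mathbf{1}$, is bounded, and is non-constant. Strictly speaking this step is not required for the lemma, but it confirms that $2$ is the minimum.

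\textbf{Main obstacle.} The difficult point is justifying the reduction in the first paragraph. We must make precise what "faithful real-linear representation" means, namely an injective map into $\mathrm{End}(\mathbb{R}^n)$ under which $\otimes$ becomes composition. For $n=1$ this is just multiplication by scalars, so the reduction is immediate once that definition is fixed. The subtler issue is that the sign information could otherwise allow $\chi(A)\in\{\pm e^{\lambda A}\}$. Step 1 is what closes this loophole: continuity together with the square-root trick excludes sign flips. I would state this point explicitly rather than rely on Lemma~\ref{lem:noRplus}, which assumes nonnegativity from the outset.
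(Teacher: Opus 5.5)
Your proof is correct and follows the same route as the paper's. In one real dimension a continuous character must be a real exponential $e^{\lambda A}$, boundedness forces $\lambda=0$ (contradicting non-triviality), and the phase/rotation representation on $\mathbb{R}^2$ shows that two dimensions suffice. Your Step 1 (nonvanishing plus $\chi(A)=\chi(A/2)^2>0$) and your use of two-sided boundedness to exclude $\lambda<0$ tighten the paper's appeal to Lemma~\ref{lem:noRplus}, which presupposes nonnegative weights and rules out $\lambda<0$ by a normalization heuristic rather than by boundedness.
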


\begin{proof}
In dimension $1$ over $\mathbb{R}$, continuous multiplicative characters are $\chi(A) = e^{\lambda A}$. As shown in Lemma~\ref{lem:noRplus}, boundedness requires $\lambda = 0$ (trivial). To have non-trivial bounded characters satisfying $\chi(A)\chi(-A) = 1$, we need $|\chi(A)| = 1$ for all $A$, which requires oscillatory behavior---impossible in one real dimension. A two-dimensional representation $\Phi \simeq \mathbb{R}^2$ allows $\chi(A) = (\cos\eta A, \sin\eta A)$, which satisfies all requirements.
\end{proof}

\subsection{Probability as a quadratic form}
\label{app:born}

The coexistence of (i) classical additivity for distinguishable alternatives \eqref{eq:dist_add} and (ii) the two-dimensional structure forced by reversibility requires $f$ to behave as a norm-squared on the weight space.

\begin{lemma}[Quadratic structure]
\label{lem:quadratic}
Assume $\Phi$ admits a real-linear representation $\Phi\simeq \mathbb{R}^2$ in which $\oplus$ becomes vector addition. If Eq.~\eqref{eq:dist_add} holds for distinguishable alternatives and the character $\chi$ satisfies $|\chi(A)| = 1$ (from reversibility), then $f$ must be a positive semidefinite quadratic form:
\begin{equation}
f(\phi)=\langle \phi,\phi\rangle,
\end{equation}
for some inner product $\langle \cdot,\cdot\rangle$ on $\mathbb{R}^2$.
\end{lemma}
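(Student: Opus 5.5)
The plan is to use the two structural facts the lemma supplies, that $\oplus$ is vector addition on $\Phi\simeq\mathbb{R}^2$ and that the character $\chi(A)$ has constant norm, to force $f$ to be a quadratic form invariant under the phase rotations. The steps, in order, are as follows.

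\textbf{Step 1: additivity on orthogonal pairs.} Take two distinguishable alternatives with weights $\phi_1,\phi_2$. Distinguishability means their relative phase is operationally averaged out (no interference). Requirement (R5) then says that $f(\phi_1+\phi_2)=f(\phi_1)+f(\phi_2)$ must hold for the phase-averaged combination. Concretely, I would impose that
\[
f(\phi_1+R_\theta\phi_2)
\]
averaged over $\theta$ equals $f(\phi_1)+f(\phi_2)$, where $R_\theta$ is the rotation generated by $\chi(A)=(\cos\eta A,\sin\eta A)$ from Lemma~\ref{lem:mindim2}. Equivalently, additivity should hold exactly whenever $\phi_2$ is rotated by $\pi/2$ relative to $\phi_1$.

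\textbf{Step 2: rotation invariance.} Since $|\chi(A)|=1$ and probabilities must be preserved under sequential composition with a pure phase (R6), $f$ must be invariant under $R_\theta$. So $f(\phi)=h(r)$ with $r=|\phi|$ in the Euclidean norm. This is the norm defined by the representation in which $\chi$ acts by rotations.

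\textbf{Step 3: Pythagorean functional equation.} Combining Steps 1 and 2, for orthogonal $\phi_1,\phi_2$ we get
\[
h\bigl(\sqrt{r_1^2+r_2^2}\bigr)=h(r_1)+h(r_2).
\]
Setting $k(s)=h(\sqrt{s})$ turns this into Cauchy's equation $k(s_1+s_2)=k(s_1)+k(s_2)$ on $s\ge 0$. Continuity (R7) and nonnegativity of $f$ then give $k(s)=cs$ with $c\ge 0$. Hence $f(\phi)=c|\phi|^2=\langle\phi,\phi\rangle$ with $\langle\cdot,\cdot\rangle=c$ times the Euclidean product, which is positive semidefinite and an inner product when $c>0$. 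Conversely, the parallelogram/polarization identity confirms that $f$ is a genuine quadratic form.

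\textbf{Main obstacle.} The hard step is Step 1: turning the informal condition (R5) into a precise functional equation. The lemma does not specify which pairs count as distinguishable. Phase averaging is the most defensible choice, but I would have to argue that the averaged cross term must vanish. That argument relies on the symmetry of Step 2, so Step 2 should be established first and Step 1 derived from it.
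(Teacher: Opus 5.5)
Your argument is correct in substance, but it takes a different route from the paper.

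\textbf{The two routes.} The paper never reduces $f$ to a radial function. It identifies distinguishable pairs with ``phase-orthogonal'' pairs, and uses a density-plus-continuity remark to get the Pythagorean identity on all orthogonal pairs. It then invokes Jordan--von Neumann (parallelogram law $\Rightarrow$ inner product), and uses $f(\chi(A))=1$ only at the end to make the form nondegenerate. You instead first make $f$ rotation invariant, $f(\phi)=h(|\phi|)$. Orthogonal additivity then becomes Cauchy's equation for $k(s)=h(\sqrt{s})$ on $[0,\infty)$. This is closed by continuity, or even by nonnegativity alone, since an additive $k\ge 0$ is nondecreasing and hence linear.

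\textbf{What each buys.} Your route is more elementary. It also avoids the step the paper leaves unjustified: that Pythagorean additivity yields the parallelogram law for a $\sqrt{f}$ never shown to be a norm. The price is a stronger input. Invariance of $f$ under \emph{all} rotations requires sequential composition with $\chi(A)$ to act on $\Phi\simeq\mathbb{R}^2$ as the rotation $R_{\eta A}$, i.e.\ essentially that $\otimes$ is complex multiplication. The paper only obtains this in Theorem~\ref{th:complex_rep}, and that proof uses this lemma. The paper's proof implicitly leans on the same phase structure when it speaks of relative phases. Even so, you should state rotation invariance as an explicit hypothesis rather than as a consequence of (R6) alone.

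\textbf{Step 1 needs fixing.} The phase-averaged condition is not ``equivalent'' to exact additivity on orthogonal pairs. Deriving the latter by averaging away the cross term is circular. For $f=k(|\phi|^2)$, pulling the term $2\sqrt{s_1s_2}\cos\theta$ out of $k$ presupposes that $k$ is affine, which is what you are proving. Instead, take exact additivity on Euclidean-orthogonal pairs as the content of (R5), which is the identification the paper also makes. Step 3 then needs nothing else; under the averaged reading, Cauchy's equation would not apply directly. Restricting to exactly orthogonal pairs matters. Suppose additivity held on a dense set of relative phases, as the paper's density remark literally says. Continuity would then extend it to $\phi_2=\phi_1$ and force $f(2\phi)=2f(\phi)$, which excludes every nonzero quadratic form.

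\textbf{The constant $c$.} Nonnegativity gives only $c\ge 0$. Exclude $c=0$ using $f(\chi(A))=1$, as the paper does. This fixes $c=1$ and makes $\langle\cdot,\cdot\rangle$ a genuine inner product.
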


\begin{proof}
Classical additivity \eqref{eq:dist_add} for distinguishable alternatives implements a Pythagorean additivity for a subset of pairs $\phi_1,\phi_2$ (those corresponding to distinguishable alternatives). In particular, for such pairs one has
\[f(\phi_1 \oplus \phi_2) = f(\phi_1) + f(\phi_2),\]
which corresponds to orthogonality in the weight space.  

To extend this to all pairs, we note that distinguishable alternatives (those with action differences $\vert A_{1}-A_{2}\gg 1/\eta$ ) correspond to weight pairs with relative phases $\eta\vert A_{1}-A_{2}\vert\gg 1$, which densely cover the 
full range of relative phases in $S^{1}$ as the action values vary over $\mathbb{R}$. Since $f$ is continuous (requirement R7) and the Pythagorean identity $f(\phi_1 \oplus \phi_2) = f(\phi_1) + f(\phi_2)$ holds on a dense subset of phase-orthogonal pairs, continuity extends the identity to all orthogonal pairs. 

By the Jordan--von Neumann theorem, any norm satisfying the parallelogram law (which follows from the Pythagorean property on a dense set extended by continuity) arises from an inner product. Positivity of $f$ implies positive semidefiniteness. The constraint $|\chi(A)| = 1$ from reversibility requires $f(\chi(A)) = 1$ for all $A$, which together with the bilinear structure forces $f$ to be a genuine inner product (norm-squared).
\end{proof}

Choosing orthonormal coordinates for $\langle\cdot,\cdot\rangle$, we can write $f(\phi)=\|\phi\|^2$ and identify $\Phi\simeq \mathbb{R}^2$.

\subsection{Sequential composition and characters}
\label{app:characters}

We now impose action additivity through Eq.~\eqref{eq:seq_comp}. In the $\mathbb{R}^2$ representation, define the amplitude map
\begin{equation}
\chi:\mathbb{R}\to \mathbb{R}^2,\qquad \chi(A):=\phi(A),
\end{equation}
and interpret $\otimes$ as a bilinear product on $\mathbb{R}^2$ that implements sequential composition.

\begin{lemma}[Commutativity excludes quaternionic composition]
\label{lem:noH}
If action addition is commutative ($A_1+A_2=A_2+A_1$) and $\chi$ satisfies $\chi(A_1+A_2)=\chi(A_1)\otimes\chi(A_2)$ for all $A_1,A_2$, then $\otimes$ must be commutative on the range of $\chi$. Hence noncommutative division algebras (e.g.\ quaternions)
are excluded as minimal amplitude algebras realizing the character $\chi$.
\end{lemma}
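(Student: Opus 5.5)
The argument is a direct commutativity check on the range of $\chi$, followed by the observation that a non-commutative algebra adds nothing over its commutative subalgebras. Fix arbitrary $A_1,A_2\in\mathbb{R}$. By hypothesis,
\begin{equation}
\chi(A_1)\otimes\chi(A_2)=\chi(A_1+A_2)=\chi(A_2+A_1)=\chi(A_2)\otimes\chi(A_1),
\end{equation}
because addition on $(\mathbb{R},+)$ is commutative. Hence any two elements of $\chi(\mathbb{R})$ commute under $\otimes$.

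Next, I would extend this from the range to the algebra it generates. Take $\mathcal{B}$ to be the closed real subalgebra generated by $\chi(\mathbb{R})$ under the bilinear product $\otimes$ and the combination rule $\oplus$. In the representation of Lemma~\ref{lem:quadratic}, $\oplus$ is vector addition.

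Bilinearity of $\otimes$ carries commutativity from the generators to all finite real-linear combinations and all finite products of generators. Associativity is inherited from the ambient algebra. Continuity of $\otimes$ (requirement R7) then passes commutativity to limits. So $\mathcal{B}$ is commutative, and the character $\chi$ takes values in, and is realized entirely within, a commutative subalgebra.

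Now specialize to the quaternions $\mathbb{H}$. Any element $q\notin\mathbb{R}$ has a centralizer $\mathbb{R}+\mathbb{R}q\cong\mathbb{C}$, and every commutative subalgebra of $\mathbb{H}$ is either $\mathbb{R}$ or a copy of $\mathbb{C}$. So if $\chi$ is non-trivial, its range lies in a single copy of $\mathbb{C}\subset\mathbb{H}$.

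A faithful minimal amplitude algebra realizing $\chi$ is therefore at most two-dimensional. The extra two quaternionic dimensions are never reached by $\chi$ or by any weight built from $\chi$ under $\oplus$ and $\otimes$. This is what "excluded as minimal amplitude algebras" means here. Combined with Lemma~\ref{lem:mindim2}, which rules out dimension one, it leaves $\mathbb{C}$ as the minimal realization.

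The main obstacle is the step from $\chi(\mathbb{R})$ to the full weight space $\Phi$. Weights of indistinguishable alternatives are $\oplus$-sums of characters, and I must make sure these sums and their products stay commutative. Bilinearity of $\otimes$ over the vector-addition $\oplus$ settles this, so I would state explicitly that the lemma uses the bilinear-product setting introduced just before it. The centralizer fact for $\mathbb{H}$ is then a short standard computation: writing $q=q_0+\vec q$, commutation with $q$ forces the vector part of the other element to be parallel to $\vec q$.
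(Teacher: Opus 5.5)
Your proof is correct, and its core step is exactly the paper's: the chain $\chi(A_1)\otimes\chi(A_2)=\chi(A_1+A_2)=\chi(A_2+A_1)=\chi(A_2)\otimes\chi(A_1)$. Where you differ is the ``hence'' part.

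The paper stops after that chain and asserts that quaternionic multiplication ``cannot implement'' the composition law. Read literally, that is too strong. The character $e^{i\eta A}$, placed inside a copy of $\mathbb{C}\subset\mathbb{H}$, is a perfectly good $\mathbb{H}$-valued character. Your centralizer computation supplies what is actually needed: the range of $\chi$ lies in a single copy $\mathbb{R}+\mathbb{R}q\cong\mathbb{C}$, so the extra quaternionic directions are never used. That is what justifies the word ``minimal'' in the statement. The paper's version buys brevity; yours buys a claim that is actually true as stated.

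Two small simplifications are available to you:
\begin{itemize}
\item Products of range elements are already in the range, since $\chi(A_1)\otimes\chi(A_2)=\chi(A_1+A_2)$. So closure of the span under $\otimes$, and associativity on it, follow from the character property itself. You do not need associativity inherited from an ambient algebra.
\item For $\mathbb{H}$ specifically, the closure and continuity step is unnecessary. Once the range contains a non-real $q$, every range element lies in the centralizer $\mathbb{R}+\mathbb{R}q$. That centralizer is itself a (closed) subalgebra, so the algebra generated by the range automatically sits inside it.
\end{itemize}
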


\begin{proof}
For all $A_1,A_2$,
\[
\chi(A_1)\otimes\chi(A_2)=\chi(A_1+A_2)=\chi(A_2+A_1)=\chi(A_2)\otimes\chi(A_1).
\]
Thus $\otimes$ must commute on the image of $\chi$. In particular, quaternionic multiplication, which is noncommutative, cannot implement Eq.~\eqref{eq:seq_comp} while preserving commutative action additivity.
\end{proof}

\subsection{Identification with complex numbers and phase form}
\label{app:C}

A two-dimensional real inner-product space equipped with a commutative bilinear product compatible with a multiplicative norm is (up to isomorphism) the complex field.

\begin{theorem}[Complex amplitude representation]
\label{th:complex_rep}
Under assumptions (R0)--(R7), there exists an isomorphism $\Phi\simeq \mathbb{C}$ such that:
\begin{enumerate}[label=(\roman*), leftmargin=7mm]
\item indistinguishable combination is complex addition: $\oplus \leftrightarrow +$;
\item sequential composition is complex multiplication: $\otimes \leftrightarrow \cdot$;
\item probability extraction is the modulus squared:
\begin{equation}
f(\phi)=|\phi|^2.
\end{equation}
\end{enumerate}
Moreover, any continuous character $\chi:\mathbb{R}\to\mathbb{C}$ satisfying
\begin{equation}
\chi(A_1+A_2)=\chi(A_1)\chi(A_2),
\qquad |\chi(A)|=1,
\end{equation}
is of the form
\begin{equation}
\chi(A)=e^{i\eta A}
\label{eq:chi_ei}
\end{equation}
for some real constant $\eta$.
\end{theorem}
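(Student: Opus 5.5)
The proof assembles the lemmas of this appendix in order and then solves a Cauchy-type functional equation. First I will fix the additive structure of $\Phi$. Requirements (R0)--(R2) make $(\Phi,\oplus,0)$ a commutative monoid. Together with the continuity requirement (R7), they license passing to a real-linear representation in which $\oplus$ is vector addition.

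Lemma~\ref{lem:noRplus} excludes the one-dimensional case $\Phi\subseteq\mathbb{R}_{\ge0}$. Lemma~\ref{lem:mindim2} shows that a bounded, non-trivial, invertible character needs real dimension at least two. I will adopt the \emph{minimal} choice $\Phi\simeq\mathbb{R}^2$, since minimality is what the statement asserts. This gives (i).

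Next I will apply Lemma~\ref{lem:quadratic}. It shows that $f$ is the norm-squared of an inner product $\langle\cdot,\cdot\rangle$ on $\mathbb{R}^2$. I then choose orthonormal coordinates and write $f(\phi)=\|\phi\|^2$.

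For $\otimes$, I will interpret sequential composition as a bilinear product on $\mathbb{R}^2$. This product is commutative on the range of $\chi$ by Lemma~\ref{lem:noH}. Probability normalization (R6) under repeated composition forces $\|\phi_1\otimes\phi_2\|=\|\phi_1\|\,\|\phi_2\|$: at least, $\|\chi(A)\|=1$ on the character image, and I will extend this to all of $\Phi$ by bilinearity and scaling. A two-dimensional real normed algebra with a multiplicative norm and a unit $\chi(0)$ is a composition algebra. By Hurwitz's theorem, in dimension two it is $\mathbb{C}$. This identifies $\otimes$ with complex multiplication and $\|\cdot\|^2$ with $|\cdot|^2$, which gives (ii) and (iii).

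\textbf{Main obstacle.} The delicate step is justifying that the norm is multiplicative on all of $\Phi$, not merely on the unit-modulus image of $\chi$. I would try to generate $\Phi$ as real combinations of the $\chi(A)$. If the image is the whole circle, bilinearity propagates multiplicativity, so I need $\chi$ to have non-trivial, dense image. Non-triviality follows from (R6), and density I expect to follow from continuity.

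For the final claim, write $\chi:\mathbb{R}\to U(1)$ with $\chi(A_1+A_2)=\chi(A_1)\chi(A_2)$ and $\chi$ continuous. Choose $\delta>0$ small enough that $\chi([0,\delta])$ lies in a neighbourhood of $1$ where a continuous logarithm exists, and set $\theta(A)=-i\log\chi(A)$ there. Then $\theta$ satisfies Cauchy's additive equation locally, so $\theta(A)=\eta A$ by continuity. The multiplicative law extends $\chi(A)=e^{i\eta A}$ to all of $\mathbb{R}$ via $\chi(nA)=\chi(A)^n$. Equivalently, I can cite the lifting of $\chi$ through the covering $\mathbb{R}\to U(1)$, $t\mapsto e^{it}$, to a continuous additive map $\mathbb{R}\to\mathbb{R}$, which must be linear. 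Reality of $\eta$ follows from $|\chi|=1$.
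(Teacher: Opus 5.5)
Your outline is essentially the paper's proof. Lemma~\ref{lem:quadratic} makes $f$ a norm-squared on $\mathbb{R}^2$, a multiplicative norm plus the classification of normed real algebras yields $\mathbb{C}$, and the phase comes from the continuous Cauchy equation. Where you deviate, you improve on it:
\begin{itemize}
\item The paper lists $\mathbb{R},\mathbb{C},\mathbb{H}$ and discards $\mathbb{H}$ via Lemma~\ref{lem:noH}. That lemma only forces commutativity on the image of $\chi$, which a copy of $\mathbb{C}\subset\mathbb{H}$ satisfies, and the list also omits $\mathbb{O}$. You work in $\mathbb{R}^2$ from the outset, which Lemma~\ref{lem:quadratic} already presupposes, and apply Hurwitz in dimension two; this sidesteps both issues.
\item Your local-logarithm or covering-space lift makes explicit what the paper's ``write $\chi(A)=e^{i\theta(A)}$'' leaves implicit: $\theta$ must be the continuous lift before additivity can be read off.
\end{itemize}
Like the paper, you assume rather than derive the real-linear representation of $\oplus$; (R0)--(R2) and (R7) alone do not provide it.

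The step you flag is one the paper merely asserts, and your reduction is correct. If $\chi(\mathbb{R})$ is the whole unit circle, homogeneity of $\otimes$ extends $\|\phi_1\otimes\phi_2\|=\|\phi_1\|\,\|\phi_2\|$ and the unit $\chi(0)$ to all of $\mathbb{R}^2$. But continuity alone does not give surjectivity: a continuous non-constant curve on the circle can cover only a short arc. Use the homomorphism property instead, and split into two cases.

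\emph{Non-injective case.} If $\chi(A_1)=\chi(A_2)$ with $A_1\neq A_2$, then $\chi(B+A_2-A_1)=\chi(A_2)\otimes\chi(B-A_1)=\chi(A_1)\otimes\chi(B-A_1)=\chi(B)$ for all $B$. So the periods of $\chi$ form a closed subgroup of $\mathbb{R}$. Since $\chi$ is non-constant, this subgroup is $c\mathbb{Z}$ with $c>0$. Then $\chi$ induces a continuous injection of the circle $\mathbb{R}/c\mathbb{Z}$ into the unit circle. This map must be onto, since no proper subset of a circle is homeomorphic to a circle.

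\emph{Injective case.} Lifting $\chi$ to a strictly monotone angle shows that $p=\lim_{A\to+\infty}\chi(A)$ exists, with $\|p\|=1$. Letting $A\to+\infty$ in $\chi(A+B)=\chi(A)\otimes\chi(B)$ gives $p\otimes\chi(B)=p$ for all $B$. Take $\chi(B_1),\chi(B_2)$ non-antipodal and $\chi(B_3)$ off the line through them, which is possible because a line meets the circle in at most two points. Then $\chi(B_3)=\alpha\chi(B_1)+\beta\chi(B_2)$ with $\alpha+\beta\neq 1$. Linearity of $\phi\mapsto p\otimes\phi$ gives $p=(\alpha+\beta)p$, hence $p=0$, a contradiction.

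With this inserted, your argument is complete at the same level of assumption as the paper.
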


\begin{proof}
By Lemma~\ref{lem:quadratic}, $f$ is a norm-squared on $\mathbb{R}^2$. The compatibility of sequential composition with probabilities and with reversibility requires a multiplicative norm on the algebra induced by $\otimes$. A standard classification (finite-dimensional real algebras admitting a multiplicative norm) yields $\mathbb{R},\mathbb{C},\mathbb{H}$ as candidates; $\mathbb{R}$ is excluded by Lemma~\ref{lem:noRplus}, and $\mathbb{H}$ is excluded by Lemma~\ref{lem:noH}. Thus $\Phi\simeq \mathbb{C}$.

For the character form, write $\chi(A)=e^{i\theta(A)}$ with $\theta(0)=0$. Continuity and multiplicativity imply $\theta(A_1+A_2)=\theta(A_1)+\theta(A_2)$, so $\theta(A)=\eta A$ (continuous Cauchy equation), giving Eq.~\eqref{eq:chi_ei}.
\end{proof}
This establishes the complex phase $e^{i\eta A}$ as the unique representation of additive action compatible with probability normalization and consistent composition.

%----------------------------------------------------
\section{Propagator: Uniqueness and Semigroup Property}
\label{app:propagator}

This appendix provides the complete proofs of the results stated in Sec.~\ref{sec:propagator}.

\subsection{Uniqueness of the coherent kernel}

\begin{theorem}[Uniqueness of coherent kernel]
\label{th:uniqueness}
Let $g(A;b|a) \geq 0$ satisfy the composition law~\eqref{eq:composition_g} and let $\chi(A) = e^{i\eta A}$.
Any kernel functional $\mathcal{K}[g]$, $\mathcal{K}(b|a)$, satisfying:
\begin{enumerate}[label=(\arabic*)]
    \item Linearity in $g$: for non-overlapping supports, $\mathcal{K}[g_1 + g_2] = \mathcal{K}[g_1] + \mathcal{K}[g_2]$;
    \item Composition: $\mathcal{K}(c|a) = \int_{\mathcal{Q}} db\, \mathcal{K}(c|b)\, \mathcal{K}(b|a)$;
    \item Character compatibility: a contribution with definite action $A$ enters with weight $\chi(A)$;
\end{enumerate}
has the form
\begin{equation}
\mathcal{K}(b|a) = \int_{\mathcal{A}} dA\, g(A;b|a)\, \chi(A).
\label{eq:K_unique}
\end{equation}
\end{theorem}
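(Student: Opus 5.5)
The plan is to first reduce $\mathcal{K}$ to its action on definite-action contributions, then use linearity to reassemble $g$, and finally check that the resulting kernel is consistent with the composition requirement (2), which the kernel must satisfy but which does not by itself fix the form.

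\textbf{Step 1 (decomposition).} Partition $\mathcal{A}$ into disjoint intervals $I_j$ of width $\Delta$. Write $g(\cdot;b|a)=\sum_j g_j$, where $g_j=g\,\mathbf{1}_{I_j}$. The pieces have non-overlapping supports, so (1) gives
\[
\mathcal{K}[g]=\sum_j \mathcal{K}[g_j].
\]
Each $g_j$ is a contribution whose action is definite up to $\Delta$, with total multiplicity $\int_{I_j} g\,dA$. By character compatibility (3), such a contribution enters with weight $\chi(A_j)$ times its multiplicity, where $A_j \in I_j$. Hence
\[
\mathcal{K}[g_j]=\chi(A_j)\int_{I_j} g\,dA + o(\Delta)\int_{I_j} g .
\]
I will read (3) as saying that $\mathcal{K}$ is homogeneous of degree one in the multiplicity of a definite-action contribution. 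Making that reading explicit is needed, since (1) alone only gives additivity.

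\textbf{Step 2 (limit).} Summing over $j$ gives a Riemann--Stieltjes sum for $\int g\,\chi\,dA$. Since $|\chi|=1$, $\chi$ is uniformly continuous, and $g$ is integrable by Axiom~\ref{ax:density}(a), the error is bounded by $\omega_\chi(\Delta)\int g\,dA \to 0$. This yields Eq.~\eqref{eq:K_unique}. To pass from finite partitions to the limit, a continuity property of $\mathcal{K}$ (for instance in $L^1$ in $A$) is also required, and I will add it as an implicit regularity assumption.

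\textbf{Step 3 (consistency with (2)).} I will verify that the kernel of Eq.~\eqref{eq:K_unique} actually satisfies the composition law, so that the three requirements are not contradictory. Substitute the composition law~\eqref{eq:composition_g} into $\int dA\,g(A;c|a)e^{i\eta A}$. Change variables to $A'$ and $A''=A-A'$, and use $\chi(A'+A'')=\chi(A')\chi(A'')$. Fubini, justified by absolute integrability since $|\chi|=1$ and $g\ge0$ is integrable, then factorizes the integral into $\int db\,K(c|b)K(b|a)$. This also proves the semigroup property~\eqref{eq:semigroup}.

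The main obstacle is Step 1. The hypotheses as stated give additivity only on disjoint supports and do not specify how $\mathcal{K}$ scales with the weight of a single definite-action piece. Without homogeneity and continuity, pathological additive functionals are not excluded. I would therefore state explicitly that ``enters with weight $\chi(A)$'' means $\mathcal{K}[c\,\delta_A]=c\,\chi(A)$ for $c\ge0$, together with continuity. Under those assumptions the result is the standard representation of a continuous linear functional against a measure, via the Riesz representation argument.
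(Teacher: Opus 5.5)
Your route is the paper's route. The paper evaluates a delta-concentrated density $\delta(A-A_0)\rho(b|a)$ using (3), superposes using (1), and only afterwards checks (2) via the composition law and $\chi(A_1+A_2)=\chi(A_1)\chi(A_2)$, exactly as in your Step 3. You are right that the paper's superposition step silently uses two things. It uses positive homogeneity, already in writing $\mathcal{K}=\chi(A_0)\rho(b|a)$. It also uses a continuity property that lets (1), which is stated for disjoint pieces, pass to a continuum superposition. Making both explicit improves on the paper.

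The specific continuity you propose does not work, however. Read (3) as you suggest, $\mathcal{K}[c\,\delta_A]=c\,\chi(A)$. Then $L^1$ (total-variation) continuity gives no link between $\mathcal{K}$ on point masses and $\mathcal{K}$ on the spread-out pieces $g\mathbf{1}_{I_j}$, so the display in your Step 1 does not follow. Here is a counterexample. Take the Lebesgue decomposition $\mu=\mu_{\mathrm{ac}}+\mu_{\mathrm{s}}$ and fix $\eta'\neq\eta$. Define
\[
\mathcal{K}[\mu]=\int e^{i\eta' A}\,d\mu_{\mathrm{ac}}(A)+\int \chi(A)\,d\mu_{\mathrm{s}}(A).
\]
This functional is linear and total-variation continuous, and it satisfies $\mathcal{K}[c\,\delta_A]=c\,\chi(A)$. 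It also satisfies (2) for every density $g$ obeying the composition law, because $e^{i\eta' A}$ is also a character, so your Step 3 computation goes through with $\eta'$ in place of $\eta$. Yet on densities it returns $\int g\,e^{i\eta' A}\,dA$, not $\int g\,\chi\,dA$.

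What you need is continuity under narrow (weak) convergence of finite measures, meaning convergence tested against bounded continuous functions. With that hypothesis, the atomic approximants $\sum_j\bigl(\int_{I_j}g\,dA\bigr)\delta_{A_j}$, truncated to a compact window, converge narrowly to $g\,dA$. Finite additivity plus homogeneity evaluates $\mathcal{K}$ on them as $\sum_j\chi(A_j)\int_{I_j}g\,dA$, and your Riemann-sum estimate finishes the proof. The representation actually at work is integration against the continuous function $A\mapsto\mathcal{K}[\delta_A]=\chi(A)$, not the Riesz representation theorem.

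A minor point: the per-piece error in Step 1 is $O(\Delta)$, bounded by $\omega_\chi(\Delta)\le|\eta|\Delta$, not $o(\Delta)$. This is harmless, since Step 2 already uses the correct bound.
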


\begin{proof}
For a density concentrated at a single action value: $g(A;b|a) = \delta(A - A_0)\, \rho(b|a)$.
By condition (iii), $\mathcal{K}(b|a) = \chi(A_0)\, \rho(b|a)$.

For general $g$, condition (i) gives:
\[
\mathcal{K}(b|a) = \int_{\mathcal{A}} dA\, g(A;b|a)\, \chi(A).
\]

To verify condition (ii), use the composition law~\eqref{eq:composition_g} and multiplicativity of $\chi$:
\begin{align*}
\int_{\mathcal{Q}} db\, \mathcal{K}(c|b)\,\mathcal{K}(b|a)
&= \int_{\mathcal{Q}} db \int_{\mathcal{A}} dA_1\, g(A_1;b|a)\, \chi(A_1)\\
& \times \int_{\mathcal{A}} dA_2\, g(A_2;c|b)\, \chi(A_2) \\
&= \int_{\mathcal{Q}} db \int_{\mathcal{A}} dA_1 \int_{\mathcal{A}} dA_2\,\\ 
&\times g(A_1;b|a)\, g(A_2;c|b)\, \chi(A_1 + A_2).
\end{align*}

Change variables: $A = A_1 + A_2$, $A' = A_1$. Using~\eqref{eq:composition_g}:
\[
= \int_{\mathcal{A}} dA\, g(A;c|a)\, \chi(A) = \mathcal{K}(c|a).
\]
\end{proof}

\subsection{Propagator definition and semigroup property}

\begin{definition}[Quantum propagator]
\label{def:propagator_app}
The \emph{quantum propagator} is the complex-valued kernel
\begin{equation}
K(b|a)
=
\int_{\mathcal{A}} dA\, g(A;b|a)\, e^{i\eta A}.
\label{eq:K_def_app}
\end{equation}
\end{definition}

\begin{remark}
Equation~\eqref{eq:K_def_app} is a derived result, not an axiom.
The propagator emerges uniquely from:
(i) the physical density $g(A;b|a)$,
(ii) the consistency requirement that indistinguishable alternatives combine as complex amplitudes, and
(iii) the multiplicativity of the character under action addition.
\end{remark}

\begin{theorem}[Semigroup property of $K$]
\label{th:semigroup_app}
For any configurations $a,c\in\mathcal{Q}$,
\begin{equation}
K(c|a)
=
\int_{\mathcal{Q}} db\, K(c|b)\,K(b|a).
\label{eq:semigroup_app}
\end{equation}
\end{theorem}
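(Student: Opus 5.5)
The plan is to prove the semigroup property by direct substitution of Definition~\ref{def:propagator_app} into the right-hand side of Eq.~\eqref{eq:semigroup_app}, followed by a single application of the composition law~\eqref{eq:composition_g}. Throughout, the time labels are as in Axiom~\ref{ax:density}(b): $K(b|a)$ carries duration $T_1$, $K(c|b)$ carries $T_2$, and $K(c|a)$ carries $T_1+T_2$. This is essentially the computation already used to verify condition (2) in the proof of Theorem~\ref{th:uniqueness}. Here it is to be done with explicit justification of the interchanges of integration.

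\textbf{Step 1 (substitution and multiplicativity).} Insert $K(b|a)=\int dA_1\, g(A_1;b,T_1|a)e^{i\eta A_1}$ and $K(c|b)=\int dA_2\, g(A_2;c,T_2|b)e^{i\eta A_2}$ into the right-hand side of Eq.~\eqref{eq:semigroup_app}. Then combine the two phases using the character property $e^{i\eta A_1}e^{i\eta A_2}=e^{i\eta(A_1+A_2)}$ from Theorem~\ref{th:complex_rep}. This produces a triple integral over $b$, $A_1$, $A_2$ of $g(A_1;b|a)\,g(A_2;c|b)\,e^{i\eta(A_1+A_2)}$.

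\textbf{Step 2 (change of variables).} Set $A=A_1+A_2$ and $A'=A_1$; the Jacobian of this map is $1$. Then move the $A$-integral outermost. The remaining inner integral $\int db\int dA'\, g(A';b,T_1|a)\,g(A-A';c,T_2|b)$ is exactly the right-hand side of Eq.~\eqref{eq:composition_g}, so it equals $g(A;c,T_1+T_2|a)$. What is left is $\int dA\, g(A;c,T_1+T_2|a)\,e^{i\eta A}=K(c|a)$.

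\textbf{Main obstacle: Fubini--Tonelli.} The only real work is justifying the reordering of integrals in Steps 1 and 2. I would handle this as follows.
\begin{itemize}
\item Since $|e^{i\eta A}|=1$, the modulus of the integrand is bounded by the nonnegative function $g(A_1;b|a)\,g(A_2;c|b)$.
\item By Tonelli, the integral of this bound over $(b,A_1,A_2)$ can be computed in any order. Using the composition law again, it equals $\int dA\, g(A;c,T_1+T_2|a)$.
\item This last quantity is finite by the integrability in Axiom~\ref{ax:density}(a).
\end{itemize}
The integrand is therefore absolutely integrable, and Fubini permits every rearrangement used above.

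If $\mathcal{A}$ is a proper subset of $\mathbb{R}$, I would extend $g$ by zero outside $\mathcal{A}$ so that the change of variables is unrestricted. For discrete or hybrid $\mathcal{Q}$, the $db$ integral is read as a sum or mixed measure; the argument is unchanged.
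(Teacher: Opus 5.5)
Your proposal is correct and follows the paper's explicit verification step for step: substitute the definition of $K$, merge the phases via $e^{i\eta A_1}e^{i\eta A_2}=e^{i\eta(A_1+A_2)}$, change variables to $A=A_1+A_2$, and collapse the inner integral with the composition law~\eqref{eq:composition_g}. Your Tonelli bound is an improvement in rigor that the paper omits; it dominates the integrand by $g\,g$, whose total integral equals $\int dA\, g(A;c,T_1+T_2|a)<\infty$ by Axiom~\ref{ax:density}(a). One caveat: extending $g$ by zero tacitly requires the composition law to hold for every $A\in\mathbb{R}$, i.e.\ that sums of actions in $\mathcal{A}$ stay in $\mathcal{A}$, an assumption the paper also makes silently.
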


\begin{proof}
This follows directly from Theorem~\ref{th:uniqueness}, condition (ii). Alternatively, we verify~\eqref{eq:semigroup_app} by explicit calculation:
\begin{align*}
\int_{\mathcal{Q}} db\, K(c|b)\,K(b|a) &= \int_{\mathcal{Q}} db \int_{\mathcal{A}} dA_1\, g(A_1;b|a)\, e^{i\eta A_1}\\
&\times \int_{\mathcal{A}} dA_2\, g(A_2;c|b)\, e^{i\eta A_2} \\
&= \int_{\mathcal{Q}} db \int_{\mathcal{A}} dA_1 \int_{\mathcal{A}} dA_2\, g(A_1;b|a)\\
& \times g(A_2;c|b)\, e^{i\eta (A_1 + A_2)}.
\end{align*}
Changing variables to $A = A_1 + A_2$ and using the composition law~\eqref{eq:composition_g}:
\[
\int_{\mathcal{Q}} db\, g(A_1;b|a)\, g(A - A_1;c|b) = g(A;c|a),
\]
we obtain
\[
\int_{\mathcal{Q}} db\, K(c|b)\,K(b|a) = \int_{\mathcal{A}} dA\, g(A;c|a)\, e^{i\eta A} = K(c|a).
\]
\end{proof}
This completes the derivation of the propagator as the unique dynamical object consistent with the axioms.
%----------------------------------------------------------------------------------------------
\section{Fourier Duality in Action Space}
\label{app:fourier_duality}

This appendix clarifies the purely mathematical role of the parameter $\eta$ as the variable conjugate to action under Fourier transformation. No physical assumption is introduced here; the discussion serves only to identify the structural meaning of $\eta_{\text{phase}}$ used in the definition of the propagator.

\subsection{Action--phase conjugacy}

Once the density of action states $g(A;b|a)$ is given and coherent combination of indistinguishable alternatives is required, the propagator takes the form
\begin{equation}
K(b|a;\eta)
=
\int_{\mathcal{A}} dA\, g(A;b|a)\, e^{i\eta A}.
\label{eq:fourier_prop}
\end{equation}
Mathematically, Eq.~\eqref{eq:fourier_prop} is a Fourier transform of $g(A;b|a)$ with respect to the action variable $A$.
The parameter $\eta$ therefore plays the role of the variable conjugate to action in the sense of harmonic analysis.

This identification is independent of the inferential origin of $\eta$. At this stage, $\eta$ is simply the phase parameter labeling the Fourier components of the action distribution. We denote it explicitly as $\eta_{\text{phase}}$ when its role as a Fourier conjugate is emphasized.

\subsection{Separation of roles}

It is important to stress that Fourier duality alone does \emph{not} fix the numerical value or physical scale of $\eta_{\text{phase}}$. The transform~\eqref{eq:fourier_prop} is well-defined for any real value of $\eta$, and Fourier analysis by itself does not select a preferred scale.

The physical meaning of $\eta$ arises only when this mathematical role is combined with:
\begin{enumerate}
    \item inference via maximum entropy in action space,
    \item the Cramér--Rao bound and operational indistinguishability of action,
    \item the requirement that interference occur precisely at the resolution limit of distinguishable alternatives.
\end{enumerate}

Fourier duality thus identifies \emph{what} $\eta$ is (a phase conjugate to action), but not \emph{how large} it must be.
The latter question is addressed in Appendix~\ref{app:scale_unification}, where all appearances of $\eta$ are shown to correspond to a single universal scale.

%---------------------------------------------------------------------------------------------
\section{Unification of Scales: Why the Same $\eta$ Appears Everywhere}
\label{app:scale_unification}

This appendix provides the detailed justification for why the parameter $\eta$ appearing in the MaxEnt distribution, the Cramér--Rao bound, and the phase character must be identified as a single universal constant.

\subsection{Three appearances of a scale parameter}

The parameter $\eta$ enters the theory in three distinct mathematical contexts:

\begin{enumerate}
    \item \textbf{MaxEnt distribution:} $P(A,b|a) \propto g(A;b|a)\, e^{-\eta_{\text{MaxEnt}} A}$

    \item \textbf{Cramér--Rao bound:} $\Delta A_{\min} \sim 1/\eta_{\text{CR}}$

    \item \textbf{Phase character:} $\chi(A) = e^{i\eta_{\text{phase}} A}$
\end{enumerate}

A priori, these could be three independent parameters. We now show that physical consistency requires their identification.

\subsection{Scale matching argument}

\noindent\textbf{Step 1: MaxEnt scale.}
The exponential $e^{-\eta_{\text{MaxEnt}} A}$ in the MaxEnt distribution suppresses large actions. The characteristic action scale is
\begin{equation}
A_{\text{typical}} \sim \frac{1}{\eta_{\text{MaxEnt}}}.
\label{eq:scale_maxent_app}
\end{equation}

\noindent\textbf{Step 2: Cramér--Rao indistinguishability scale.}
The Cramér--Rao bound $\Delta A \cdot \Delta \eta \geq 1$, combined with $\Delta \eta \sim \eta$ from Fourier conjugacy established in Appendix~\ref{app:fourier_duality}, gives
\begin{equation}
\Delta A_{\min} \sim \frac{1}{\eta_{\text{CR}}}.
\label{eq:scale_cr_app}
\end{equation}
This is the resolution scale: action differences smaller than $1/\eta_{\text{CR}}$ are operationally indistinguishable.

\noindent\textbf{Step 3: Phase accumulation scale.}
For interference effects to be observable, the phase difference $\Delta\phi = \eta_{\text{phase}}\Delta A$ must be of order unity when $\Delta A \sim \Delta A_{\min}$:
\begin{equation}
\eta_{\text{phase}} \cdot \Delta A_{\min} \sim 1 \quad\Rightarrow\quad
\eta_{\text{phase}} \sim \frac{1}{\Delta A_{\min}}.
\label{eq:scale_phase_app}
\end{equation}

\noindent\textbf{Step 4: Physical consistency.}
Comparing Eqs.~\eqref{eq:scale_maxent_app}--\eqref{eq:scale_phase_app}, all three parameters define the same physical scale:
\begin{equation}
\frac{1}{\eta_{\text{MaxEnt}}} \sim \frac{1}{\eta_{\text{CR}}} \sim \frac{1}{\eta_{\text{phase}}}
\sim \Delta A_{\min}.
\end{equation}

\begin{proposition}[Unification of scales]
\label{prop:eta_unification}
Physical consistency requires
\begin{equation}
\eta_{\text{MaxEnt}} = \eta_{\text{CR}} = \eta_{\text{phase}} \equiv \eta.
\label{eq:eta_unified}
\end{equation}
\end{proposition}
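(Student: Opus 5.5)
The plan is to turn the order-of-magnitude relation of Step~4 into an exact equality. Order-of-magnitude matching alone only gives equality up to constant factors, so something must remove those factors. I will use two structural facts already established: the propagator $K(b|a;\eta)=\int dA\,g\,e^{i\eta A}$ is the Fourier transform of $g$ (Appendix~\ref{app:fourier_duality}), and the partition function $Z(\eta;a)=\int db\,dA\,g\,e^{-\eta A}$ is its Laplace transform. Both are built from the same reference measure $g$ and the same additive group $(\mathbb{R},+)$. I will argue that they are the real and imaginary restrictions of a single function of a complex conjugate variable.

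\textbf{Phase and MaxEnt parameters.} Define $\mathcal{Z}(z)=\int db\,dA\,g(A;b|a)\,e^{izA}$ for complex $z$. By Axiom~\ref{ax:density}(a),(e) and Gaussian short-time decay (Proposition~\ref{prop:gaussian_short_time}), this is analytic in a strip. The phase factor $e^{i\eta_{\text{phase}}A}$ and the Boltzmann factor $e^{-\eta_{\text{MaxEnt}}A}$ are both continuous characters of $(\mathbb{R},+)$. By the Cauchy-equation argument of Theorem~\ref{th:complex_rep}, each is determined by a single slope. The first step is to show that, for the composition law to hold simultaneously for $K$ (Theorem~\ref{th:semigroup_app}) and for the MaxEnt weights, which compose via the same convolution \eqref{eq:composition_g}, the two characters must be $e^{izA}$ evaluated at $z=\eta$ and $z=i\eta$ for one and the same modulus $\eta$. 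Concretely, I will require that the action scale carried by the character be independent of the reality condition imposed on it. Otherwise the composite weight $g\,e^{-\eta_1A}e^{i\eta_2A}$ would define two inequivalent action units within a single additive variable. In that case the action number $\mathcal{N}=\eta A$ of \eqref{eq:action_number} would be ill-defined.

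\textbf{CR parameter.} In the Cramér--Rao bound, the parameter being estimated is literally the MaxEnt multiplier: the Fisher information \eqref{eq:I_eta_varA} is computed with respect to $\eta_{\text{MaxEnt}}$. Hence $\eta_{\text{CR}}=\eta_{\text{MaxEnt}}$ by definition, not merely up to order one. Only the prefactor in $\Delta A_{\min}\sim1/\eta$ is convention-dependent, and I will absorb it into the definition of $\Delta A_{\min}$, as in Definition~\ref{def:indist}.

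\textbf{Main obstacle.} The hard step is the first one. Nothing in the Fourier/Laplace structure forbids independent real and imaginary parts of $z$, so exact equality $\eta_{\text{MaxEnt}}=\eta_{\text{phase}}$ cannot be a pure theorem. I would try to close the gap by demanding that the indistinguishability window of Theorem~\ref{th:infinitesimal_indist} coincide exactly with the phase-coherence window. Failing that, I would state honestly that the identification is a minimality or naturalness requirement: a single action unit, with any residual ratio fixed empirically together with $\hbar$ in Sec.~\ref{sec:hbar}.
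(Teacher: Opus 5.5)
Your first step cannot succeed; the problem is not just that it is hard. The composition law \eqref{eq:composition_g} is preserved when $g$ is multiplied by \emph{any} character $e^{\lambda A}$, $\lambda\in\mathbb{C}$, because $e^{\lambda A'}e^{\lambda(A-A')}=e^{\lambda A}$.

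\begin{itemize}
\item Hence $g\,e^{-\eta_1 A}$, $g\,e^{i\eta_2 A}$ and even $g\,e^{(i\eta_2-\eta_1)A}$ all compose consistently for arbitrary, unrelated $\eta_1,\eta_2$. The proof of Theorem~\ref{th:semigroup_app} uses only multiplicativity, so it holds for every real $\eta_{\text{phase}}$. Demanding that composition hold simultaneously for $K$ and for the MaxEnt weights therefore imposes no relation between the two slopes.
\item Analyticity of $\mathcal{Z}(z)$ does not help. It ties together the values of one function at all points of its domain equally, and nothing singles out the equal-modulus pair $z=\eta$, $z=i\eta$. That pairing is a Wick-rotation stipulation, i.e.\ the conclusion restated.
\item On the real axis $\mathcal{Z}$ is $\int db\,K(b|a)$, not $K(b|a)$. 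So $K$ and $Z$ are not two restrictions of a single function in the first place.
\item Two action units do not make $\mathcal{N}$ in \eqref{eq:action_number} ill-defined. One simply has two dimensionless ratios, and the composite weight $g\,e^{-\eta_1A}e^{i\eta_2A}$ never appears in the theory.
\end{itemize}

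What survives is your fallback, and that is essentially the paper's proof. The paper matches the MaxEnt, resolution and phase scales in order of magnitude (Steps 1--4 of Appendix~\ref{app:scale_unification}). It then argues that a phase scale different from the resolution scale would make operationally distinguishable alternatives interfere. It also argues that a different MaxEnt scale would weight distinguishable and indistinguishable regions inconsistently.

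The windows involved are only defined up to $O(1)$ factors (Definition~\ref{def:indist} and the remark after it). So that argument also yields only $\eta_{\text{MaxEnt}}\sim\eta_{\text{CR}}\sim\eta_{\text{phase}}$. The final equality is an identification of a single universal parameter, with numerical conventions absorbed into $\Delta A_{\min}$ and fixed together with $\hbar$. That is exactly the status you reach, so do not expect a sharper mechanism for the exact equality.

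Your remark that $\eta_{\text{CR}}$ is literally the MaxEnt multiplier is a cleaner treatment of that pair than the paper's appeal to Fourier conjugacy in Step~2. The Fisher information \eqref{eq:I_eta_varA} is indeed computed with respect to that multiplier. Note, however, that the Cram\'er--Rao inequality alone gives only $\Delta\eta\ge 1/\Delta A$. Tying $\Delta A_{\min}$ to $1/\eta$ still requires the naturalness input of Proposition~\ref{prop:natural_scaling}.

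So drop the analytic-continuation step. Present the argument as a definitional identity for the CR/MaxEnt pair plus window matching for the phase, explicitly labelled as a naturalness identification.
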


\begin{proof}
If the MaxEnt characteristic scale differed from the indistinguishability scale, the probability distribution would assign inconsistent weights to distinguishable versus indistinguishable action regions. If the phase accumulation scale differed
from the indistinguishability scale, interference would occur between operationally distinguishable alternatives, violating the meaning of indistinguishability. Therefore all three scales must coincide.
\end{proof}

The single parameter $\eta$ thus governs three aspects of the same physical phenomenon: the operational indistinguishability underlying quantum dynamics.

%----------------------------------------------------
\section{Operational Calibration of $\eta$: Double-Slit Example}
\label{app:double_slit}

This appendix demonstrates how $\eta$ can be determined experimentally from interference measurements, establishing the identification $\eta = 1/\hbar$.

\subsection{Experimental setup}

Consider the double-slit experiment with electrons:
\begin{itemize}
    \item Electron beam with momentum $p$
    \item Two slits separated by distance $d$
    \item Detection screen at distance $D$ from the slits ($D \gg d$)
    \item Observable: interference fringe spacing $\Delta y$
\end{itemize}

\subsection{Action difference between paths}

At screen position $y$, the path length difference is
\begin{equation}
\delta L \approx \frac{d \cdot y}{D} \quad (D \gg d, y).
\end{equation}
For a free particle with momentum $p$, the action difference is
\begin{equation}
\delta A = p \cdot \delta L = \frac{p \cdot d \cdot y}{D}.
\label{eq:action_diff_ds}
\end{equation}

\subsection{Phase difference and fringe spacing}

According to ITA, the phase difference between the two paths is
\begin{equation}
\Delta\phi = \eta \cdot \delta A = \frac{\eta \cdot p \cdot d \cdot y}{D}.
\end{equation}
Constructive interference occurs when $\Delta\phi = 2\pi n$, giving fringe positions
\begin{equation}
y_n = \frac{2\pi n D}{\eta \cdot p \cdot d}.
\end{equation}
The fringe spacing is therefore
\begin{equation}
\Delta y = \frac{2\pi D}{\eta \cdot p \cdot d}.
\label{eq:fringe_spacing_eta}
\end{equation}

\subsection{Identification with $\hbar$}

Standard quantum mechanics gives the fringe spacing as
\begin{equation}
\Delta y_{\text{QM}} = \frac{\lambda D}{d} = \frac{2\pi\hbar D}{p \cdot d},
\end{equation}
where $\lambda = h/p$ is the de Broglie wavelength. Comparing with Eq.~\eqref{eq:fringe_spacing_eta}:
\begin{equation}
\boxed{\eta = \frac{1}{\hbar}}
\end{equation}

\subsection{Operational procedure}

To determine $\eta$ without assuming $\hbar$:
\begin{enumerate}
    \item Perform double-slit experiment with electrons of known momentum $p$.
    \item Measure fringe spacing $\Delta y$.
    \item Compute $\eta = 2\pi D/(p \cdot d \cdot \Delta y)$.
    \item Verify universality by repeating with different $p$, $d$, $D$.
\end{enumerate}
The inferred value of $\eta$ is identified with the inverse of Planck’s constant,
\[
\eta = \frac{1}{\hbar}.
\]

\begin{remark}[De Broglie relation as a consequence]
The de Broglie relation $\lambda = h/p$ follows from ITA: interference maxima occur when $\eta \cdot p \cdot \delta L = 2\pi n$, so the wavelength satisfying $\eta \cdot p \cdot \lambda = 2\pi$ is $\lambda = 2\pi/(\eta p) = h/p$.
\end{remark}

%----------------------------------------------------
\section{Derivation of Short-Time $g$: Technical Details}
\label{app:shorttime_derivation}

This appendix provides technical details supporting the derivation in Section~\ref{sec:shorttime_g}.

\subsection{Gaussians from convolution stability}

\begin{theorem}[Gaussian uniqueness]
\label{th:gaussian_unique}
Here $T$ denotes a positive additive parameter (e.g.\ coarse-grained time or number of composition steps). Let $\{g_T\}_{T>0}$ be a family of probability densities on $\mathbb{R}$ forming a convolution semigroup with finite variance and continuity at the origin. Then $g_T$ is Gaussian with variance proportional to $T$.
\end{theorem}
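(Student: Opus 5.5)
The plan is to work entirely with characteristic functions, where convolution becomes multiplication. Set $\varphi_T(k)=\int e^{ikA}g_T(A)\,dA$. The semigroup property gives $\varphi_{T_1+T_2}=\varphi_{T_1}\varphi_{T_2}$. Continuity at the origin, $g_T\Rightarrow\delta$, gives $\varphi_T(k)\to1$ pointwise as $T\to0^+$. From these two facts I would show $\varphi_T(k)\neq0$ for all $k$ and $T$. I would then take the continuous logarithm and solve the Cauchy equation in $T$, obtaining $\varphi_T(k)=e^{T\psi(k)}$ for a single exponent $\psi$. Consequently $g_1$ is infinitely divisible.

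Next I invoke the Lévy--Khintchine representation for $\psi$, the classification cited before Proposition~\ref{prop:gaussian_short_time}. Because the variance is finite, $\int x^2\,\nu(dx)<\infty$, so the truncation can be dropped:
\[
\psi(k)=i\mu k-\tfrac12\sigma^2k^2+\int_{\mathbb{R}}\bigl(e^{ikx}-1-ikx\bigr)\,\nu(dx).
\]
Differentiating twice at $k=0$ then shows that the mean is $\mu T$ and the variance is $(\sigma^2+\int x^2\,d\nu)\,T$. So linear growth of the variance holds for every admissible semigroup.

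The main obstacle is showing $\nu=0$, and this does not follow from the hypotheses as literally stated. The Gamma family $g_T(A)=A^{T-1}e^{-A}/\Gamma(T)$ on $A>0$ consists of probability densities. It forms a convolution semigroup, has finite variance $T$, and converges to $\delta$ as $T\to0^+$, yet it is not Gaussian. Its Lévy measure is $\nu(dx)=x^{-1}e^{-x}\,dx$ on $x>0$.

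The proof therefore needs one additional ingredient that excludes jumps, and I would supply it from the locality axiom, Axiom~\ref{ax:density}(d). Read in action space, that axiom requires small-time increments to carry no macroscopic jumps. In formulas this is a Lindeberg-type condition: for every $\epsilon>0$,
\[
\frac1T\int_{|A-\mu T|>\epsilon}g_T(A)\,dA\longrightarrow0 \quad\text{as } T\to0^+ .
\]
Together with the identity $\lim_{T\to0}T^{-1}g_T(\cdot)=\nu$ vaguely on $\mathbb{R}\setminus\{0\}$, this condition forces $\nu(\{|x|>\epsilon\})=0$ for every $\epsilon>0$, hence $\nu=0$. An equivalent route is the standard fact that a Lévy process has continuous paths if and only if $\nu=0$.

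With $\nu=0$ we have $\varphi_T(k)=\exp(i\mu Tk-\tfrac12\sigma^2Tk^2)$. If $\sigma^2=0$, then $g_T$ is a point mass and not a density, which contradicts the hypothesis; so $\sigma^2>0$. Fourier inversion then yields exactly Eq.~\eqref{eq:gaussian_marginal}, a Gaussian with variance $\sigma^2T$.
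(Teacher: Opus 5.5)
Up to the Lévy--Khintchine step your route is the paper's: characteristic functions, then $\hat g_T(k)=e^{T\psi(k)}$, then Lévy--Khintchine. The two part ways at the elimination of $\nu$, and there you are right and the paper is wrong. In Step~4 the paper asserts that $\int x^2\,\nu(dx)<\infty$, ``combined with the Lévy measure condition,'' forces $\nu=0$. That is a non sequitur, since the two conditions together only say $\int x^2\,\nu(dx)<\infty$, and many nonzero Lévy measures satisfy this.

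Your Gamma family $g_T(A)=A^{T-1}e^{-A}/\Gamma(T)$ is a valid counterexample to the theorem as literally stated. It is a convolution semigroup of probability densities with variance $T$, converges to $\delta$ as $T\to0^+$, and has Lévy measure $x^{-1}e^{-x}\,dx$. So no proof of the literal statement exists, and your diagnosis is the correct one.

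You also observe that every finite-variance semigroup has variance $(\sigma^2+\int x^2\,d\nu)\,T$. This shows that linear variance growth cannot single out the Gaussian either, which undercuts Corollary~\ref{cor:gaussian_from_diffusion} for the same reason. In addition, you supply details the paper skips: the non-vanishing of $\varphi_T$, the choice of a continuous logarithm before solving the Cauchy equation, and $\sigma^2>0$, which is needed for $g_T$ to be a density at all.

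The one point to tighten is where the extra hypothesis comes from. Axiom~\ref{ax:density}(d), as written, is locality in configuration space. It says $g(A;b,T|a)$ is negligible unless $|b-a|\lesssim v_{\max}T$, and it says nothing about how action is spread at fixed endpoints. It therefore does not imply your Lindeberg-type condition $T^{-1}\int_{|A-\mu T|>\epsilon}g_T(A)\,dA\to0$, and calling that condition a ``reading'' of locality oversells it.

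That condition should instead be added as an explicit hypothesis of the theorem, and correspondingly of Proposition~\ref{prop:gaussian_short_time}. Equivalent forms are continuity of the action process, or simply $\nu=0$. With it added, your argument via $T^{-1}g_T\to\nu$ vaguely on $\mathbb{R}\setminus\{0\}$ correctly forces $\nu=0$ and completes the proof of the amended statement.
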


\begin{proof}
We prove that Gaussians are the unique solution to the convolution stability requirement with finite variance.

\textbf{Setup:} Let $g_T$ be a family of probability densities on $\mathbb{R}$ satisfying:
\begin{enumerate}[label=(\alph*)]
    \item $g_{T_1} * g_{T_2} = g_{T_1+T_2}$ (convolution semigroup)
    \item $\int A^2 g_T(A)\, dA < \infty$ (finite variance)
    \item $g_T \to \delta_0$ weakly as $T \to 0$ (continuity at the origin).
\end{enumerate}

\textbf{Step 1: Characteristic function approach.}
Define $\hat{g}_T(k) = \int_{-\infty}^{\infty} e^{ikA} g_T(A)\, dA$. The convolution property becomes:
\[
\hat{g}_{T_1+T_2}(k) = \hat{g}_{T_1}(k) \cdot \hat{g}_{T_2}(k).
\]

\textbf{Step 2: Exponential form.}
Fix $k$ and define $\phi(T) = \hat{g}_T(k)$. The functional equation $\phi(T_1 + T_2) = \phi(T_1)\phi(T_2)$ with continuity implies:
\[
\phi(T) = e^{T\psi(k)}
\]
for some function $\psi(k)$, called the Lévy exponent.

\textbf{Step 3: Lévy--Khintchine representation.}
The Lévy--Khintchine theorem states that $\psi(k)$ must have the form:
\[
\psi(k) = i\mu k - \frac{\sigma^2 k^2}{2} + \int_{\mathbb{R}\setminus\{0\}} \left(e^{ikx} - 1 - ikx\mathbf{1}_{|x|<1}\right) \nu(dx),
\]
where $\mu \in \mathbb{R}$, $\sigma^2 \geq 0$, and $\nu$ is a Lévy measure satisfying $\int \min(1, x^2)\nu(dx) < \infty$.

\textbf{Step 4: Finite variance constraint.}
The variance of $g_T$ is:
\[
\text{Var}(g_T) = -\frac{d^2}{dk^2}\hat{g}_T(k)\Big|_{k=0} = T\left(\sigma^2 + \int x^2 \nu(dx)\right).
\]

For finite variance, we need $\int x^2 \nu(dx) < \infty$. But combined with the Lévy measure condition, this forces $\nu = 0$ (no jumps).

\textbf{Step 5: Drift constraint.}
Condition (iii) requires $g_T \to \delta_0$ weakly as $T\rightarrow 0^{+}$, meaning the distribution concentrates at the origin. For the Gussian semigroup with mean $\mu T$, the distribution is centered at $\mu T$ which vanisher as $T\rightarrow 0$ for any finite $\mu$; it is automatically satisfied. 

\textbf{Step 6: Conclusion.}
With $\nu = 0$ :
\[
\hat{g}_T(k) = e^{\imath \mu k T-\sigma^2 k^2 T/2},
\]
which is the characteristic function of a Gaussian with mean with mean $\mu T$ and variance $\sigma^2 T$:
\[
g_T(A) = \frac{1}{\sqrt{2\pi\sigma^2 T}} \exp\left(-\frac{(A-\mu T)^2}{2\sigma^2 T}\right).
\]
\end{proof}
The parameter $\mu$ represents the systematic rate of action accumulation (the drift in action space), which is related to the Lagrangian via $\mu=L(x,v)$ in the endpoint-dependent case (see Section~\ref{sec:shorttime_g}).

\subsection{Connection to the Cramér--Rao bound}

\begin{remark}[Consistency of Gaussian width with action resolution]
\label{prop:sigma_from_CR}
The width of the Gaussian action distribution must be consistent with the Cramér--Rao resolution scale.
\end{remark}

The Cramér--Rao bound (Corollary~\ref{cor:action_resolution}) establishes that action differences smaller than
\[
\Delta A_{\min} \sim \hbar
\]
are operationally indistinguishable.

For the Gaussian short-time distribution derived in the previous subsection, the characteristic spread of action values is
\[
\Delta A \sim \sigma\sqrt{T}.
\]

Self-consistency of the inferential description requires these two scales to match up to a numerical factor of order unity:
\[
\sigma\sqrt{T} \sim \Delta A_{\min} \sim \hbar.
\]

Indeed:
\begin{itemize}
    \item If $\sigma\sqrt{T} \ll \Delta A_{\min}$, the distribution is sharper than the operational resolution and carries no additional physical content.
    \item If $\sigma\sqrt{T} \gg \Delta A_{\min}$, the distribution spreads over many distinguishable action values, suppressing coherent behavior.
    \item The physically relevant regime is therefore $\sigma\sqrt{T} \sim \Delta A_{\min}$.
\end{itemize}

For infinitesimal durations $\Delta t$, this scaling implies
\[
\sigma^2 \Delta t \sim \frac{1}{\eta^2},
\]
consistent with the universal indistinguishability of infinitesimal action segments.

\textbf{Dimensional check:}
\[
[\sigma] = [A]/\sqrt{T},
\]
which is the correct dimension for an action-per-root-time parameter.

\subsection{Normalization from composition law}

\begin{proposition}[Normalization from composition consistency]
\label{prop:normalization}
Here we assume the quadratic short-time form dictated by the classical action and show that its normalization is uniquely fixed by the semigroup property. The prefactor $N(\Delta t) = (m/2\pi i\hbar\Delta t)^{1/2}$ is uniquely determined by requiring the composition law $K(c|a) = \int K(c|b)K(b|a)\,db$ to hold.
\end{proposition}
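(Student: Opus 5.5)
We work in one dimension; the $d$-dimensional case factorizes. The plan is to take the short-time kernel in the quadratic form suggested by the classical action,
\[
K(x_b,\Delta t|x_a)=N(\Delta t)\exp\!\Big[\tfrac{i}{\hbar}\Big(\tfrac{m(x_b-x_a)^2}{2\Delta t}-V(\bar x)\Delta t\Big)\Big],
\]
with $N(\Delta t)$ unknown. We then impose the semigroup property of Theorem~\ref{th:semigroup_app}, $K(c|a)=\int db\,K(c|b)K(b|a)$, and read off the constraint this places on $N$. Since the potential term contributes only $O(\Delta t)$ phase corrections that cancel order by order, we first fix $N$ from the free part ($V=0$) and afterwards check that $V$ does not alter the leading normalization.

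\textbf{Step 1: the Fresnel integral.} For $V=0$, compose two steps of durations $t_1,t_2$. The integral over the intermediate point $b$ is Gaussian (Fresnel) in $b$. Completing the square gives
\[
\frac{(c-b)^2}{t_2}+\frac{(b-a)^2}{t_1}=\frac{t_1+t_2}{t_1t_2}\,(b-b^*)^2+\frac{(c-a)^2}{t_1+t_2}.
\]
The Fresnel integral $\int db\, e^{i\alpha(b-b^*)^2}=\sqrt{i\pi/\alpha}$ then yields
\[
N(t_1)N(t_2)\left(\frac{2\pi i\hbar\, t_1t_2}{m(t_1+t_2)}\right)^{1/2}\exp\!\Big[\tfrac{i m (c-a)^2}{2\hbar(t_1+t_2)}\Big].
\]
The exponential already reproduces the correct phase of $K(c,t_1+t_2|a)$. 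Matching prefactors therefore gives the functional equation
\[
N(t_1)N(t_2)\sqrt{\tfrac{2\pi i\hbar t_1t_2}{m(t_1+t_2)}}=N(t_1+t_2).
\]

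\textbf{Step 2: solving the functional equation.} Set $N(t)=M(t)\,(m/2\pi i\hbar t)^{1/2}$. Substituting into the equation above reduces it to $M(t_1)M(t_2)=M(t_1+t_2)$. Assuming continuity (as in Theorem~\ref{th:gaussian_unique}), the solutions are $M(t)=e^{\kappa t}$. We then fix $\kappa$ using two requirements.

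First, the initial condition $K(x_b,\Delta t|x_a)\to\delta(x_b-x_a)$ as $\Delta t\to0$ forces $M(0^+)=1$. Indeed, $(m/2\pi i\hbar t)^{1/2}e^{im x^2/2\hbar t}$ is exactly a nascent delta, so its total integral is $1$ and no extra factor is allowed at $t\to0$.

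Second, the norm preservation established in Sec.~\ref{sec:hilbert}, equivalently probability normalization from Theorem~\ref{th:complex_amplitudes}, forces $|M(t)|=1$. A residual pure phase $e^{i\kappa' t}$ is a constant energy shift, which can be absorbed into $V$ (action-space gauge). This gives $N=(m/2\pi i\hbar\Delta t)^{1/2}$, with the branch of $\sqrt{i}$ fixed by the Fresnel convergence prescription.

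\textbf{Step 3: the potential term, and the main obstacle.} With $V\neq0$, the composition is evaluated by expanding $V(\bar x)$ around the stationary point $b^*$. The Gaussian width in $b$ is $O(\sqrt{\hbar\Delta t/m})$, so these corrections are $O(\Delta t^{3/2})$ relative to the leading term and do not modify the prefactor at leading order. Consequently the normalization derived in the free case carries over unchanged.

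I expect the main obstacle to be making ``uniquely determined'' honest. The composition law alone leaves the factor $e^{\kappa t}$ free, so uniqueness genuinely relies on importing the delta initial condition and unitarity. Care is also needed with the oscillatory, non-absolutely-convergent Fresnel integral, which I would regularize by $\hbar\to\hbar(1-i\epsilon)$ to justify both the Gaussian evaluation and the choice of square-root branch.
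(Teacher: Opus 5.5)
Your proof is correct, but you close the uniqueness question differently from the paper.

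\textbf{The paper's route.} The paper composes only two equal steps, which gives the doubling recursion $N(2\Delta t)=N(\Delta t)^2\sqrt{\pi i\hbar\Delta t/m}$. It then appeals to dimensional analysis. Since $m$, $\hbar$, $\Delta t$ admit no dimensionless combination, $N=C\sqrt{m/\hbar\Delta t}$. The recursion then leaves only $C=0$ or $C=(2\pi i)^{-1/2}$, and the classical limit excludes $C=0$.

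\textbf{Your route.} You compose unequal steps $t_1,t_2$, which turns the constraint into a genuine Cauchy equation $M(t_1)M(t_2)=M(t_1+t_2)$. With continuity this leaves exactly the family $e^{\kappa t}$. You then remove it by probability normalization ($\mathrm{Re}\,\kappa=0$) and by identifying $e^{i\kappa' t}$ with a constant shift of $V$.

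\textbf{What each buys.} The paper's argument is shorter and gives strict uniqueness. However, it tacitly excludes any extra time scale. The doubling recursion alone has many more solutions, and it is dimensional analysis that kills them, including $e^{\kappa t}$, since $\hbar\kappa$ would be a new energy scale. Your version makes explicit what the composition law does and does not fix. Your Step~3 treatment of $V$ also goes beyond the paper, which handles only the free form.

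\textbf{Two points to tighten.} First, the delta initial condition does not constrain $\kappa$ at all, because $e^{\kappa t}\to 1$ automatically; it only excludes $M\equiv 0$. Second, do not cite the paper's unitarity section to get $|M|=1$. That proof uses the already-normalized short-time kernel, and the relation $K^*(\Delta t)=K(-\Delta t)$ holds only if $\kappa^*=-\kappa$, so the citation is circular here. Argue directly instead: with $M=1$ the free kernel acts as the unit-modulus Fourier multiplier $e^{-i\hbar k^2 t/2m}$. Hence $\|U(t)\psi\|=|M(t)|\,\|\psi\|$, and probability normalization forces $|M(t)|=1$. Alternatively, import the paper's dimensional argument, which removes $\kappa$ in one step.
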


The prefactor $(m/2\pi i\hbar\Delta t)^{d/2}$ is determined by composition law consistency, not by quantum mechanical ``resolution of identity.''

\textbf{Derivation:}
The composition law $K(c|a) = \int db\, K(c|b)K(b|a)$ must hold for all times. For short-time propagators of the form:
\[
K(x_b, \Delta t | x_a) = N(\Delta t) \exp\left[\frac{im(x_b - x_a)^2}{2\hbar\Delta t}\right],
\]
the composition integral is a Gaussian convolution. Requiring $K(c, 2\Delta t | a) = \int db\, K(c, \Delta t | b)K(b, \Delta t | a)$ gives:
\begin{eqnarray*}
&& N(2\Delta t) \exp\left[\frac{im(c-a)^2}{4\hbar\Delta t}\right] = N(\Delta t)^2 \int db\\ &&\times \exp\left[\frac{im(c-b)^2 + im(b-a)^2}{2\hbar\Delta t}\right].
\end{eqnarray*}

The Gaussian integral evaluates to:
\[
\int db\, e^{\left[\frac{im(c-b)^2 + im(b-a)^2}{2\hbar\Delta t}\right]} = \sqrt{\frac{\pi i\hbar\Delta t}{m}} \exp\left[\frac{im(c-a)^2}{4\hbar\Delta t}\right].
\]

Matching both sides:
\[
N(2\Delta t) = N(\Delta t)^2 \cdot \sqrt{\frac{\pi i\hbar\Delta t}{m}}.
\]

This recursion, together with dimensional analysis and the classical limit, uniquely determines:
\[
N(\Delta t) = \left(\frac{m}{2\pi i\hbar\Delta t}\right)^{1/2}.
\]

The factor of $i$ in the denominator arises from the analytic continuation required for the oscillatory Gaussian integral, not from any quantum mechanical input.
% ---------------------------------------------------
\section{Diffusive Scaling of Action Variance}
\label{app:variance_scaling}

This appendix provides a rigorous derivation of the variance scaling $\sigma^2(T) \propto T$ used in Theorem~\ref{th:infinitesimal_indist} and throughout the text. This scaling is crucial for establishing universal infinitesimal indistinguishability.

\subsection{Statement of the result}

\begin{theorem}[Variance additivity for convolution semigroups]
\label{th:variance_additivity}
Let $g_T(A)$ be a family of probability densities on $\mathbb{R}$ indexed by time $T > 0$ satisfying:
\begin{enumerate}
    \item \textbf{Composition law (convolution semigroup):}
    \begin{equation}
    g_{T_1+T_2}(A) = \int_{-\infty}^{\infty} dA'\, g_{T_1}(A')\, g_{T_2}(A - A').
    \label{eq:convolution_semigroup_app}
    \end{equation}

    \item \textbf{Finite second moment:}
    \begin{equation}
    \int_{-\infty}^{\infty} A^2\, g_T(A)\, dA < \infty \quad \text{for all } T > 0.
    \label{eq:finite_second_moment_app}
    \end{equation}

    \item \textbf{Stationarity (time-translation invariance):} The family $\{g_T\}$ depends only on the duration $T$, not on absolute time.

    \item \textbf{Continuity at origin:} $g_T \to \delta(A)$ as $T \to 0$ in the weak sense.
\end{enumerate}

Then the variance satisfies:
\begin{equation}
\text{Var}[g_T] \equiv \int_{-\infty}^{\infty} A^2\, g_T(A)\, dA = \beta \cdot T
\label{eq:variance_linear_app}
\end{equation}
for some constant $\beta > 0$ with dimensions $[\text{Action}]^2/[\text{Time}]$. Importantly, no assumption of Gaussianity is made; the result follows solely from additivity, finite variance, and continuity.
\end{theorem}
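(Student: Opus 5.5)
The approach is to show that the (central) variance is an additive function of $T$, and then to solve the resulting Cauchy equation using only positivity. One caveat first. As literally written, Eq.~\eqref{eq:variance_linear_app} is the raw second moment, and that cannot be linear when the mean is nonzero. In the Gaussian case of Theorem~\ref{th:gaussian_unique}, for instance, it equals $\sigma^2T+\mu^2T^2$. I will therefore read $\mathrm{Var}[g_T]$ as $\int (A-m(T))^2 g_T(A)\,dA$, where $m(T)=\int A\,g_T(A)\,dA$. This coincides with the displayed formula whenever the drift vanishes, or after the shift $A\mapsto A-m(T)$.

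\textbf{Step 1: the mean is additive.} Finite second moment gives finite first moment, so $m(T)$ is well defined for every $T>0$. Write $g_{T_1+T_2}$ as the convolution \eqref{eq:convolution_semigroup_app}. Then, by Fubini (justified by absolute integrability), $\int A\,g_{T_1+T_2}(A)\,dA$ becomes $\iint (A'+A'')\,g_{T_1}(A')\,g_{T_2}(A'')\,dA'\,dA''$. This gives $m(T_1+T_2)=m(T_1)+m(T_2)$. Stationarity (condition 3) is what makes $m$ a function of the duration alone, so this relation is a genuine functional equation in $T$.

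\textbf{Step 2: the variance is additive.} Apply the same substitution to $\int (A-m(T_1+T_2))^2 g_{T_1+T_2}(A)\,dA$. By Step 1 the integrand splits as $\bigl[(A'-m(T_1))+(A''-m(T_2))\bigr]^2$. Expand the square. The cross term factorises into $\int (A'-m(T_1))g_{T_1}\,dA'$ times $\int (A''-m(T_2))g_{T_2}\,dA''$, and each factor vanishes. What remains is $v(T_1+T_2)=v(T_1)+v(T_2)$, where $v(T):=\mathrm{Var}[g_T]$. Equivalently, one can note that $\log\hat g_T(k)$ is additive in $T$ and that $v(T)$ is minus its second derivative at $k=0$. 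The direct computation avoids any differentiability issues, so I prefer it.

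\textbf{Step 3: solving the Cauchy equation.} This is the step I expect to need the most care, because stationarity alone does not guarantee that $v$ is measurable or continuous. I would avoid the issue by using positivity. Since $v\ge 0$, additivity gives $v(T+s)=v(T)+v(s)\ge v(T)$, so $v$ is nondecreasing on $(0,\infty)$. Next, additivity gives $v(pT/q)=(p/q)\,v(T)$ for all positive rationals $p/q$, so $v(r)=\beta r$ on $\mathbb{Q}_{>0}$ with $\beta:=v(1)$. Finally, squeeze an arbitrary $T$ between rationals $r_1<T<r_2$: monotonicity gives $\beta r_1\le v(T)\le \beta r_2$, and letting $r_1,r_2\to T$ yields $v(T)=\beta T$. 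The continuity hypothesis (condition 4) is consistent with this, since it forces $v(T)\to 0$ as $T\to 0$, but it is not strictly needed.

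\textbf{Step 4: strict positivity and dimensions.} If $\beta=0$, then $g_1$ would have zero variance and hence be a point mass. That is impossible for a probability \emph{density}, so $\beta>0$. The dimension $[\text{Action}]^2/[\text{Time}]$ follows at once from $v(T)=\beta T$. No Gaussian form has been used at any point, which is exactly the claim of Theorem~\ref{th:variance_additivity}. The result is also consistent with the width $\sigma^2T$ found in Theorem~\ref{th:gaussian_unique}, which is the special case $\beta=\sigma^2$.
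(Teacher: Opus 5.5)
Your proof is correct. It follows the same skeleton as the paper's (variance additivity under convolution, then Cauchy's equation), but it differs at exactly the points where the paper's argument is thin.

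\begin{itemize}
\item \textbf{Additivity and centering.} The paper cites additivity of cumulants and says zero mean ``can be enforced by centering''. You prove mean and central-variance additivity directly via Fubini. That is also what makes centering legitimate: re-centering each $g_T$ by $m(T)$ preserves the semigroup law only because $m$ is additive. Your caveat that the raw second moment equals $\sigma^2T+\mu^2T^2$ under drift is right.

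\item \textbf{The Cauchy step.} The paper asserts that continuity of $\sigma^2(T)$ follows from the continuity-at-origin hypothesis. Weak convergence $g_T\to\delta$ does not control second moments, so that inference is unjustified as written. Your monotonicity argument from $v\ge 0$ needs no regularity at all, and it makes the stationarity and continuity hypotheses unnecessary. By the same token, your aside that condition~4 ``forces'' $v(T)\to0$ is not true on its own. Here it holds only as a consequence of the linearity you have already proved.

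\item \textbf{Excluding $\beta=0$.} The paper appeals to the ``diffusive character'' of the semigroup. That is not a valid reason, because the pure-drift family $\delta(A-\mu T)$ is itself a zero-variance convolution semigroup. Your argument is that zero variance means a point mass, which has no Lebesgue density. It uses the hypothesis that actually excludes this case.
\end{itemize}

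The paper's cumulant/characteristic-function viewpoint buys brevity and a direct bridge to the L\'evy--Khintchine argument of Theorem~\ref{th:gaussian_unique}. Yours buys a self-contained proof under minimal hypotheses.
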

This linear scaling underlies the $\sigma_{dt} \sim \sqrt{dt}$ behavior used in Theorem~\ref{th:infinitesimal_indist}, independently of the detailed form of $g_{T}$.

\subsection{Proof via cumulant additivity and central limit arguments}

\begin{proof}
We prove this in several steps.

\textbf{Step 1: Additivity of cumulants.}

For any convolution of probability densities $h = f * g$, the cumulants (connected moments) add:
\begin{equation}
\kappa_n[h] = \kappa_n[f] + \kappa_n[g].
\label{eq:cumulant_additivity}
\end{equation}

In particular, if $f$ and $g$ have zero mean, their variances (second cumulants) add:
\begin{equation}
\text{Var}[h] = \text{Var}[f] + \text{Var}[g].
\label{eq:variance_additivity_step}
\end{equation}

\textbf{Step 2: Applying to the composition law.}

From condition (i):
\[
g_{T_1+T_2} = g_{T_1} * g_{T_2}.
\]

By Step 1 (assuming zero mean, which can be enforced by centering):
\[
\text{Var}[g_{T_1+T_2}] = \text{Var}[g_{T_1}] + \text{Var}[g_{T_2}].
\]

Define $\sigma^2(T) := \text{Var}[g_T]$. This functional equation states:
\begin{equation}
\sigma^2(T_1 + T_2) = \sigma^2(T_1) + \sigma^2(T_2).
\label{eq:functional_variance}
\end{equation}

\textbf{Step 3: Solution of the functional equation.}

Equation~\eqref{eq:functional_variance} is Cauchy's functional equation for $\sigma^2(T)$. Under the assumption of continuity (which follows from condition (iv)), the only solutions are:
\begin{equation}
\sigma^2(T) = \beta T
\label{eq:cauchy_solution}
\end{equation}
for some constant $\beta \in \mathbb{R}$.

Since $\sigma^2(T)$ is a variance, we must have $\beta \geq 0$. The case $\beta = 0$ corresponds to $g_T = \delta(A - \mu(T))$ (deterministic), which contradicts the diffusive character implies by the convolution semigroup structure. Therefore $\beta > 0$.

\textbf{Step 4: Physical interpretation.}

The linear scaling $\sigma^2(T) = \beta T$ is characteristic of \emph{diffusive processes}. Physically:
\begin{itemize}
    \item Action accumulates as $A_T = \sum_{k=1}^{N} \delta A_k$ over $N$ time steps.
    \item Each increment $\delta A_k$ has variance $\sim \beta \,\delta t$.
    \item For weakly correlated (or independent) increments, the central limit theorem gives:
    \[
    \text{Var}(A_T) = \sum_{k=1}^{N} \text{Var}(\delta A_k) \approx N \cdot \beta\,\delta t = \beta T.
    \]
\end{itemize}

This is analogous to Brownian motion, where displacement variance grows linearly with time.
\end{proof}
Thus, Gaussian short-time action fluctuations are not an assumption but an unavoidable consequence of additive composition, finite variance, and continuity.

\subsection{Connection to the Gaussian form}

The diffusive scaling $\sigma^2 \propto T$, combined with the composition law, strongly constrains the functional form of $g_T$.

\begin{corollary}[Gaussian distribution from diffusive scaling]
\label{cor:gaussian_from_diffusion}
If $g_T$ satisfies the conditions of Theorem~\ref{th:variance_additivity} and has the diffusive scaling $\sigma^2(T) = \beta T$, then $g_T$ must be Gaussian:
\begin{equation}
g_T(A) = \frac{1}{\sqrt{2\pi\beta T}} \exp\left(-\frac{(A - \mu T)^2}{2\beta T}\right),
\label{eq:gaussian_from_diffusion}
\end{equation}
where $\mu$ is the drift rate (mean action per unit time).
\end{corollary}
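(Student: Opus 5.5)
The statement to prove is Corollary~\ref{cor:gaussian_from_diffusion}. The plan is to reduce it to Theorem~\ref{th:gaussian_unique}: the hypotheses of Theorem~\ref{th:variance_additivity} already supply a convolution semigroup with finite second moment and weak continuity at the origin. The diffusive scaling $\sigma^2(T)=\beta T$ then only serves to identify the Gaussian parameter with $\beta$. The argument runs through characteristic functions, as in the proof of Theorem~\ref{th:gaussian_unique}.

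\textbf{Step 1: exponential form of the characteristic function.} Set $\hat g_T(k)=\int e^{ikA}g_T(A)\,dA$. The semigroup law gives
\[
\hat g_{T_1+T_2}(k)=\hat g_{T_1}(k)\,\hat g_{T_2}(k).
\]
Weak continuity $g_T\to\delta$ makes $T\mapsto\hat g_T(k)$ continuous with value $1$ at $T=0$. Hence $\hat g_T(k)=e^{T\psi(k)}$ for a Lévy exponent $\psi$. This uses infinite divisibility: since $\hat g_{T}=(\hat g_{T/n})^n$, the function $\hat g_T$ has no zeros, so a continuous logarithm exists.

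\textbf{Step 2: identify the exponent.} Apply the Lévy--Khintchine representation. Finite second moment gives $\psi\in C^2$, and then
\[
\mathrm{Var}[g_T]=-T\,\psi''(0)=T\Big(\sigma^2+\int x^2\,\nu(dx)\Big).
\]
Comparing with $\beta T$ yields $\beta=\sigma^2+\int x^2\,\nu(dx)$. Likewise the mean is $-iT\psi'(0)=\mu T$, which defines the drift rate $\mu$.

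\textbf{Step 3: the main obstacle, showing $\nu=0$.} This is the step I expect to be hardest. Finite variance alone does not kill the jump part: compound Poisson semigroups have finite variance growing linearly in $T$ but are not Gaussian. Therefore Step~4 of the proof of Theorem~\ref{th:gaussian_unique}, as written, does not go through.

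I would first try the paper's own standing assumptions. The densities $g_T$ are supposed to be \emph{locally} determined, Axiom~\ref{ax:density}(d), and to have continuous sample paths in action, in the sense that no finite action jump occurs in infinitesimal time. I would formalize this as
\[
\lim_{T\to0}\frac{1}{T}\int_{|A|>\epsilon}g_T(A)\,dA=0 \quad\text{for every }\epsilon>0.
\]
This condition is equivalent to $\nu(\{|x|>\epsilon\})=0$ for every $\epsilon$, hence $\nu=0$.

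A second option is to read the diffusive-scaling hypothesis as a statement about all cumulants, namely that all cumulants beyond the second vanish. Cumulant additivity (Step~1 of the proof of Theorem~\ref{th:variance_additivity}) gives $\kappa_n[g_T]=T\int x^n\,\nu(dx)$ for $n\ge3$. In particular, vanishing of $\kappa_4$ forces $\int x^4\,\nu(dx)=0$, hence $\nu=0$. In either version I would state the extra condition explicitly as part of what ``diffusive'' means.

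\textbf{Step 4: conclude.} With $\nu=0$ we have $\psi(k)=i\mu k-\beta k^2/2$ and $\sigma^2=\beta$. Fourier inversion of $e^{T(i\mu k-\beta k^2/2)}$ gives exactly
\[
g_T(A)=\frac{1}{\sqrt{2\pi\beta T}}\exp\!\left(-\frac{(A-\mu T)^2}{2\beta T}\right).
\]
Here $\beta>0$ by Theorem~\ref{th:variance_additivity}, so the density exists for every $T>0$.
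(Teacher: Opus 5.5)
Your diagnosis is right, and it is the central issue. The paper's proof of this corollary is a one-line appeal to L\'evy--Khintchine asserting that the L\'evy measure ``must vanish due to finite variance.'' That is exactly the inference from Step~4 of Theorem~\ref{th:gaussian_unique} that you reject. As stated, the corollary is false, so no proof from its hypotheses can exist.

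A counterexample inside the stated class is the gamma semigroup $g_T(A)=A^{T-1}e^{-A}\mathbf{1}_{\{A>0\}}/\Gamma(T)$, or its centred version $g_T(A+T)$ if you want zero mean. These are genuine probability densities with $g_{T_1}*g_{T_2}=g_{T_1+T_2}$, variance exactly $T$, and $g_T\to\delta$ weakly as $T\to0$. Yet the L\'evy measure is $x^{-1}e^{-x}\,dx$ on $x>0$ and $\kappa_3=2T\neq0$. Your compound Poisson example has an atom at $A=0$, so strictly it is not a family of densities; convolving it with an independent Gaussian semigroup fixes this.

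Every finite-variance convolution semigroup already satisfies $\sigma^2(T)=\beta T$; that is precisely the content of Theorem~\ref{th:variance_additivity}. So the diffusive-scaling hypothesis carries no information that could exclude jumps. An additional hypothesis is unavoidable, and your plan is the right kind of repair.

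On the repair itself, the no-jump condition $\lim_{T\to0}T^{-1}\int_{|A|>\epsilon}g_T\,dA=0$ does not follow from Axiom~\ref{ax:density}(d). That axiom restricts $|b-a|$, not $A$, and the marginal $\tilde g_T$ integrates over $b$ anyway. No other axiom supplies it either, so it must be added explicitly, as you eventually say. Its equivalence with $\nu=0$ rests on the standard fact~\cite{Sato1999} that $T^{-1}g_T\to\nu$ vaguely on $\mathbb{R}\setminus\{0\}$ as $T\to0$; cite it. The cumulant route likewise needs a finite fourth moment, which is not among the hypotheses. Granted that, $\kappa_4=0$ alone already forces $\nu=0$, since $\kappa_4=T\int x^4\,\nu(dx)$.

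Two minor points:
\begin{itemize}
\item With the truncation $ikx\mathbf{1}_{|x|<1}$ one has $-iT\psi'(0)=T\bigl(\mu+\int_{|x|\ge1}x\,\nu(dx)\bigr)$. Your $\mu$ is therefore a redefinition, which is harmless once $\nu=0$.
\item $\beta>0$ is better justified by the hypothesis that the $g_T$ are densities, which rules out $g_T=\delta(A-\mu T)$, than by the paper's argument.
\end{itemize}
With either added hypothesis, your Steps~1, 2 and~4 are correct and yield the stated Gaussian with $\sigma^2=\beta$.
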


\begin{proof}
This follows from the Lévy--Khintchine theorem (detailed in Appendix~\ref{app:shorttime_derivation}). A convolution semigroup with finite variance and linear variance growth is necessarily Gaussian. The jump term (Lévy measure $\nu$) must vanish due to finite variance, leaving only the Gaussian component.
\end{proof}

\subsection{Application to infinitesimal time steps}

For an infinitesimal time interval $dt$, the action increment $\delta A$ has:
\begin{align}
\text{Mean:} \quad &\langle \delta A \rangle = \bar{A}\,dt = L(x, v)\,dt, \\
\text{Variance:} \quad &\text{Var}(\delta A) = \beta\,dt.
\end{align}

Therefore, the standard deviation scales as:
\begin{equation}
\sigma_{dt} = \sqrt{\beta\,dt} \propto \sqrt{dt}.
\label{eq:sigma_sqrt_dt}
\end{equation}

This $\sqrt{dt}$ scaling is the key to universal infinitesimal indistinguishability (Theorem~\ref{th:infinitesimal_indist}): while the action difference scales as $\delta S \sim L\,dt \propto dt$, the resolution scale grows only as $\sqrt{dt}$, making their ratio vanish $(\delta S\equiv \delta A)$:
\begin{equation}
\frac{|\delta S_1 - \delta S_2|}{\sigma_{dt}} \sim \frac{O(dt)}{O(\sqrt{dt})} = O(\sqrt{dt}) \to 0.
\label{eq:ratio_vanishes}
\end{equation}

This scaling holds independently of the specific form of the Lagrangian, establishing infinitesimal indistinguishability as a universal property of action-based inference.

\subsection{Dimensional consistency}

The diffusion constant $\beta$ has dimensions:
\begin{equation}
[\beta] = \frac{[\text{Action}]^2}{[\text{Time}]} = \frac{(\text{J}\cdot\text{s})^2}{\text{s}} = \text{J}^2\cdot\text{s}.
\label{eq:beta_dimensions}
\end{equation}
 For consistency with the Cramér--Rao bound $\Delta A_{\min} \sim \hbar$, once the universal identification $\eta = 1/\hbar$ has been established, the diffusion constant must scale as
\begin{equation}
\beta \sim \frac{\hbar^2}{\tau_0},
\label{eq:beta_estimate}
\end{equation}
where $\tau_0$ is a characteristic microscopic time scale of the system. This relation fixes the order of magnitude of $\beta$ but does not introduce any new physical assumption.

In natural units where $\tau_0 \sim 1$, one has $\beta \sim \hbar^2$.

\subsection{Summary}

The diffusive scaling $\sigma^2(T) = \beta T$ is \emph{derived} from:
\begin{enumerate}
    \item The composition law (Axiom~\ref{ax:density}(b)),
    \item Finite variance (Axiom~\ref{ax:density}(e)),
    \item Additivity of variances for convolutions,
    \item Solution of Cauchy's functional equation.
\end{enumerate}

No additional assumptions about trajectories, quantum mechanics, or specific systems are needed. This is a universal property of convolution semigroups with finite second moments.

The $\sqrt{dt}$ growth of the width is the mathematical foundation of universal infinitesimal indistinguishability: while action differences scale as $O(dt)$, the resolution scale grows only as $O(\sqrt{dt})$. This mismatch forces coherent (complex amplitude) summation at the infinitesimal level, rather than allowing classical probability addition. This necessity is the essence of the path integral formulation.

%----------------------------------------------------
\section{Derivation of $L = T - V$ from Galilean Invariance}
\label{app:galilean_lagrangian}

This appendix shows that for Galilean-invariant systems, symmetry requirements uniquely determine the Lagrangian structure $L = T(v) - V(x)$ with kinetic energy $T = \tfrac{1}{2}mv^2$. The potential $V(x)$ remains system-specific input.

\subsection{Lagrangian structure from short-time limit}

For short time intervals, the mean action $\bar{A}(x_b, \Delta t | x_a)$ defines the effective Lagrangian governing the mean action via
\begin{equation}
\bar{A} = L(x, v)\,\Delta t + O(\Delta t^2),
\label{eq:L_emergence}
\end{equation}
where $v = (x_b - x_a)/\Delta t$ and $x = (x_a + x_b)/2$. At this stage, no restriction on the functional form of $L(x,v)$ is imposed.

\subsection{Galilean boost invariance}

Under a Galilean boost with velocity $u$:
\begin{equation}
x \to x' = x - ut, \qquad v \to v' = v - u.
\end{equation}
The action must transform by at most a total time derivative:
\begin{equation}
L(x, v) \to L'(x', v') = L(x, v) + \frac{d}{dt}f(x, t).
\end{equation}
For non-relativistic mechanics, this requires
\begin{equation}
L(x, v + u) = L(x, v) + mu \cdot v + \tfrac{1}{2}mu^2 + \text{const}
\label{eq:galilean_constraint}
\end{equation}
The constant term may depend on $u$, but not on $x, v$.
\begin{lemma}[Quadratic velocity dependence]
The constraint~\eqref{eq:galilean_constraint} implies $L$ is quadratic in $v$:
\begin{equation}
L(x, v) = a(x)\,v^2 + b(x) \cdot v + c(x),
\end{equation}
with $a(x) = \frac{1}{2}m$, where constancy follows from Galilean boost invariance and spatial homogeneity.
\end{lemma}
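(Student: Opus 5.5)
The plan is to treat the boost constraint \eqref{eq:galilean_constraint} as a functional equation in the velocity variable, with $x$ held fixed, and solve it by differentiation. Define $h(x,v,u) := L(x,v+u) - L(x,v)$. The constraint says that $h = m\,u\cdot v + \tfrac12 m u^2 + c(u)$, where $c(u)$ may depend on $u$ but not on $(x,v)$. Three steps follow.

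\textbf{Step 1 (velocity Hessian).} Assume $L$ is $C^2$ in $v$; this is consistent with the finite-variance and non-singular-Lagrangian assumptions listed in Sec.~\ref{sec:CR}. Differentiating the constraint twice with respect to $v$ gives
\begin{equation}
\partial_{v_i}\partial_{v_j}L(x,v+u) = \partial_{v_i}\partial_{v_j}L(x,v),
\end{equation}
because the right-hand side of \eqref{eq:galilean_constraint} is affine in $v$. Since $u$ is arbitrary, the velocity Hessian does not depend on $v$. Call it $M_{ij}(x)$.

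\textbf{Step 2 (integrate in $v$).} Integrating twice yields
\begin{equation}
L(x,v) = \tfrac12 v^{\mathsf T}M(x)v + b(x)\cdot v + c(x),
\end{equation}
with $b$ and $c$ as integration functions.

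\textbf{Step 3 (match coefficients).} Substitute this form back into \eqref{eq:galilean_constraint}. The cross term $u^{\mathsf T}M(x)v$ must equal $m\,u\cdot v$ for all $u$ and $v$. Hence $M_{ij}(x) = m\,\delta_{ij}$, so $a(x) = \tfrac12 m$. The $u$-quadratic pieces then agree automatically, and the term $b(x)\cdot u$ is absorbed into the $u$-dependent constant. This already establishes that $a$ is constant.

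The remaining claim is that $b(x)$ is trivial. For this I plan to use spatial homogeneity of the kinetic structure together with the fact that the action is defined only up to a total time derivative. A term $b(x)\cdot v$ with $\partial_i b_j = \partial_j b_i$ is a total derivative $\tfrac{d}{dt}F(x)$, so it can be removed by the action-space gauge freedom noted in Sec.~\ref{sec:propagator}. In the propagator it shows up only as an endpoint phase $e^{i\eta[F(b)-F(a)]}$.

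Time-reversal symmetry, Axiom~\ref{ax:density}(c), which via Appendix~\ref{app:jacobian} gives $L(x,v) = L(x,-v)$, kills any odd-in-$v$ term outright. I will invoke it to set $b = 0$ for systems without external driving. Finally, I set $c(x) = -V(x)$, which is the system-specific input.

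The main obstacle is the interpretation of the ``const'' in \eqref{eq:galilean_constraint}. If it is read as independent of $x$, then the $b(x)\cdot u$ term forces $b$ to be constant, and constant $b$ is a total derivative. If it is read more loosely, a general gauge potential $b(x)$ with nonzero curl (a magnetic-like term) is not excluded by boosts alone. So I would state explicitly that time reversal, or else $x$-independence of the boost constant, is what removes it. I would also flag that the identification of the coefficient $m$ is itself fed in through the form of \eqref{eq:galilean_constraint}, so the content of the lemma is that the velocity dependence is quadratic with constant coefficient, not a derivation of the mass.
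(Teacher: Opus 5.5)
Your proposal is correct and follows essentially the paper's route: the paper expands the boost constraint to second order in $u$ and matches coefficients to get $\partial^2 L/\partial v^2 = m$, and your steps (differentiate twice in $v$, integrate, then match the cross term $u\cdot M(x)v = m\,u\cdot v$) are a slightly more careful version of the same coefficient matching that also yields isotropy, $M_{ij} = m\,\delta_{ij}$. Your side remarks are also accurate: constancy of $a$ needs no separate appeal to spatial homogeneity, $x$-independence of the boost constant forces $b$ to be constant, and setting $b = 0$ is a separate time-reversal step, which the paper carries out in the following subsection rather than in this lemma.
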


\begin{proof}
Expanding~\eqref{eq:galilean_constraint} to second order in $u$ and matching coefficients yields $\partial^2 L/\partial v^2 = m$ (constant) and determines the quadratic structure.
\end{proof}

\subsection{Time-reversal symmetry}

Under time reversal $t \to -t$, velocity changes sign: $v \to -v$. For time-reversal invariant Galilean systems without gauge structure (up to total derivatives), this requires $L(x,v)=L(x,-v)$, which forces $b(x)=0$.

\subsection{Final form}

Combining Galilean invariance and time-reversal (up to total time derivatives):
\begin{equation}
L(x, v) = \tfrac{1}{2}mv^2 + c(x).
\end{equation}
Defining $V(x) \equiv -c(x)$:
\begin{equation}
L(x, v) = \tfrac{1}{2}mv^2 - V(x) = T(v) - V(x)
\label{eq:L_TV_final}
\end{equation}

\begin{remark}[Universal vs.\ system-specific]
The kinetic term $T = \tfrac{1}{2}mv^2$ is \emph{derived} from Galilean symmetry (universal). The potential $V(x)$ is \emph{system input}: harmonic oscillator $V = \tfrac{1}{2}m\omega^2 x^2$, Coulomb $V = -ke^2/|x|$, etc. This parallels standard quantum mechanics, where one must specify $\hat{H}$.
\end{remark}

Thus, the classical Lagrangian emerges as the short-time, symmetry-constrained limit of action-based inference.

%----------------------------------------------------
\section{Derivation of Euler-Lagrange Equations from Composition}
\label{app:euler_lagrange}

This appendix provides the complete derivation of the Euler-Lagrange equations
from the propagator composition law via stationary phase analysis.

\subsection{Iterated composition and stationary phase}

Discretize the time interval $[0,T]$ into $N$ steps with $dt = T/N$ and positions
$x_0 = a, x_1, \ldots, x_N = b$. The propagator composition applied iteratively gives:
\begin{equation}
K(b,T|a) = \int dx_1 \cdots dx_{N-1} \prod_{k=0}^{N-1} K(x_{k+1}, dt | x_k).
\end{equation}
In the classical regime ($\eta\bar{A} \gg 1$):
\begin{equation}
e^{i\eta \bar{A}(b,T|a)} = \int dx_1 \cdots dx_{N-1}
\exp\!\left[i\eta \sum_{k=0}^{N-1} \bar{A}(x_{k+1}, dt | x_k)\right].
\end{equation}

For each intermediate point $x_j$ ($j = 1, \ldots, N-1$), stationary phase requires:
\begin{equation}
\frac{\partial}{\partial x_j}\left[\sum_{k=0}^{N-1} \bar{A}(x_{k+1}, dt | x_k)\right] = 0.
\end{equation}
Only two terms depend on $x_j$:
\begin{equation}
\frac{\partial \bar{A}(x_j, dt | x_{j-1})}{\partial x_j}
+ \frac{\partial \bar{A}(x_{j+1}, dt | x_j)}{\partial x_j} = 0.
\label{eq:two_terms_app}
\end{equation}

\subsection{Short-time expansion}

For small $dt$, the mean action for a single step is:
\begin{equation}
\bar{A}(x_{k+1}, dt | x_k) = L(\bar{x}_k, v_k)\, dt + O(dt^2),
\end{equation}
where $\bar{x}_k = (x_k + x_{k+1})/2$ is the midpoint and $v_k = (x_{k+1} - x_k)/dt$.

Computing the derivatives in~\eqref{eq:two_terms_app}:
\begin{align}
\frac{\partial \bar{A}(x_j, dt | x_{j-1})}{\partial x_j}
&= \frac{1}{2}\frac{\partial L}{\partial x}\bigg|_{\bar{x}_{j-1},v_{j-1}} dt
+ \frac{\partial L}{\partial v}\bigg|_{\bar{x}_{j-1},v_{j-1}}
+ O(dt), \\
\frac{\partial \bar{A}(x_{j+1}, dt | x_j)}{\partial x_j}
&= \frac{1}{2}\frac{\partial L}{\partial x}\bigg|_{\bar{x}_{j},v_{j}} dt
- \frac{\partial L}{\partial v}\bigg|_{\bar{x}_{j},v_{j}}
+ O(dt).
\end{align}

\subsection{Continuum limit}

Using the short-time derivatives obtained above (valid up to $O(dt)$), the stationary condition~\eqref{eq:two_terms_app} becomes
\begin{equation}
\left[\frac{1}{2}\frac{\partial L}{\partial x}\, dt + \frac{\partial L}{\partial v}\right]_{j-1}
+
\left[\frac{1}{2}\frac{\partial L}{\partial x}\, dt - \frac{\partial L}{\partial v}\right]_{j}
= O(dt^2),
\end{equation}
where the brackets $[\cdot]_j$ denote evaluation at the $j$-th time step (e.g.\ at the midpoint $(\bar x_j,v_j)$).

Rearranging and dividing by $dt$ gives
\begin{equation}
-\frac{1}{dt}\left[\frac{\partial L}{\partial v}\Big|_{j}
- \frac{\partial L}{\partial v}\Big|_{j-1}\right]
=
\frac{1}{2}\left[\frac{\partial L}{\partial x}\Big|_{j-1}
+ \frac{\partial L}{\partial x}\Big|_{j}\right]
+ O(dt).
\end{equation}

Taking the continuum limit $dt\to 0$ (equivalently $N\to\infty$), the left-hand side becomes $\frac{d}{dt}\left(\partial L/\partial v\right)$ and the right-hand side tends to
$\partial L/\partial x$, yielding
\begin{equation}
\frac{d}{dt}\frac{\partial L}{\partial \dot x}=\frac{\partial L}{\partial x},
\end{equation}
which is the Euler--Lagrange equation. \qed

\subsection{Classical action theorem}

Along the stationary trajectory $\{x_k^*\}$ selected by the stationary-phase condition, the mean action is additive:
\begin{equation}
\bar{A}(b,T|a)
=
\sum_{k=0}^{N-1} \bar{A}(x^*_{k+1}, dt | x^*_k)
=
\sum_{k=0}^{N-1} L(x^*_k, \dot{x}^*_k)\, dt + O(dt),
\end{equation}
where the $O(dt)$ term arises from the accumulation of subleading $O(dt^2)$ contributions.

Taking the continuum limit $N\to\infty$ yields
\begin{equation}
\bar{A}(b,T|a)\big|_{\text{stationary}}
=
\int_0^T L(x^*(t), \dot{x}^*(t))\, dt
\equiv S_{\text{cl}}(b,T|a).
\qed
\end{equation}

%----------------------------------------------------
\section{Hilbert Space Structure and Unitarity}
\label{app:hilbert_structure}

This appendix provides the formal definitions and complete proof of unitarity summarized in Section~\ref{sec:hilbert}.

\subsection{Function space and evolution operator}

We consider the space of complex-valued functions on $\mathcal{Q}$ for which the inner product~\eqref{eq:inner_product} is finite. The propagator $K(b|a)$ acts on functions $\psi: \mathcal{Q} \to \mathbb{C}$ via:
\begin{equation}
(U\psi)(b) = \int_{\mathcal{Q}} da\, K(b|a)\, \psi(a).
\label{eq:U_action}
\end{equation}
By construction, $U$ is a linear operator on this function space.  The inner product on this function space is:
\begin{equation}
\langle \phi | \psi \rangle = \int_{\mathcal{Q}} dq\, \phi^*(q)\, \psi(q).
\label{eq:inner_product}
\end{equation}

\subsection{Unitarity theorem and proof}

\begin{theorem}[Unitarity]
\label{th:unitarity}
If $g(A;b|a)$ satisfies the time-reversal symmetry~\eqref{eq:time_reversal} and the action functional is real under path reversal for real Lagrangians), then the evolution operator $U$ defined by~\eqref{eq:U_action} is unitary: $U^\dagger U = \mathbf{1}$.
\end{theorem}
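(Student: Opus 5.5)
The idea is to express $U^\dagger$ through the kernel and show that $U^\dagger U$ has kernel $\delta(b-a)$. Since $(U^\dagger\phi)(a)=\int db\,K^*(b|a)\phi(b)$, we have
\[
(U^\dagger U)(c|a)=\int_{\mathcal{Q}} db\,K^*(b,T|c)\,K(b,T|a).
\]
Two earlier ingredients will then be used. The first is the time-reversal property of $g$, Eq.~\eqref{eq:time_reversal}. The second is the semigroup law of Theorem~\ref{th:semigroup_app}, extended to signed times.

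\textbf{Step 1: conjugation as time reversal.} Since $g$ is real and nonnegative,
\[
K^*(b,T|c)=\int dA\,g(A;b,T|c)\,e^{-i\eta A}.
\]
Define the backward kernel $K(c,-T|b)$ as the contribution of the reversed process, whose action is $-A$. This is the standard convention that reversing time flips the sign of $\int L\,dt$ when $L$ is even in $v$. This gives $K(c,-T|b)=\int dA\,g(A;c,T|b)\,e^{-i\eta A}$. Combining with Eq.~\eqref{eq:time_reversal}, $g(A;b,T|c)=g(A;c,T|b)$, yields
\[
K^*(b,T|c)=K(c,-T|b),
\]
which is the relation quoted in Sec.~\ref{sec:hilbert}.

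\textbf{Step 2: composition.} Substituting into the kernel of $U^\dagger U$ gives
\[
\int db\,K(c,-T|b)\,K(b,T|a).
\]
Applying the composition law \eqref{eq:semigroup_app} with durations $T$ and $-T$ reduces this to $K(c,0|a)$. The right-hand side of Eq.~\eqref{eq:composition_g} is defined for positive times only. To justify the extension I would define $g(A;b,-T|a)\equiv g(-A;a,T|b)$ and check that the convolution identity continues to hold, using action additivity under concatenation of a process with its reverse.

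\textbf{Step 3: identity at zero time.} Locality (Axiom~\ref{ax:density}(d)) together with $\tilde g_T\Rightarrow\delta$ (Proposition~\ref{prop:gaussian_short_time}) gives $g(A;b,T|a)\to\delta(A)\delta(b-a)$ as $T\to0$. Hence $K(c,0|a)=\delta(c-a)$ and $U^\dagger U=\mathbf 1$. The same argument with the roles swapped gives $UU^\dagger=\mathbf 1$.

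\textbf{Main obstacle.} Step 2 is the hard part. A composition law in $g$ with positive weights alone cannot produce cancellation, so the identity must come from phases interfering between the forward and backward pieces. As a check on Step 2, I would first verify it explicitly on the short-time kernel \eqref{eq:K_shorttime_final}. There the $b$-integral of $e^{i m[(b-a)^2-(b-c)^2]/2\hbar\Delta t}$ is linear in $b$ in the exponent and gives $(2\pi\hbar\Delta t/m)\,\delta(c-a)$. The prefactors $|N(\Delta t)|^2=m/2\pi\hbar\Delta t$ then cancel this factor exactly. From the short-time case, finite $T$ follows by iterating unitarity through the semigroup property.
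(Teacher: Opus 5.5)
\textbf{Verdict: genuine gap.} Your Steps 1--3 are the paper's own proof. The paper also derives $K^*(b,T|a)=K(a,-T|b)$, then writes $U(T)^\dagger U(T)=U(-T)U(T)=U(0)=\mathbf{1}$ and invokes ``the semigroup property'' across $-T$ and $T$. You have correctly located the weak point, but your proposed repair cannot close it.

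Once $K(c,-T|b)$ is defined as $\int dA\,g(A;c,T|b)\,e^{-i\eta A}=K^*(b,T|c)$, the identity $\int db\,K(c,-T|b)K(b,T|a)=K(c,0|a)$ is literally $U^\dagger U=\mathbf{1}$, so invoking composition there is circular. At the level of $g$, with $g(A;b,-T|a)\equiv g(-A;a,T|b)$, it would require
$\int db\int dA'\,g(A';b,T|a)\,g(A'-A;b,T|c)=\delta(A)\,\delta(c-a)$.
This fails as soon as $g$ has any spread in $A$: the left side is a nonnegative correlation whose $A$-variance is the sum of the two spreads.

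In fact the mixed-sign composition, and with it the conclusion, does not follow from time reversal of $g$ plus the axioms. Take a discrete $\mathcal Q$ (allowed by Axiom~\ref{ax:config}) and $g(A;b,T|a)=\delta_{ab}\,(2\pi\sigma^2T)^{-1/2}e^{-(A-\mu T)^2/2\sigma^2T}$. It satisfies every part of Axiom~\ref{ax:density}, including time reversal, and has exactly the marginal of Proposition~\ref{prop:gaussian_short_time}. Yet $K(b,T|a)=\delta_{ab}\,e^{i\eta\mu T-\eta^2\sigma^2T/2}$, so $U^\dagger U=e^{-\eta^2\sigma^2T}\mathbf{1}$. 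The spread of $g$ damps $K$; this is the factor $e^{-\sigma^2\Delta t/2\hbar^2}$ in \eqref{eq:K_intermediate} that the paper absorbs into $C(\Delta t)$. Unitarity therefore has to be supplied by the explicit kernel \eqref{eq:K_shorttime_final} and its normalization, not by the symmetry of $g$.

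Your Step 3 has a related flaw. For the free kernel, $|K(b,T|a)|=\sqrt{m/2\pi\hbar T}$ for every $b$, while $|K|\le\int dA\,g$, so $g$ does not concentrate at $b=a$. The limit $K(\cdot,0^+|a)\to\delta$ is a Fresnel-oscillation effect, not concentration of $g$.

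Your fallback is the right route, and it genuinely differs from the paper's: compute $\int db\,K^*(b|c)K(b|a)$ directly on the short-time kernel, then compose. As written, however, it is exact only when $V$ is at most linear. With the midpoint potential in \eqref{eq:K_shorttime_final}, the integrand carries the extra factor $\exp\{-i\Delta t\,[V(\tfrac{a+b}{2})-V(\tfrac{c+b}{2})]/\hbar\}$, which is not linear in $b$ in general. Even for $V=\tfrac12 m\omega^2x^2$, where it is linear, it changes the coefficient of $b$ and gives $U_{\Delta t}^\dagger U_{\Delta t}=(1+\omega^2\Delta t^2/4)^{-1}\mathbf{1}$.

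So you only get unitarity up to $O(\Delta t^2)$ per step, and ``iterating through the semigroup property'' must become a Trotter-type limit. Over $T/\Delta t$ steps the defect is $O(\Delta t)$, which vanishes as $\Delta t\to0$. You must also control $UU^\dagger$, not just $U^\dagger U$, since a strong limit of isometries need not be unitary.

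A cleaner fix is the symmetric split $e^{-iV(b)\Delta t/2\hbar}K_0(b|a)\,e^{-iV(a)\Delta t/2\hbar}$, with $K_0$ the free kernel. Your computation shows it is exactly unitary, because the $V(b)$ phases cancel inside $\int db$. It differs from \eqref{eq:K_shorttime_final} by a phase of order $V''(b-a)^2\Delta t/\hbar\sim\Delta t^2$, which is irrelevant to the continuum limit.
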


\begin{proof}
We need to show that $\langle U(T)\phi | U(T)\psi \rangle = \langle \phi | \psi \rangle$ for all $\phi, \psi$ and all $T \geq 0$.

\textbf{Step 1: Relating $K^*$ to backward evolution.}
From time-reversal symmetry~\eqref{eq:time_reversal}:
\begin{align}
K^*(b,T|a) &= \left[\int dA\, g(A;b,T|a)\, e^{iA/\hbar}\right]^* \nonumber\\
&= \int dA\, g(A;b,T|a)\, e^{-iA/\hbar} \nonumber\\
&= \int dA\, g(A;a,T|b)\, e^{-iA/\hbar} \quad \text{[by~\eqref{eq:time_reversal}]}.
\label{eq:K_conjugate}
\end{align}

For systems with real Lagrangian $L(x, \dot{x}) = T(\dot{x}) - V(x)$, where the kinetic term $T$ is even in velocity, the action of a time-reversed path $\bar{\gamma}(t) \equiv \gamma(T - t)$ equals that of the original path: $S[\bar{\gamma}] = S[\gamma]$. This is because the kinetic energy $T(\dot{x})$ is invariant under $\dot{x} \to -\dot{x}$, so $L$ evaluated along the reversed path at each instant equals $L$ evaluated along the original path.

We now establish $K^{*}(b, T | a) = K(a, -T | b)$ directly from the short-time propagator structure derived in Section~VIII. For the short-time kernel~\ref{eq:K_shorttime_final}:
\begin{align}
  K(x_b, \Delta t \,|\, x_a) &= \left(\frac{m}{2\pi i\hbar\,\Delta t}\right)^{\!d/2}\nonumber\\&\times \exp\!\left[\frac{i}{\hbar} \left(\frac{m(x_b - x_a)^{2}}{2\,\Delta t}  V(\bar{x})\,\Delta t\right)\right],
\end{align}
where $\bar{x} = (x_a + x_b)/2$ is the midpoint.

Taking the complex conjugate:
\begin{align}
  K^{*}(x_b, \Delta t \,|\, x_a)
  &= \left(\frac{m}{2\pi(-i)\hbar\,\Delta t}\right)^{\!d/2}  \notag\\
  &\quad\times
    \exp\!\left[-\frac{i}{\hbar}
    \left(\frac{m(x_b - x_a)^{2}}{2\,\Delta t}
    - V(\bar{x})\,\Delta t\right)\right].
  \label{eq:Kstar}
\end{align}

Since $(x_b - x_a)^{2} = (x_a - x_b)^{2}$ and the midpoint $\bar{x} = (x_a + x_b)/2$ is symmetric under exchange, the classical short-time action satisfies
\begin{equation}
  S_{\mathrm{cl}}(x_b, \Delta t \,|\, x_a)
  = S_{\mathrm{cl}}(x_a, \Delta t \,|\, x_b)\,.
\end{equation}
Therefore the phase factor obeys
\begin{equation}
  \exp\!\bigl[-i\,S_{\mathrm{cl}}(x_b, \Delta t \,|\, x_a)/\hbar\bigr]
  = \exp\!\bigl[-i\,S_{\mathrm{cl}}(x_a, \Delta t \,|\, x_b)/\hbar\bigr].
\end{equation}

Comparing with the kernel for the reversed transition,
\begin{align}
  K(x_a, \Delta t \,|\, x_b)
  &= \left(\frac{m}{2\pi i\hbar\,\Delta t}\right)^{\!d/2}  \notag\\
  &\quad\times
    \exp\!\left[\frac{i}{\hbar}
    \left(\frac{m(x_a - x_b)^{2}}{2\,\Delta t}
    - V(\bar{x})\,\Delta t\right)\right],
\end{align}
we verify that
\begin{equation}
  K^{*}(x_b, \Delta t \,|\, x_a)
  = K(x_a, -\Delta t \,|\, x_b).
  \label{eq:shorttime_relation}
\end{equation}
The finite-time relation
\begin{equation}
  K^{*}(b, T \,|\, a) = K(a, -T \,|\, b)
\end{equation}
then follows by induction via the composition law. If the relation holds for time intervals $T_1$ and $T_2$, it holds for $T_1 + T_2$:
\begin{align}
  K^{*}(b,\, T_1 {+} T_2 \,|\, a)
  &= \biggl[\int_{\mathcal{Q}} \! dc \;
     K(b, T_2 \,|\, c)\, K(c, T_1 \,|\, a)\biggr]^{*}
     \notag\\[4pt]
  &= \int_{\mathcal{Q}} \! dc \;
     K^{*}(b, T_2 \,|\, c)\, K^{*}(c, T_1 \,|\, a)
     \notag\\[4pt]
  &= \int_{\mathcal{Q}} \! dc \;
     K(c, -T_2 \,|\, b)\, K(a, -T_1 \,|\, c)
     \notag\\[4pt]
  &= K(a,\, -(T_1 {+} T_2) \,|\, b)\,,
  \label{eq:induction}
\end{align}
where the second line uses linearity of integration and complex conjugation, the third line applies the induction hypothesis to each factor, and the fourth line invokes the composition law~ref{eq:semigroup} for the reversed transition.

This relation identifies $K(a,T|b)$ as the kernel of backward evolution over time $T$. %This key relation states that the adjoint of forward evolution from $a$ to $b$ equals forward evolution from $b$ to $a$.

\textbf{Step 2: Group extension and abstract argument.}
Time-reversal symmetry allows the extension of the forward semigroup $U(T)$, $T\ge 0$,
to a one-parameter group $U(t)$, $t\in\mathbb{R}$, with $U(-T)=U(T)^\dagger$.
This immediately implies
\[
U(T)^\dagger U(T) = U(-T)U(T) = U(0) = \mathbf{1}.
\]

For completeness, we now provide an explicit kernel-based verification of unitarity.
\textbf{Step 3: Computing $U^\dagger U$.}
The action of the adjoint operator is defined by the conjugate kernel with swapped indices.
\begin{align}
(U(T)^\dagger U(T) \psi)(a) &= \int_{\mathcal{Q}} db\, K^*(b, T \,|\, a)\nonumber\\\times & \left[ \int_{\mathcal{Q}} dc\, K(b, T \,|\, c)\, \psi(c) \right] \nonumber\\
&= \int_{\mathcal{Q}} dc\, \psi(c) \int_{\mathcal{Q}} db\, K^*(b, T \,|\, a)\nonumber\\&\times K(b, T \,|\, c) \nonumber\\
&= \int_{\mathcal{Q}} dc\, \psi(c) \int_{\mathcal{Q}} db\, K(a, -T \,|\, b)\, K(b, T \,|\, c)\nonumber \\& \quad \text{[using \eqref{eq:induction}]}\nonumber\\
&= \int_{\mathcal{Q}} dc\, \psi(c)\, K(a, 0 \,|\, c) \nonumber \\&\quad \text{[by semigroup property]}\nonumber\\
&= \int_{\mathcal{Q}} dc\, \psi(c)\, \delta(a-c) \nonumber\\
&= \psi(a).
\label{eq:unitarity_calc}
\end{align}

\textbf{Step 4: Physical interpretation and resolution.}
The composition $U(T)^\dagger U(T)$ corresponds to evolution forward by $T$ followed by evolution backward by $T$, which should return to the initial state. Mathematically, $K(a,T|b)$ with time-reversal represents backward evolution, so $U(T)^\dagger = U(-T)$ in the sense of reversed dynamics.This provides an intuitive interpretation of the abstract group-theoretic argument given in Step 2.

Equivalently, unitarity can be expressed as the completeness relation
\begin{equation}
\int db\, K^*(c,T|b)\, K(b,T|a) = \delta(c-a),
\label{eq:completeness}
\end{equation}
which follows from the orthogonality of phase factors $e^{iA/\hbar}$ combined with
time-reversal symmetry. Substituting into Step 2:
\begin{equation}
\langle U\phi | U\psi \rangle = \int da\, dc\, \phi^*(c)\psi(a)\, \delta(c-a) = \langle \phi | \psi \rangle.
\end{equation}
This proves $U^\dagger U = \mathbf{1}$.
\end{proof}

\subsection{Stone's theorem and the Hamiltonian}

As shown in the previous sections, reversibility and composition extend the propagator to a strongly continuous one-parameter unitary group on $L^2(\mathcal{Q})$.

\begin{theorem}[Existence of Hamiltonian]
\label{th:Stone}
If the propagator $K(b,t|a,0)$ defines a strongly continuous one-parameter unitary
group on $L^2(\mathcal{Q})$, then by Stone's theorem there exists a self-adjoint
operator $\hat{H}$ such that:
\begin{equation}
U(t) = e^{-i\hat{H}t/\hbar}.
\label{eq:Stone}
\end{equation}
\end{theorem}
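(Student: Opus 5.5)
The statement is essentially a direct application of Stone's theorem, so the plan is to check its hypotheses in the form Stone requires and then read off $\hat H$. Write $U(t)$ for the operator with kernel $K(b,t|a,0)$ on $L^2(\mathcal{Q})$, extended to $t<0$ by $U(-T)=U(T)^\dagger$ as in Step~2 of the proof of Theorem~\ref{th:unitarity}. I would first collect the group axioms in operator language. The semigroup property (Theorem~\ref{th:semigroup_app}) gives $U(t+s)=U(t)U(s)$ for $t,s\ge 0$. The completeness relation \eqref{eq:completeness} together with $K^*(b,T|a)=K(a,-T|b)$ from \eqref{eq:induction} gives $U(-T)U(T)=\mathbf{1}$. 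The same argument applied with the roles of forward and backward evolution swapped gives $U(T)U(-T)=\mathbf{1}$. Combining these yields the group law for all real $t,s$, with $U(0)=\mathbf{1}$ coming from $K(a,0|c)=\delta(a-c)$. Each $U(t)$ is then an isometry with a two-sided inverse, hence unitary.

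The second step is strong continuity, which is an explicit hypothesis of the statement. I would reduce it to weak continuity at $t=0$ on a dense set. For unitary groups, weak continuity implies strong continuity, because $\|U(t)\psi-\psi\|^2=2\|\psi\|^2-2\,\mathrm{Re}\langle\psi|U(t)\psi\rangle$. Group continuity at the origin then propagates to all $t$. On Schwartz functions, $\langle\phi|U(t)\psi\rangle\to\langle\phi|\psi\rangle$ as $t\to 0$ would follow from the short-time kernel \eqref{eq:K_shorttime_final}, which acts as an oscillatory Gaussian approximate identity, together with $\tilde g_T\Rightarrow\delta$ from Proposition~\ref{prop:gaussian_short_time}. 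Uniform boundedness ($\|U(t)\|=1$) then extends the convergence from the dense set to all of $L^2$.

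With these two ingredients, Stone's theorem produces a unique self-adjoint $B$ with $U(t)=e^{itB}$ and $D(B)=\{\psi:\lim_{t\to0}(U(t)\psi-\psi)/t\text{ exists}\}$. Setting $\hat H=-\hbar B$ gives \eqref{eq:Stone}. Self-adjointness is preserved under multiplication by the real scalar $-\hbar$, with the same domain.

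The main obstacle I expect is the rigorous form of step one and the continuity check, not Stone's theorem itself. The kernel is only a distribution, so $K(a,0|c)=\delta(a-c)$ and the completeness relation must be read as limits of operators on a dense core rather than pointwise identities. I would handle this by defining $U(t)$ first on Schwartz functions via the oscillatory integral, proving isometry there, and extending by density. If that route stalls, I would fall back on treating group structure plus strong continuity as hypotheses, exactly as the statement phrases them, so that the theorem reduces to a clean citation of Stone's theorem.
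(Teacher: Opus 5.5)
Your proposal is correct, and its core is the paper's own argument: the statement assumes the strongly continuous unitary group, and the paper, like your final step, simply invokes Stone's theorem and rescales the generator as $\hat H=-\hbar B$. Your preliminary verification of the group law and strong continuity is not needed for the statement as posed; moreover, the appeal to $\tilde g_T\Rightarrow\delta$ would not by itself control the configuration-space kernel, since it concerns only the marginal action density. Your stated fallback of taking these properties as hypotheses is therefore exactly what the paper does.
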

The operator $\hat{H}$ is identified as the Hamiltonian governing time evolution.

\subsection{Technical remarks}

The above proof relies on the assumption that the Lagrangian is real and that the kinetic term $T(\dot{x})$ is even under velocity reversal. This condition is satisfied by all standard non-relativistic systems without velocity-dependent interactions, such as magnetic fields.

For systems with velocity-dependent potentials (e.g., charged particles coupled to a vector potential), the action of a time-reversed path differs from that of the original path by a boundary term associated with the gauge potential. As a consequence, the simple relation between the propagator and its complex conjugate no longer holds in its naive form.

Importantly, this does \emph{not} signal a breakdown of unitarity. Rather, it indicates that the proof must be formulated in a gauge-covariant manner, using the appropriately transformed propagator under time reversal. When this is done, unitary time evolution is fully preserved.

This situation is entirely analogous to the standard treatment of magnetic fields in the path-integral formulation of quantum mechanics, where gauge-covariant kernels ensure both gauge invariance and unitarity despite the presence of velocity-dependent interactions (e.g., as in the Feynman path-integral treatment of charged particles in electromagnetic fields).

%------------------------------------------------------------------------------------------------------
\section{Schrödinger Equation: Detailed Derivation}
\label{app:schrodinger_derivation}

Unitarity is not assumed but proven in Appendix~\ref{app:hilbert_structure} from time-reversal symmetry and the composition law.
The short-time kernel used in this derivation is \textbf{not assumed} from known quantum mechanics. It is the \emph{result} of the derivations in Section~\ref{sec:shorttime_g}:

\begin{enumerate}[label = \arabic{enumi}]
    \item \textbf{Gaussian form:} Derived from the Lévy--Khintchine theorem applied to the composition law (Theorem~\ref{th:gaussian_unique}).
    \item \textbf{Width $\sigma\sqrt{\Delta t} \sim \hbar$:} Derived from the Cramér--Rao bound (Proposition~\ref{prop:sigma_from_CR}).
    \item \textbf{Normalization factor $(m/2\pi i\hbar\Delta t)^{1/2}$:} Derived from composition law consistency (Proposition~\ref{prop:normalization}).
    \item \textbf{Kinetic term $\frac{m(x-x')^2}{2\Delta t}$:} Derived from Galilean invariance and time-reversal symmetry (Appendix~\ref{app:galilean_lagrangian}).
    \item \textbf{Potential term $V(x)$:} System specification (input), analogous to specifying the Hamiltonian in standard quantum mechanics.
\end{enumerate}

The kernel follows uniquely from the ITA axioms combined with composition, inference constraints, and spacetime symmetries.

\medskip
\subsection{Derivation of the Schrödinger equation from the derived kernel:}

Starting from the composition law:
\[
\psi(x,t+\Delta t) = \int dx'\, K(x,\Delta t|x')\, \psi(x',t)
\]
with the \emph{derived} short-time kernel:
\begin{eqnarray*}
K(x,\Delta t|x') &=& \left(\frac{m}{2\pi i\hbar\Delta t}\right)^{1/2} \exp\left[\frac{im(x-x')^2}{2\hbar\Delta t}\right.\\ &&\left. - \frac{i}{\hbar}V\left(\frac{x+x'}{2}\right)\Delta t\right].
\end{eqnarray*}

Substitute $x' = x - y$:
\begin{eqnarray*}
\psi(x,t+\Delta t) &=& \left(\frac{m}{2\pi i\hbar\Delta t}\right)^{1/2} \int dy\, e^{imy^2/(2\hbar\Delta t)}\\&& e^{-iV(x-y/2)\Delta t/\hbar}\, \psi(x-y,t).
\end{eqnarray*}

Expand for small $y$ and $\Delta t$:
\begin{align}
\psi(x-y,t) &\approx \psi - y\psi' + \frac{y^2}{2}\psi'', \\
V(x-y/2) &\approx V(x).
\end{align}
where higher-order terms in $y$ contribute only at $O(\Delta t^{3/2})$ and vanish in the continuum limit.
The key Gaussian integrals are:
\begin{align}
\left(\frac{m}{2\pi i\hbar\Delta t}\right)^{1/2} \int dy\, e^{imy^2/(2\hbar\Delta t)} &= 1, \label{eq:gauss0}\\
\left(\frac{m}{2\pi i\hbar\Delta t}\right)^{1/2} \int dy\, y^2 e^{imy^2/(2\hbar\Delta t)} &= \frac{i\hbar\Delta t}{m}. \label{eq:gauss2}
\end{align}

Collecting terms:
\[
\psi(x,t+\Delta t) = \psi(x,t) + \frac{i\hbar\Delta t}{2m}\psi''(x,t) - \frac{i\Delta t}{\hbar}V(x)\psi(x,t) + O(\Delta t^2).
\]

Rearranging and taking $\Delta t \to 0$:
\[
i\hbar\frac{\partial\psi}{\partial t} = -\frac{\hbar^2}{2m}\frac{\partial^2\psi}{\partial x^2} + V(x)\psi.
\]
At no stage is a quantum postulate or operator replacement rule assumed; the Schrödinger equation emerges solely as the continuum limit of the derived propagator.

\subsection{Hamiltonian via Legendre transform}

The operator $\hat{H} = -\frac{\hbar^2}{2m}\nabla^2 + V(x)$ appearing in the derived Schrödinger equation can be identified \emph{a posteriori} as the quantization of the classical Hamiltonian obtained via the Legendre transform of the emergent Lagrangian.

Starting from the emergent Lagrangian (Section~\ref{sec:lagrangian_emergence}):
\[
L(x,\dot{x}) = \frac{m}{2}\dot{x}^2 - V(x),
\]
the canonical momentum is:
\[
p = \frac{\partial L}{\partial\dot{x}} = m\dot{x}.
\]

The classical Hamiltonian is obtained by Legendre transform:
\[
H(x,p) = p\dot{x} - L = p \cdot \frac{p}{m} - \left(\frac{p^2}{2m} - V\right)
= \frac{p^2}{2m} + V(x).
\]
The Legendre transform plays no dynamical role in the derivation; it merely identifies the generator obtained from Stone's theorem with the classical Hamiltonian associated with the emergent Lagrangian. This identification is consistent with Stone's theorem (Appendix~\ref{app:hilbert_structure}), which guarantees that the generator of unitary time translations is self-adjoint. Promoting to operators with the standard position representation
$\hat{x}\psi(x) = x\psi(x)$ and momentum as the generator of translations
$\hat{p} = -i\hbar\nabla$:
\[
\hat{H} = \frac{\hat{p}^2}{2m} + V(\hat{x}) = -\frac{\hbar^2}{2m}\nabla^2 + V(x).
\]

This confirms that the operator in the derived Schrödinger equation is exactly the quantized Hamiltonian. 

\subsection{Canonical commutation relations}

The canonical commutation relation $[\hat{x}, \hat{p}] = i\hbar$ follows directly
from the position representation.

With $\hat{x}\psi(x) = x\psi(x)$ and $\hat{p}\psi(x) = -i\hbar\frac{d\psi}{dx}$,
we compute:
\begin{align}
[\hat{x}, \hat{p}]\psi &= \hat{x}\hat{p}\psi - \hat{p}\hat{x}\psi \nonumber\\
&= x\left(-i\hbar\frac{d\psi}{dx}\right) - \left(-i\hbar\right)\frac{d}{dx}(x\psi) \nonumber\\
&= -i\hbar x\frac{d\psi}{dx} + i\hbar\left(\psi + x\frac{d\psi}{dx}\right) \nonumber\\
&= i\hbar\psi.
\end{align}

Since this holds for all $\psi$, we have $[\hat{x}, \hat{p}] = i\hbar$.

This relation is not postulated but emerges from the representation of momentum as the generator of spatial translations, guaranteed by the strong continuity and unitarity of the translation group (Stone's theorem).

%----------------------------------------------------
\section{Worked Example: Free Particle}
\label{app:free_particle}

This appendix presents the complete ITA construction for the simplest nontrivial system: a free particle in one spatial dimension. We explicitly construct the density of action states $g(A;b|a)$, derive the propagator $K(b|a)$ via Fourier duality in action space, and verify full consistency with the standard quantum mechanical result.

\subsection{System specification}

For a free particle of mass $m$, the Lagrangian is
\begin{equation}
L(x,\dot{x}) = \frac{1}{2}m\dot{x}^2,
\label{eq:L_free}
\end{equation}
with no potential term ($V=0$). The dynamics is entirely kinetic.

\subsection{Construction of the action-state density}

According to the general short-time structure derived in Sec.~\ref{sec:shorttime_g}, the density of action states is Gaussian around the classical action.

\paragraph{Classical action.}
For a trajectory connecting $x_a$ to $x_b$ in a short time $\Delta t$, the velocity is $v=(x_b-x_a)/\Delta t$, and the classical action reads
\begin{equation}
S_{\mathrm{cl}}(x_b,\Delta t|x_a)
= L\,\Delta t
= \frac{m(x_b-x_a)^2}{2\Delta t}.
\label{eq:S_cl_free}
\end{equation}

\paragraph{Density of action states.}
The action-state density is taken as
\begin{equation}
g(A;x_b,\Delta t|x_a)
=
\frac{C(\Delta t)}{\sqrt{2\pi\sigma^2\Delta t}}
\exp\!\left[
-\frac{\left(A-S_{\mathrm{cl}}\right)^2}{2\sigma^2\Delta t}
\right],
\label{eq:g_free}
\end{equation}
where:
\begin{itemize}
\item $g$ is a \emph{density of action states}, not a probability density; 
\item $\sigma^2\Delta t$ is the variance of action increments; 
\item Cramér--Rao consistency implies $\sigma^2\Delta t \sim \hbar^2$ (up to a factor of order unity);
\item $C(\Delta t)$ is a normalization factor to be fixed by the composition law.
\end{itemize}

\subsection{Derivation of the propagator}

The propagator is defined as the Fourier transform of $g$ with respect to action:
\begin{equation}
K(x_b,\Delta t|x_a)
=
\int_{-\infty}^{\infty} dA\,
g(A;x_b,\Delta t|x_a)\,e^{iA/\hbar}.
\label{eq:K_from_g_free}
\end{equation}

Substituting Eq.~\eqref{eq:g_free}:
\begin{align}
K
&=
\frac{C(\Delta t)}{\sqrt{2\pi\sigma^2\Delta t}}
\int dA\,
\exp\!\left[
-\frac{(A-S_{\mathrm{cl}})^2}{2\sigma^2\Delta t}
+ \frac{iA}{\hbar}
\right].
\end{align}

Completing the square in the exponent,
\begin{align}
-\frac{(A-S_{\mathrm{cl}})^2}{2\sigma^2\Delta t}
+ \frac{iA}{\hbar}
&=
-\frac{1}{2\sigma^2\Delta t}
\left[
A-S_{\mathrm{cl}}-\frac{i\sigma^2\Delta t}{\hbar}
\right]^2\nonumber\\&
+ \frac{iS_{\mathrm{cl}}}{\hbar}
- \frac{\sigma^2\Delta t}{2\hbar^2}.
\end{align}

The Gaussian integral evaluates trivially, yielding
\begin{equation}
K(x_b,\Delta t|x_a)
=
C(\Delta t)\,
\exp\!\left[
\frac{iS_{\mathrm{cl}}}{\hbar}
- \frac{\sigma^2\Delta t}{2\hbar^2}
\right].
\label{eq:K_intermediate}
\end{equation}

\paragraph{Fixing the normalization.}
Since $\sigma^2\Delta t \sim \hbar^2$, the real exponential factor in Eq.~\eqref{eq:K_intermediate} is a short-time renormalization constant and can be absorbed into $C(\Delta t)$. The remaining time dependence of $C(\Delta t)$ is \emph{uniquely} fixed by the composition law (Proposition~\ref{prop:normalization}), yielding
\begin{equation}
C(\Delta t)
=
\left(\frac{m}{2\pi i\hbar\Delta t}\right)^{1/2}.
\end{equation}

Therefore, the free-particle propagator is
\begin{equation}
K(x_b,\Delta t|x_a)
=
\left(\frac{m}{2\pi i\hbar\Delta t}\right)^{1/2}
\exp\!\left[
\frac{im(x_b-x_a)^2}{2\hbar\Delta t}
\right].
\label{eq:K_free_final}
\end{equation}

\subsection{Verification and consistency}

Equation~\eqref{eq:K_free_final} coincides exactly with the standard free-particle propagator obtained in quantum mechanics via the path integral or by solving the Schrödinger equation.

\paragraph{Consistency checks.}
\begin{enumerate}
\item \textbf{Composition law:}
\[
\int dx_b\,
K(x_c,t_2|x_b)\,K(x_b,t_1|x_a)
=
K(x_c,t_1+t_2|x_a).
\]
Gaussian convolution reproduces the correct kernel with additive time. \checkmark

\item \textbf{Unitarity (kernel form):}
\[
\int dx_b\,
K^*(x_b,t|x_a)\,K(x_b,t|x_c)
=
\delta(x_a-x_c).
\]
Although $K(\cdot,t|x_a)$ is not square-integrable, it defines a unitary evolution
on $L^2$ wave functions. \checkmark

\item \textbf{Classical limit:}
As $\hbar\to 0$, the phase becomes rapidly oscillatory and stationary phase selects the classical straight-line trajectory $x_b-x_a=v\Delta t$. \checkmark
\end{enumerate}

This explicit construction illustrates, in the simplest setting, how the full quantum propagator emerges from action-based inference without postulating quantization rules.

%----------------------------------------------------
% Bibliography using BibTeX
\bibliography{ITA}

\end{document}